\newif\ifdiff
\newcommand{\diff}[1]{
  \ifdiff
    {\color{red}#1}
  \else
    #1
  \fi
}
\theoremstyle{acmplain}
\newtheorem{theorem}{Theorem}
\newtheorem{lemma}[theorem]{Lemma}
\newtheorem{definition}[theorem]{Definition}
\newtheorem{remark}[theorem]{Remark}
\newcommand{\set}[1]{\left\{#1\right\}}
\newcommand{\bits}{\set{0,1}}
\renewcommand{\tr}[1]{\mathrm{Tr}\!\left[ #1 \right]}
\newcommand{\ptr}[2]{\mathrm{Tr}_{#1}\!\left[ #2 \right]}
\newcommand{\pr}[1]{\mathrm{Pr}\!\left[ #1 \right]}
\renewcommand{\proj}[1]{\ketbra{#1}{#1}}
\newcommand{\bO}{\mathrm{O}}
\newcommand{\bbN}{\mathbb{N}}
\newcommand{\1}{\mathbbm{1}}
\newcommand{\ext}{\mathrm{ext}}
\newcommand{\Ext}{\mathrm{Ext}}
\newcommand{\cB}{\mathcal{B}}
\newcommand{\cC}{\mathcal{C}}
\newcommand{\cD}{\mathcal{D}}
\newcommand{\cE}{\mathcal{E}}
\newcommand{\cF}{\mathcal{F}}
\newcommand{\cH}{\mathcal{H}}
\newcommand{\cP}{\mathcal{P}}
\newcommand{\cS}{\mathcal{S}}
\newcommand{\hunps}[1][\default]{\mathrm{H}_{\mathrm{unp}\left({#1}\right)}}
\newcommand{\hmaxcompc}[1][\default]{\overset{c}{\mathrm{H}}_{\max}^{#1}}
\newcommand{\hmincompc}[1][\default]{\overset{c}{\mathrm{H}}_{\min}^{#1}}
\newcommand{\bP}{\Delta_P}
\newcommand{\dist}{\mathrm{d}}
\newcommand{\hill}{\textrm{HILL}}
\newcommand{\IP}{\mathrm{IP}}
\newcommand{\Prg}{\mathrm{PRG}}
\newcommand{\prg}{\mathrm{prg}}
\begin{document}

\title{Fully Quantum Computational Entropies}

\author{Noam Avidan}
\authornote{These authors contributed equally to this work.}
\affiliation{
  \institution{The Center for Quantum Science and Technology, Faculty of Mathematics and Computer Science,  
  Weizmann Institute of Science}
  \city{Rehovot}
  \country{Israel}}
\email{Noam.cohen-avidan@weizmann.ac.il}

\author{Thomas A. Hahn}
\authornotemark[1]
\affiliation{
  \institution{The Center for Quantum Science and Technology,
  Faculty of Physics of Complex Systems,  
  Weizmann Institute of Science}
  \city{Rehovot}
  \country{Israel}}

\author{Joseph M. Renes}
\affiliation{
  \institution{Institute for Theoretical Physics, ETH}
  \city{Zurich}
  \country{Switzerland}}

\author{Rotem Arnon}
\affiliation{
  \institution{The Center for Quantum Science and Technology,
  Faculty of Physics of Complex Systems,  
  Weizmann Institute of Science}
  \city{Rehovot}
  \country{Israel}}

\begin{abstract}

    Quantum information theory, particularly its entropic formulations, has made remarkable strides in characterizing quantum systems and tasks. However, a critical dimension remains underexplored: computational efficiency.
    While classical computational entropies integrate complexity and feasibility into information measures, analogous concepts have yet to be rigorously developed in the quantum setting.
    In this work, we lay the basis for a new \emph{quantum computational information theory}.
    Such a theory will allow studying efficient -- thus relevant in practice -- manipulation of quantum information.
    We introduce two entropies: quantum computational min- and max-entropies (along with their smooth variants).
    Our quantum computational min-entropy is both the fully quantum counterpart of the classical unpredictability entropy, as well as the computational parallel to the quantum min-entropy.  We establish a series of essential properties for this new entropy, including data processing and chain rules.
    The quantum computational max-entropy is defined via a duality relation and gains operational meaning through an alternative formulation that we
    derive. Notably, it captures the efficiency of entanglement distillation with the environment, restricted to local quantum circuits of bounded size.
    With the introduction of our computational entropies and their study, this work marks a critical step toward a quantum information theory that incorporates computational elements.
\end{abstract}

\maketitle
\pagestyle{plain}

\newpage

\section{Introduction and Motivation}
Quantum information theory (QIT) provides the mathematical setup that describes how information, encoded in physical quantum systems, can be stored, manipulated, and retrieved~\cite{Tomamichel_2016,khatri2024principlesquantumcommunicationtheory}.
As such, it serves as the underlying theory upon which studies in quantum computation, communication, and cryptography are performed and has a crucial impact on the development of quantum technologies~\cite{Tomamichel_2016, khatri2024principlesquantumcommunicationtheory,Dupuis_2014_one_shot_decup,DFRR20_EAT,renner2006securityquantumkeydistribution,Horodecki_2009_entanglement_measures, Kimble_2008}.

\subsection*{The need for a new theory: computational quantum information theory}
In this work, we advocate towards the need to extend QIT in order to include computational aspects. So far, the study of QIT has focused on understanding the nature of quantum information without deeply considering computational limitations; see, e.g.~\cite{Tomamichel_2016,khatri2024principlesquantumcommunicationtheory}.
Alas, in the real world, any manipulation of quantum information is and will be done with some constrained computational abilities. For example, information will be manipulated using a quantum computer that can run for only a fixed, finite amount of time. Studying efficient -- thus relevant in practice -- manipulation of quantum information is therefore of great importance. We term the new theory we advocate for \emph{computational QIT}.
We place in this work a major cornerstone in the pursuit of this novel theory, as we elaborate later on.
For now, let us first start by examining how different a computational QIT can be from the standard, well-known QIT.

A strikingly simple example is the Schrödinger-HJW theorem~\cite{Schr_pur_equive_1936,HJW_purificatioin_equive_199314} or purification theorem: Consider a mixed quantum state $\rho_A$. There are various different \emph{pure} states $\psi_{AB}$ that purify $\rho_A$ (i.e., states for which $\ptr{B}{\psi_{AB}}=\rho_A$). The purification theorem states that all purifications $\psi_{AB}$ are equivalent up to local isometries on the purifying register $U_B$ on $\psi_{AB}$.
This equivalence, mostly taken for granted,
has many implications. For example, applying local isometries cannot increase the amount of entanglement between $A$ and $B$, or the information (measured by entropies) that one system has on the other~\cite{plenio2006introductionentanglementmeasures, Horodecki_2009_entanglement_measures}.

What happens, however, if we now allow ourselves to consider only computationally efficient operations?
Informally, when we say ``efficient operations'', we refer to all the operations that can be implemented in practice in a reasonable amount of time. One can formalize this in the language of quantum computation~\cite{nielsen2010quantumbook} (as we do later) or alternatively consider other notions of computational power, e.g., by allowing only noisy gates (NISQ era~\cite{preskill2018quantum}) or experimentally feasible operations.

To go back to our question-- the situation is quite complex, and the purification theorem does not seem to hold as is. It is believed that switching from one purification to another by finding the local transformation~$U_B$ is computationally hard in general. For example, the computational complexity of the Uhlmann transformation problem~\cite{uhlmann1976transition} has been investigated in~\cite{bostanci2023unitary}, with strong indication that approximating the unitary transformation that maximizes the fidelity between purifications of given states is computationally hard. 

The results presented in~\cite{bostanci2023unitary}, alongside works studying concepts such as pseudo-entanglement~\cite{gheorghiu2020estimating,aaronson2023quantumpseudoentanglement,arnonfriedman2023computationalentanglementtheory}, have far-reaching consequences for QIT. Basic mathematical tools and concepts like the purified distance~\cite{Tomamichel_2010_Duality}, entanglement measures~\cite{Horodecki_2009_entanglement_measures}, the operational meaning of entropies~\cite{KRS09OperationalMeaningEntropy}, etc. -- all need to be revisited to include computational aspects.

Since computational power is of relevance for any future real-world quantum technology, we are obligated to consider the ways in which quantum information can be manipulated efficiently. A \emph{computational QIT} must be developed.
This new theory will then allow us to consider tasks such as efficient entanglement distillation~\cite{arnonfriedman2023computationalentanglementtheory}, quantum cryptography with a computationally bounded adversary~\cite{alagic2016computational}, and more.
Additionally, such a theory should set the mathematical foundations and encompass topics that are of growing interest, such as pseudo-random quantum states~\cite{QPRSJKF18}, pseudo-random unitaries~\cite{bouland2023publickeypseudoentanglementhardnesslearning,FermiMaHsin24pseRandom_Unitaries,schuster2024random}, and computational entanglement~\cite{gheorghiu2020estimating,arnonfriedman2023computationalentanglementtheory, aaronson2023quantumpseudoentanglement,leone2025entanglementtheorylimitedcomputational}.

\subsection*{The need for new entropies: computational entropies}

Quantum entropies are the bread and butter of QIT~\cite{KRS09OperationalMeaningEntropy,RenerWolf04smoothreny}.
Substantial effort is continuously being directed towards broadening and improving the toolkit of mathematical properties and relations that hold for the various quantum entropies~\cite{DFRR20_EAT,MDSFT13_quantum_renyi_entropies_dualiteis,MFSR22_GEAT,Dupuis_2015_chain_rules_smooth,Tomamichel_2010_Duality,arnon2018practical,arqand2024generalizedrenyientropyaccumulation,arqand2025marginalconstrainedentropyaccumulationtheorem,beigi2013sandwiched_dualiteis,zhai2023boundssmoothminmaxentropy,MerkulovArnon25_EATPQC} and the study of quantum entropies is extensive and rapidly evolving.
Nevertheless, to date, the literature has largely left untouched an important research avenue, namely \emph{quantum computational entropies}.
In our goal of forming a computational QIT, we put the emphasis in this work on defining quantum computational entropies.

Computational entropies~\cite{barak2003computational,HLR07SepPseudoentropyfromCompressibility,arnon2025computational,CC17Computationalminentropy} quantify information in the same sense that the information-theoretic entropies do, but they incorporate a computational dimension, such as efficiency or complexity.
As a simple example, one can take the min-entropy. The min-entropy, $\mathrm{H}_{\min}(A)$, of a  
random variable $A$, describes the (in)ability to predict the value of $A$.
Here, when we write ``predict'', we implicitly mean that any strategy that allows one to guess the value of $A$ is valid, \emph{not} taking into account the computational complexity of the guessing strategy.
In a computational variant of the min-entropy, on the other hand, we may restrict ourselves only to, e.g., efficient strategies.\footnote{Again, to formally discuss ``efficiency'', one needs to first have some notion of computational power and complexity in mind. For example, later on, we will consider quantum circuits with a given number of gates from a fixed set of universal gates.}

In the domain of theoretical computer science, the examination of computational entropies has been pursued for several decades now, see e.g.~\cite{barak2003computational,HLR07SepPseudoentropyfromCompressibility,KPWW14CounterexampleHILLchainrule,fuller2012computational,reingold2008dense}. Beyond their inherent significance, the computational entropies underpin studies of pseudo-randomness, randomness extraction, and numerous cryptographic applications~\cite{DP08LeakageResilientStandard,HILL1999pseudorandom,S16_betterHILL_chain_rule,fuller2012computational}.
As for quantum computational entropies, a pioneering work~\cite{CC17Computationalminentropy} proposed various definitions for quantum computational min-entropy.\footnote{For both classical and quantum information, there are multiple directions in which one can extend the min-entropy to the computational case. Different definitions have different interpretations and, a priori, all may be of relevance.}
The work investigated the extent to which classical computational entropy properties and statements could be extended to these newly suggested quantum variants.
The introduced quantities, regrettably, failed to satisfy some critical expectations, most notably an entropic chain rule.
This raised questions regarding the suitability of the proposed definitions for quantum information.

Another related work is that of~\cite{Munson_2025,Yunger_Halpern_2022}, in which computational complexity is considered for quantum thermodynamics. For that purpose, the authors examined the task of computationally limited hypothesis testing (for quantum states). To go beyond the task of hypothesis testing, some variant of a computational relative entropy was defined. 

This entropy can be used to quantify the optimal computational data-compression rate, but does not provide an explicit method with which to (even approximately) achieve this rate. 
Similarly, the relevance of this entropy to quantum decoupling\footnote{A specific form of decoupling, where certain subsystems are traced out, is considered. This type of decoupling cannot be used to study entanglement distillation~\cite{Dupuis_2014_one_shot_decup,khatri2024principlesquantumcommunicationtheory}.} was argued on the basis of a conjectured chain rule. The conjectured chain rule was then connected to the assumption that ``efficiently preparable pseudomixed states'' do not exist (see~\cite{Munson_2025} for the formal statement), without giving arguments for why this assumption should hold. 

In a recent work~\cite{arnon2025computational}, two of the authors of the current paper introduced a new computational variant of the min-entropy, which was coined the ``quantum computational unpredictability entropy''. The entropy was defined for classical-quantum (cq-) states and it extends both the quantum min-entropy (for cq-states) as well as the classical unpredictability entropy~\cite{HLR07SepPseudoentropyfromCompressibility}, which is a computational entropy.
The quantum computational unpredictability entropy is distinct from the entropies presented in~\cite{CC17Computationalminentropy} and~\cite{Munson_2025}, and it satisfies both the fully quantum leakage chain rule that was posed as an open question in~\cite{CC17Computationalminentropy} as well as the chain rule conjectured in~\cite{Munson_2025}. 
The fact that the newly defined entropy ``behaves as anticipated'' from the point of view of quantum information theory indicates that it may serve as a good starting point for studying computational quantum entropies more generally.

Here, we significantly broaden the study of quantum computational entropies.
Firstly, we extend the definition of unpredictability entropy~\cite{arnon2025computational} to general and fully quantum states; this leads to what we appropriately term \emph{the computational min-entropy}. Then, motivated by the duality of entropies~\cite{Tomamichel_2010_Duality}, we define another new quantity --  \emph{the computational max-entropy}.
We also provide the entropies' smooth counterparts -- the quantum computational smooth min- and max-entropies.

The full definitions are given in Section~\ref{sec:comp_min_max_entropies}.
For now,  let us remark that the definitions are very delicate and carefully considered -- intriguing issues arise when examining computational entropies, and nothing should be taken for granted. As mentioned before, a core example is that not all purifications of a state should be treated equally when computational power is limited, as the parties cannot simply apply generic unitaries on their states. In other words, Uhlmann's theorem~\cite{uhlmann1976transition} -- 
a key tool for studying quantum entropies and their properties -- can no longer be used as is.
Moreover, it is no longer clear that the computational min- and max- entropies can be related to the entropies based on computational hypothesis testing~\cite{Munson_2025}. This is in contrast to the information theoretic case~\cite{dupuis2014generalized}, again due to the core computational aspect.

We further expand the novel toolkit of quantum computational entropies by proving essential properties, such as (efficient) data processing inequalities and leakage chain rules for the computational min-entropy, and give an operational meaning to the computational max-entropy (different from the one for the information-theoretic max-entropy; see below).

Overall, this work presents a major cornerstone to the development of a computational QIT by examining its most basic entropic quantities. This adds to our previous related works~\cite{arnonfriedman2023computationalentanglementtheory,arnon2025computational}.
Of course, our research reveals only the tip of the iceberg; we supply interesting open questions to continue developing a computational QIT in future works.

\section{Main Results and Technical Overview}
In this work, we initiate a systematic study of new computational entropies in the fully quantum setting. 
Our central objects are the quantum computational min- and max-entropy, which extend both classical computational entropies and the information-theoretic quantum entropies. Below, we provide a high-level map of the main conceptual steps and results.
\subsection*{Computational Min-Entropy}
We begin with the computational min-entropy. In the (fully quantum) information-theoretic setting, where all registers hold quantum information, the conditional min-entropy $\mathrm{H}_{\min}(A|B)_\rho$ can be expressed in terms of the maximum achievable overlap with a maximally entangled state (the singlet fraction) after applying arbitrary quantum operations on $B$, see e.g.~\cite{KRS09OperationalMeaningEntropy}.
\begin{definition}[Min-Entropy~\cite{KRS09OperationalMeaningEntropy}]\label{def:intro:min-entropy-math}
    Let $\rho_{AB} \in S_{\bullet}(AB)$ be a sub-normalized bipartite quantum state.\footnote{Sub-normalized quantum states are positive semi-definite operators that satisfy the trace inequality $\tr{\rho} \leq 1$.} The conditional min-entropy of $A$ given $B$ is defined as:
\begin{align*}
    \mathrm{H}_{\min}(A|B)_\rho \coloneq  -\log d_{A} \max_{\cE_{B\to A'} }  F\left( (\mathbb{I}_A \otimes \cE)(\rho_{AB}), \proj{\Phi_{AA'}} \right) \; ,
\end{align*}
where $\cE_{B\to A'}$ are quantum channels from $B$ to some other Hilbert space $A^\prime$ and $\proj{\Phi_{AA'}}$ is the (canonical) maximally entangled state on $\cH_{A} \otimes \cH_{A^\prime}$.
\end{definition}

This operational viewpoint naturally suggests how to incorporate computational limitations. Instead of optimizing over all quantum channels on $B$, we restrict attention to those that can be implemented by circuits of bounded size (See also~\cref{fig:intro:Es-circuit-min-entropy}). This leads to our definition of the computational min-entropy $\hmincompc[s](A|B)_\rho$, along with its smooth variant.

\begin{definition}[Quantum Computational Min-Entropy]\label{def:intro:nosmooth_quantum_comp_entropy}
    For any sub-normalized bipartite quantum state ${\rho_{AB}\in \cS_{\bullet}(AB)}$, and $s \in \mathbb{N}$, we say that
    \begin{equation*}
        \hmincompc[s](A|B)_{\rho} \coloneq -\log d_{A} \max_{\cE^s_{B\to A'} }  F\left( (\mathbb{I}_A \otimes \cE^s)(\rho_{AB}), \proj{\Phi_{AA'}} \right) \;,
    \end{equation*}
    where $\proj{\Phi_{AA'}}$ is the (canonical) maximally entangled state on $\cH_{A} \otimes \cH_{A^\prime}$ and the optimization is over the set of all quantum channels from $B$ to $A'$ that can be implemented by circuits of size at most $s$. 
\end{definition}

\begin{figure}[ht]
    \centering
    \begin{quantikz}[row sep=0.3cm]
        \lstick[wires=8]{$\rho_{AB}$} & \qw & \qw                                      & \qw \rstick[wires=4]{$A$}  \\
        & \qw & \qw                                      & \qw                        \\
        & \qw & \qw                                      & \qw                        \\
        & \qw & \qw                                      & \qw                        \\
        & \qw & \gate[wires=4]{\mathcal{E}^s_{B \to A'}} & \qw \rstick[wires=4]{$A'$} \\
        & \qw & \qw                                      & \qw                        \\
        & \qw & \qw                                      & \qw                        \\
        & \qw & \qw                                      & \qw
    \end{quantikz}
    \quad $\approx$ \quad $\proj{\Phi_{AA'}}$
    \caption{A circuit diagram for the operational meaning of the computational min entropy. A channel $\mathcal{E}^s$ (circuit of size~$s$) acting only on the $B$ register. The goal of the circuit is to maximize the fidelity of the resulting global state with the maximally entangled state $F(\rho_{AA'},\proj{\Phi_{AA'}})$.}
    \Description[Computational min entropy diagram]{A circuit diagram for the operational meaning of the computational min entropy. A channel $\mathcal{E}^s$ (circuit of size~$s$) acting only on the $B$ register. The goal of the circuit is to maximize the fidelity of the resulting global state with the maximally entangled state $F(\rho_{AA'},\proj{\Phi_{AA'}})$.}
    \label{fig:intro:Es-circuit-min-entropy}
\end{figure}

For cq-states, we show that this definition is equivalent to the \textit{quantum computational unpredictability entropy}, which was introduced in a prior work~\cite{arnon2025computational}. Notably, for cq-states, the min-entropy quantifies an adversary's optimal guessing strategy of the output stored in $A$, given that they hold $B$~\cite{KRS09OperationalMeaningEntropy}. The computational analog, where one instead optimizes over all efficient guessing strategies, is captured by the quantum computational unpredictability entropy~\cite{arnon2025computational}, and therefore also by our more general definition.

We highlight the following features of this definition.
\begin{enumerate}
    \item This definition holds for any initial state; we do not need to impose any conditions on the computational power needed to generate~$\rho_{AB}$. We only impose computational constraints at the level of the circuit~$\cE^s_{B\to A'}$.
    \item 
    Unlike in the information-theoretic setting (see e.g.~\cite{KRS09OperationalMeaningEntropy,Tomamichel_2016}), the choice of basis for the maximally entangled $\ket{\Phi_{AA'}}$ state is highly relevant. In other words, different choices of basis result in differences in the circuit size needed to achieve the same fidelity. We choose the computational basis for both $A$ and $A'$.
    \item 
    There is a natural way to define the non-conditional computational min-entropy. For any state $\rho_{A}$, we say that
    \begin{equation*}
        \hmincompc[s](A)_{\rho} \coloneq \hmincompc[s](A|B)_{\rho} \;,
    \end{equation*}
    where $\rho_{AB}= \rho_{A} \otimes \ketbra{0}{0}_{B}$ and $B$ is a one-dimensional Hilbert space.
     We highlight that if $A$ is a classical register, then this unconditional entropy is close to the information-theoretic min-entropy. To see this, first note that the circuit which is applied to B simply generates the most likely classical output of $A$. In our setting, this is feasible, as generating any single bit-string can be done via an efficient circuit.  However, this argument only holds if $A$ is classical; it will not carry over to the fully quantum or bipartite settings.
     
     In other words, we care about the gate complexity of the applied circuit; we do not consider the computational cost that is needed to first find the optimal circuit.  As we discuss further on in this section, it is (in part) this subtle difference that causes the computational min-entropy to differ from the HILL entropy~\cite{CC17Computationalminentropy}.
    
\end{enumerate}

In line with standard practice in quantum information theory, we also introduce a smooth variant of the computational min-entropy (and later on the max-entropy).
\begin{definition}[Smooth Quantum Computational Min-Entropy]\label{def:intro:smooth_quantum_comp_entropy}
    For any sub-normalized bipartite quantum state $\rho_{AB}\in \cS_{\bullet}(AB)$, and $\varepsilon \ge 0$, $s \in \mathbb{N}$, we say that
    \begin{equation}
        \hmincompc(A|B)_{\rho} \coloneq \sup_{\tilde{\rho}\in\cB_{\varepsilon}(\rho)} \hmincompc[s](A|B)_{\tilde{\rho}} \;,
    \end{equation}
    where the optimization is over all states $\tilde{\rho}\in \cB_{\varepsilon}(\rho)$ that are $\epsilon$-close to the original state in purified distance.
\end{definition}
The smoothing step is crucial in operational settings, as it accounts for small errors or imperfections in the description of quantum states. This makes the entropies robust to noise and applicable in one-shot scenarios, where exact characterizations are too rigid to be meaningful, see e.g.~\cite{renner2006securityquantumkeydistribution,KRS09OperationalMeaningEntropy,Tomamichel_2016}.
Importantly, the smoothing is not computational; it only relaxes the description of the state within a small purified distance. Thus, while the computational power of the circuits matters for the entropy itself, the smoothing step is purely an analytic tool. The resulting smooth computational min- and max-entropies are therefore the natural objects for applications, just as their information-theoretic counterparts are in the standard setting.

With the definition in hand, we derive several fundamental properties of the (smooth) computational min-entropy. 
Among these, two consequences are particularly worth highlighting. For simplicity, we express them in terms of the (un-smoothed) computational min-entropy, but note that analogous results also hold for the smoothed quantity. First, the computational min-entropy satisfies a fully quantum leakage chain rule. We note that in contrast, the leakage chain rule for the HILL entropy~\cite{CC17Computationalminentropy} is currently known only in the restricted ccq setting, highlighting one advantage of our definition in the fully quantum setting.

\begin{lemma}[Fully Quantum Leakage Chain Rule]
    For any sub-normalized tripartite quantum state $\rho_{ABC}\in \cS_{\bullet}(ABC)$, and any $\varepsilon \ge 0$, $s\in\bbN$, $\ell = \log\dim(C)$, let $t_{\ell}$ be the size of a circuit generating $\omega_{C}$, the maximally mixed state. Then,
              \begin{equation*}
                  \hmincompc[(s,\varepsilon)](A|BC)_{\rho} \ge \hmincompc[(s+t_{\ell},\varepsilon)](A|B)_{\rho} - 2\ell \;.
              \end{equation*}
\end{lemma}
Second, because the computational min-entropy coincides with the unpredictability entropy, $\hunps[s]$, on cq-states, we can directly extend prior results on pseudo-randomness extraction for the latter~\cite{arnon2025computational} to the computational min-entropy. Succinctly phrased,  classical data with high computational min-entropy can be converted into bits that look uniform to any computationally bounded observer.

\begin{lemma}\label{lem:IPext_compmin_intro}
    Let $\rho_{XE}\in \cS_{\circ}(XE)$ be a normalized cq-state\footnote{Normalized simply means that $\tr{ \rho_{XE}}=1$. Moreover, cq-states are of the form $\rho_{XE}=\sum_{x} p_{x} \proj{x}\otimes \rho^{x}_{E}$, where $\rho^{x}_{E}$ are normalized quantum states.} where $X$ is distributed over $\bits^{n}$ and $Y$ be uniformly distributed over~$\bits^{n}$. Moreover, let $k_{\ext} \in \mathbb{N}, \varepsilon_{\ext}>0$ and $k_{\ext} \ge 1-2\log(\varepsilon_{\ext})$. We denote by~$\IP(X,Y)$ the binary inner-product of the values taken by $X$ and $Y$.
    If
    \begin{equation*}
        \hmincompc[(2s+3n+5)](X|E) \ge k_{\ext} \;,
    \end{equation*}
    then
    \begin{equation*}
        \dist_{s}(\rho_{\IP(X,Y)YE},\omega_{1}\otimes\rho_{YE}) \le \varepsilon_{\ext} \;.
    \end{equation*}
\end{lemma}
Here, $\dist_{s}$ denotes the s-computational distance, which captures an observer's ability to distinguish between two states, using at most $s$ gates. Below, we provide a brief list of properties that are satisfied by the computational min-entropy.
\begin{samepage}
\begin{restatable}{theorem}{IntroPropertiesCompMinEnt}\label{Lem:intro:PropertiesCompMinEnt}
    The smooth computational min-entropy, $\hmincompc(A|B)_{\rho}$, satisfies
    \begin{enumerate}
        \item (Monotonicity in $s$): For any $\rho_{AB}\in \cS_{\bullet}(AB)$, $\varepsilon \geq 0$, and $s^\prime \geq s$,
              \begin{align}
                  \hmincompc(A|B)_{\rho} \ge \hmincompc[\left(s',\varepsilon\right)](A|B)_{\rho}  \;.
              \end{align}
        \item (Monotonicity in $\varepsilon$): For any $\rho_{AB}\in \cS_{\bullet}(AB)$, $s \in \mathbb{N}$, and $\varepsilon^\prime \geq \varepsilon \geq 0$,
              \begin{align}
                  \hmincompc[\left(s,\varepsilon^\prime\right)](A|B)_{\rho} \ge \hmincompc[\left(s,\varepsilon\right)](A|B)_{\rho}  \;.
              \end{align}
        \item (Data Processing):  Let $\cC_{B\to B'}$ be a quantum channel that can be implemented using a circuit of size $t$. Then, for any $\rho_{AB}\in \cS_{\bullet}(AB)$, $s \in \mathbb{N}$, and $\varepsilon \geq 0$,
              \begin{align}
                  \hmincompc(A|B')_{\cC(\rho)} \ge \hmincompc[\left(s+t,\varepsilon\right)](A|B)_{\rho} \;.
              \end{align}
        \item (Leakage Chain Rule):     For any $\rho_{ABC}\in \cS_{\bullet}(ABC)$, and any $\varepsilon \ge 0$, $s\in\bbN$, denote by $d_{C}$ the dimension of the register $C$ let $t$ be the size of a circuit generating $\omega_{C}$, the maximally mixed state. Then,
              \begin{equation*}
                  \hmincompc(A|BC)_{\rho} \ge \hmincompc[(s+t,\varepsilon)](A|B)_{\rho} - 2\log(d_{C}) \;.
              \end{equation*}
        \item (Purification Chain Rule):  For any $\rho_{ABC}\in \cS_{\bullet}(ABC)$, and any $\varepsilon \ge 0$, $s\in\bbN$, denote by $d_{A}$ the dimension of the register $A$
              \begin{equation*}
                  \hmincompc(B|C)_{\rho} \le \hmincompc(AB|C)_{\rho} + \log(d_{A}) \;.
              \end{equation*}
        \item ($\hunps[s]$ Equivalence):   For any cq-state $\rho_{XE}\in \cS_{\bullet}(XE)$ such that  ${\ell = \log\dim(X) \in \mathbb{N}}$, and any $s\in\mathbb{N}$, $\varepsilon \ge 0$, 
              \begin{align}
                  \hmincompc[s](X|E)_{\rho} & = \hunps[s](X|E)_{\rho}                                                                     \\
                  \hmincompc(X|E)_{\rho}    & \geq \hunps[s]^{\varepsilon}(X|E)_{\rho} \geq \hmincompc[(s+l,\varepsilon)](X|E)_{\rho} \;.
              \end{align}
        \item ($\mathrm{H}_{\min}^{\varepsilon}$ Relation):  For any $\rho_{AB}\in \cS_{\bullet}(AB)$, $s \in \mathbb{N}$, and $\varepsilon \geq 0$,
              \begin{align}
                  \hmincompc(A|B)_{\rho}                   & \ge \mathrm{H}_{\min}^{\varepsilon}(A|B)_{\rho}    \\
                  \lim_{s\to\infty} \hmincompc(A|B)_{\rho} & = \mathrm{H}_{\min}^{\varepsilon}(A|B)_{\rho}  \;.
              \end{align}
    \end{enumerate}
\end{restatable}
\end{samepage}
One point of emphasis is that (i) there exists a strict separation between the computational min-entropy and its information-theoretic analog, and (ii) a similar separation exists between computational min-entropies of states that are (locally) isometrically equivalent. Both follow from a simple counting argument and do not rely on computational hardness assumptions.
\begin{restatable}{lemma}{IntroInfoSep}\label{Lem:intro:InfoSep}
    Let $s\in\mathbb{N}$ be fixed.
    There exists an integer $n$, and a normalized fully classical bipartite state ${\rho_{AB}\in \cS_{\circ}(AB)}$ such that
    \begin{equation*}
        \mathrm{H}_{\min}(A|B)_{\rho}=0
        \quad\text{and}\quad
        \hmincompc[s](A|B)_{\rho}\ge n/2 \;.
    \end{equation*}
\end{restatable}
In other words, there exist states for which $B$ contains all of the data stored in $A$, but this data cannot be adequately accessed with $s$ gates. Similar results can be obtained using one-way functions. Closely related to this, the following lemma shows that the computational min-entropy is not invariant under (local) isometric transformations. This further distinguishes the computational min-entropy from the information-theoretic min-entropy, which is invariant under such transformations~\cite{Tomamichel_2016}. In particular, it highlights that having access to the ``right" isometry can significantly enhance an adversary's ability to infer what is stored in $A$.
\begin{restatable}{lemma}{IntroAltPurMinEnt}\label{Lem:intro:AltPurMinEnt}
    Let $s\in\mathbb{N}$ be fixed.
    There exists an integer $n$, a normalized bipartite state ${\rho_{AB}\in \cS_{\circ}(AB)}$, and two purifications
    $\rho_{ABC}, \rho_{ABD}$
    of $\rho_{AB}$ such that
    \begin{equation*}
        \hmincompc[s](A|C)_{\rho}=-n
        \quad\text{and}\quad
        \hmincompc[s](A|D)_{\rho}\ge n/2 \;.
    \end{equation*}
\end{restatable}
We conclude the computational min-entropy overview with a comparison to another quantum computational entropy; the (relaxed) Hill entropy~\cite{CC17Computationalminentropy}.

\begin{definition} [HILL Entropy]
    A normalized bipartite quantum state $\rho_{AB}\in \cS_{\circ}(AB)$ has $(s,\varepsilon)$-relaxed HILL entropy at least $k$, i.e.\ $\mathrm{H}_{s,\varepsilon}^{\hill}(A|B)\ge k$,  if there is another state $\sigma_{AB}\in \cS_{\circ}(AB)$ such that $\textrm{H}_{\min}(A|B)_{\sigma} \ge k$ and $\dist_{s}(\rho_{AB},\sigma_{AB})\leq \varepsilon$.
\end{definition} 
This definition is conceptually very straightforward; if a state $\rho_{AB}$ is computationally indistinguishable from a state with high (information-theoretical) min-entropy, then it has high HILL entropy. One drawback of this entropy, however, is that the only known leakage chain rule solely holds for a restricted set of states, i.e. ccq-states $\rho_{XYC}$. That means, one can bound the entropy loss under quantum leakage only when the remaining data is classical. In contrast, our leakage chain rule holds for general states.

Let us now consider two examples that highlight that neither computational entropy dominates the other. For this, we introduce 
pseudo-random generators~\cite{Gold08}.

\begin{definition}[Pseudo-Random Generator (PRG)]
    A function $G: \bits^{n} \to \bits^{m}$ is called a $(s_{\prg},\varepsilon_{\prg})$ pseudo-random generator (PRG) if it is efficiently computable, length expanding ($m>n$), and the output distribution on a random input is $ \varepsilon_{\prg}$-computationally indistinguishable from the uniform distribution in the computational trace distance $\dist_{s_{\prg}}$.
\end{definition}
For PRGs with short seed, the HILL entropy of the generated distribution must be high. This is because the output is computationally indistinguishable from a uniformly random distribution, which has maximal (information-theoretic) min-entropy. However, the maximal min-entropy of the actual output distribution is determined by the seed length, $n$. Since the circuit that is needed to generate the best guess is efficient,\footnote{Again, note that we do not include the computational complexity that is necessary to determine what the optimal circuit is. This allows for a clean worst-case analysis of an adversary's capabilities.} we find that 
\begin{equation*}
    \mathrm{H}_{s_{\prg},\varepsilon_{\prg}}^{\hill}(\Prg(S))_{\sigma} \approx m \;, \quad \text{and} \quad  \hmincompc[\left(s_{\prg},\varepsilon_{\prg}\right)](\Prg(S))_{\sigma} \approx n \;.
\end{equation*}
One can choose $m$ to be polynomial in $n$~\cite{Gold08}, thus yielding a clear separation between the two quantities where the HILL entropy dominates.
Conversely, one-way permutations~\cite{kaliski1991onewayperm}, i.e. families of permutations $\set{f}_{f\in F}$ that can be efficiently computed but are computationally hard to invert (on average), provide an example in the other direction. 

For a random variable $X\in \bits^{n}$ and a family $\set{f}_{f\in F}$ of one-way permutations $f :\bits^{n}\to\bits^{n}$, the distribution of $(x,f(x),f)$, with uniform probability over $x\in X$ and choice of $f\in F$ would not have high HILL entropy $\mathrm{H}^{s,\varepsilon}_{\hill}(X|F(X),F)$. The joint conditional distribution $(x,f(x),f)$ is distinguishable from distributions where $x$ is not determined by $(f(x),f)$ for any adversary that can evaluate $f$. To distinguish between $(y,f(x),f)$ and $(x,f(x),f)$, all a distinguisher needs to do is to evaluate $f(y)$ or $f(x)$. Since $f$ is a one-way permutation, evaluating it in the forward direction is computationally easy. On the other hand, since $F$ is a family of \emph{one-way} permutations, guessing $x$ from $f(x),f$ is computationally hard, meaning that $\hmincompc(X|F(X),F)$ would be high. For such states, it would hold that
\begin{equation*}
    \mathrm{H}_{s_{\prg},\varepsilon_{\prg}}^{\hill}(X|F(X),F)_{\rho} < \hmincompc[\left(s_{\prg},\varepsilon_{\prg}\right)](X|F(X),F)_{\rho}\;.
\end{equation*}

\subsection*{Computational Max-Entropy}
Another important quantity in quantum information theory is the conditional max-entropy~\cite{KRS09OperationalMeaningEntropy,Tomamichel_2016}. It can be defined through the following duality relation with the conditional min-entropy.
\begin{definition}[Max-Entropy~\cite{KRS09OperationalMeaningEntropy}]\label{def:intro:max-entropy-math}
    Let $\rho_{AB} \in S_{\bullet}(AB)$ be a sub-normalized bipartite quantum state and let ${\rho_{ABC} \in S_{\bullet}(ABC)}$ denote a purification.\footnote{A purification is a state ${\rho_{ABC} \in S_{\bullet}(ABC)}$ that satisfies (i) $\tr{\rho_{ABC}}= \tr{\rho_{ABC}^2}$ (ii) tracing out $C$ yields the original state, i.e. ${\ptr{C}{\rho_{ABC}}=\rho_{AB}}$; see e.g.~\cite[Section 2.4.3]{Tomamichel_2016}.} The conditional max-entropy of $A$ given $B$ is defined as:
\begin{align*}
    \mathrm{H}_{\max}(A|B)_\rho \coloneq -\mathrm{H}_{\min}(A|C)_{\rho} \; .
\end{align*}
\end{definition}

In the information-theoretic setting, this duality can be taken as the very definition of the conditional max-entropy. It is well-defined because for any mixed state $\rho_{AB}$, any purification $\rho_{ABC}$ yields the same value of $\mathrm{H}_{\min}(A|C)_\rho$, since they are related by local isometries on $C$~\cite{KRS09OperationalMeaningEntropy,Tomamichel_2016}. In the computational setting, however, this invariance no longer holds: moving between different purifications may require operations of high computational complexity, and the resulting computational min-entropies can therefore differ. As a result, the duality alone does not provide a consistent definition of computational max-entropy.

To obtain a consistent notion in the computational setting, we therefore fix a specific purification to work with. In this work, we consider the pretty good purification~\cite{Winter_2004prettygoodpure}, which has a simple algebraic form and is widely used in quantum information theory, see e.g.~\cite{Winter_2004prettygoodpure,Iten_2017,arqand2024generalizedrenyientropyaccumulation}. By defining the computational max-entropy via a duality relation with the computational min-entropy on this particular purification, we arrive at a well-defined quantity that extends the information-theoretic max-entropy while incorporating computational constraints.

\begin{definition} [Quantum Computational Max-Entropy] \label{Def:intro:DualRel}
    Let $\rho_{AB}\in \cS_{\bullet}(AB)$ be a sub-normalized bipartite state.
    The quantum computational max-entropy is  defined as
    \begin{equation} 
        \hmaxcompc[s](A|B)_\rho \coloneq - \hmincompc[s](A|C)_{\rho^{\textrm{pg}}}\;,
    \end{equation}
where, 
\begin{align}
\rho^{\textrm{pg}}_{AC} =\ptr{B}{\proj{\rho^{\textrm{pg}}}_{ABC}} =
       \ptr{B}{\left(\sqrt{\rho_{A}} \otimes \mathbb{I}_C\right)
        \proj{\Phi}_{AB|C}
        \left(\sqrt{\rho_{A}} \otimes \mathbb{I}_C\right) }\; ,
\end{align}
 and $\ket{\Phi}_{AB|C}$ is the maximally entangled state between $AB$ and $C$.
\end{definition}
This quantity can also be smoothed, resulting in the following definition.
\begin{definition}[Smooth Quantum Computational Max-Entropy]\label{def:intro:smooth_quantum_max_entropy}
    For any sub-normalized bipartite quantum state $\rho_{AB}\in \cS_{\bullet}(AB)$, and $\varepsilon \ge 0$, $s \in \mathbb{N}$, we say that
    \begin{equation}
        \hmaxcompc(A|B)_{\rho} \coloneq \inf_{\tilde{\rho}\in\cB_{\varepsilon}(\rho)} \hmaxcompc[s](A|B)_{\tilde{\rho}} \;,
    \end{equation}
    where the optimization is over all states $\tilde{\rho}\in \cB_{\varepsilon}(\rho)$ that are $\epsilon$-close to the original state in purified distance.
\end{definition}

Moreover, with this choice of purification, we derive another closed-form expression for the computational max-entropy that depends only on the state $\rho_{AB}$.
\begin{figure}[ht]
    \centering
    \begin{quantikz}[row sep=0.3cm]
        \lstick[wires=8]{$\ket{\rho^{\textrm{pg}}_{AB|C}}$} & \qw & \qw \rstick[wires=2]{$A$} \\
        & \qw & \qw                       \\
        & \qw & \qw \rstick[wires=2]{$B$} \\
        & \qw & \qw                       \\
        & \gate[wires=4]{\mathcal{E}^s_{C \to A'}} & \qw \rstick[wires=4]{$A'$} \\
        &                                   & \qw                        \\
        &                                   & \qw                        \\
        &                                   & \qw
    \end{quantikz}
    \quad $\approx$ \quad $\proj{\Phi_{AA'}}\otimes \sigma_B$
    \caption{A circuit $\mathcal{E}^s_{C \to A'}$ acts on the purifying system $C$ for the pretty good purification of $\rho_{AB}$. The goal is to distill entanglement between $A$ and $C$ and decoupling from $B$, quantified by fidelity with $\proj{\Phi_{AA'}} \otimes \sigma_B$.}
    \Description[Entanglement distillation circuit]{Quantum circuit with eight input wires representing a purification of $\rho_{AB}$ conditioned on $C$. 
  A four-wire operation $\mathcal{E}^s$ acts on the $C$ subsystem and outputs four wires labeled $A'$. 
  The target state is close to a maximally entangled pair between $A$ and $A'$ tensor an arbitrary state on $B$.}

    \label{fig:intro:Es-circuit-max-entropy}
\end{figure}
\begin{lemma}[Closed-form definition for the computational max-entropy ]\label{Prop:nosmooth_quantum_comp_maxentropy_intro}
    For any bipartite quantum state $\rho_{AB}\in \cS_{\bullet}(AB)$, and $s \in \mathbb{N}$, it holds that
    \begin{equation*}
        \hmaxcompc[s](A|B)_{\rho} =  \log \max_{\sigma_{B}} \max_{\cE\in \cC(s)} d_{A} F\left((\mathbb{I}_{AB} \otimes \cE_{C\to A'} )   \proj{\rho^{\textrm{pg}}}_{ABC},\ketbra{\Phi_{AA'}}{\Phi_{AA'}}\otimes \sigma_{B} \right) \;,
    \end{equation*}
    where $\ket{\rho^{\textrm{pg}}}_{ABC} = 
        \left(\sqrt{\rho_{AB}} \otimes \mathbb{I}_C\right)
        \ket{\Phi}_{AB|C}$, and one is optimizing over all mixed quantum states $\sigma_{B}$ and quantum circuits $\cE^s_{C\to A'}$ acting on $C$, of size at most $s$.
\end{lemma}
This closely mirrors the alternative expression for the max-entropy found in, e.g.,~\cite{KRS09OperationalMeaningEntropy,Tomamichel_2016}. The main difference is that the register $C$ can be removed in the information-theoretic setting using Uhlmann's theorem~\cite{uhlmann1976transition}. However, since the required isometry is generally not expected to be efficiently preparable~\cite{bostanci2023unitary}, this simplification is not possible for the computational max-entropy.

By defining the computational max-entropy via a duality relation, it not only inherits several structural properties (discussed below), but is also tied to an operational task: for bounded circuit size $s$, it captures the maximal fidelity achievable when distilling entanglement between $A$ and the auxiliary register $C$ of the pretty good purification, in a way that decouples $A$ from $B$ (see also~\cref{fig:intro:Es-circuit-max-entropy}). 

Again, we note the following two features that naturally arise from the way we define the computational max-entropy.
\begin{enumerate}
    \item Analogous to the computational min-entropy, this definition holds for any initial state; we do not need to impose any conditions on the computational power needed to generate~$\rho_{AB}$. 
    \item 
    There is a natural way to define the non-conditional computational max-entropy. For any state $\rho_{A}$,  \begin{equation*}
        \hmaxcompc[s](A)_{\rho} \coloneq  \log \max_{\cE\in \cC(s)} d_{A} F\left((\sqrt{\rho_{A}} \otimes \cE_{C\to A'} )\ket{\Phi_{A|C}},\ket{\Phi_{AA'}} \right) \;.
    \end{equation*}
\end{enumerate}

For this definition of the computational max-entropy, we recover several intuitive properties: it is monotone in both the circuit-size and smoothing parameters, and converges to the information-theoretic smooth max-entropy as $s \to \infty$.
\begin{restatable}{theorem}{IntroPropertiesCompMaxEnt} \label{lem:intro:PropertiesCompMaxEnt}
    The smooth computational max-entropy, $\hmaxcompc(A|B)_{\rho}$, satisfies
    \begin{enumerate}
        \item (Monotonicity in $s$): For any $\rho_{AB}\in \cS_{\bullet}(AB)$, $\varepsilon \geq 0$, and $s^\prime \geq s$,
              \begin{align}
                  \hmaxcompc(A|B)_{\rho} \le \hmaxcompc[\left(s',\varepsilon\right)](A|B)_{\rho}  \;.
              \end{align}
        \item (Monotonicity in $\varepsilon$): For any $\rho_{AB}\in \cS_{\bullet}(AB)$, $s \in \mathbb{N}$, and $\varepsilon^\prime \geq \varepsilon \geq 0$,
              \begin{align}
                  \hmaxcompc[\left(s,\varepsilon^\prime\right)](A|B)_{\rho} \le \hmaxcompc[\left(s,\varepsilon\right)](A|B)_{\rho}  \;.
              \end{align}
        \item ($\mathrm{H}_{\max}^{\varepsilon}$ Relation):  For any $\rho_{AB}\in \cS_{\bullet}(AB)$, $s \in \mathbb{N}$, and $\varepsilon \geq 0$,
              \begin{align}
                  \hmaxcompc(A|B)_{\rho}                   & \le \mathrm{H}_{\max}^{\varepsilon}(A|B)_{\rho}\;,  \\
                  \lim_{s\to\infty} \hmaxcompc(A|B)_{\rho} & = \mathrm{H}_{\max}^{\varepsilon}(A|B)_{\rho}  \;.
              \end{align}
    \end{enumerate}
\end{restatable}

\section{Preliminaries}

\subsection{Basic Quantum Notation}
We briefly present the definitions and tools from quantum information theory that are needed for this work.
For a more general introduction to quantum information theory, we refer the reader to one of several books on the subject, such as~\cite{nielsen2010quantumbook,khatri2024principlesquantumcommunicationtheory,wilde2013quantumbook,Tomamichel_2016,renes_quantum_2022}.
In this work, we are interested in quantum states as they relate to information theory and computational tasks.
To represent quantum states, we work in finite-dimensional Hilbert spaces. A vector in a Hilbert space is denoted by $\ket{\phi}$ and its dual vector by~$\bra{\phi}$.
\begin{definition}
    A quantum state, $\rho$, is a positive semi-definite matrix with $\tr{\rho} \leq 1$.
    A pure quantum state is a state with a matrix rank of $1$. Pure states can be written in the form $\ket{\phi}\bra{\phi}$ for some vector $\ket{\phi}$. States that are not pure are called mixed states.
    Any mixed state can be written as a convex combination of pure states, i.e.,
    \begin{equation*}
        \rho = \sum_{i} p_{i} \proj{\rho_{i}} \;,
    \end{equation*}
    where $\set{p_{i}}$ is a probability distribution and $\proj{\rho_{i}}$ are pure states.
    We say a state is normalized if $\tr{\rho} = 1$ and sub-normalized if $\tr{\rho} \le 1$. We denote the sets of all normalized states on a Hilbert space $\cH_{A}$ by $\cS_{\circ}(A)$ and the set of all sub-normalized states by $\cS_{\bullet}(A)$. In cases where the space is clear from context, we will omit it and simply write $\cS_{\circ}$ or $\cS_{\bullet}$.
\end{definition}

Classical states can be viewed as a special case of quantum states.
Classical-quantum (cq) states arise when classical information is embedded in a quantum system and correlated with a quantum part.
Cq-states are a natural hybrid that appears frequently in quantum information theory and quantum cryptography.
\begin{definition}[Classical-Quantum (cq) State]
    A classical-quantum (cq) state is a state of the form
    \begin{equation*}
        \rho_{XE}=\sum_{x} p_{x} \proj{x}\otimes \rho^{x}_{E} \;,
    \end{equation*}
    with $\set{\ket{x}}_x$  being the standard basis vectors in $\cH_{X}$, representing the classical values of~$X$.
\end{definition}

One particularly useful classical distribution is the uniform distribution over a finite set of outcomes. In the quantum setting, we sometimes call it the maximally mixed state and write it as follows.
\begin{definition}[Maximally Mixed States]\label{def:max_mix}
    We say a state is maximally mixed if it is of the form
    \begin{equation*}
        \omega = \frac{1}{d} \sum_{i=1}^{d} \proj{i} \;,
    \end{equation*}
    where $\ket{i}$ is the standard basis of the Hilbert space of dimension $d$.
\end{definition}

A state is said to be bipartite or multipartite if the Hilbert space it is acting on is a tensor product space of two or more Hilbert spaces. We denote the Hilbert space of a system $A$ by $\cH_{A}$, the Hilbert space of a system $B$ by $\cH_{B}$, and the Hilbert space of the composite system $AB$ by $\cH_{AB} \coloneq \cH_{A} \otimes \cH_{B}$.
Similarly, for the states themselves, the subscript indicates the Hilbert space, e.g., $\rho_{AB}\in \cS_{\bullet}(AB)$.

In quantum information theory, entanglement is a fundamental resource that enables phenomena such as quantum teleportation~\cite{bennett1993teleporting}, superdense coding~\cite{bennett1992communicationsuperdenscoding}, and quantum key distribution~\cite{renner2006securityquantumkeydistribution}. Among entangled states, maximally entangled states play a central role, exhibiting the strongest possible correlations allowed by quantum mechanics, and serve as key resources in quantum information processing~\cite{Horodecki_2009_entanglement_measures}.

\begin{definition}[Maximally Entangled States]
    We say a bipartite state is maximally entangled if it is of the form $\proj{\Phi}$ where
    \begin{equation*}
        \ket{\Phi} = \frac{1}{\sqrt{d}} \sum_{i=1}^{d} \ket{i} \otimes \ket{i} \;,
    \end{equation*}
    where $\set{\ket{i}}_{i=1}^{d}$ is the standard computational basis of a space of dimension $d$.
\end{definition}

The following definitions formalize the different ways in which quantum states can be manipulated and accessed. Channels describe how quantum systems evolve in time or are transformed, while measurements specify how (classical) information can be extracted from them. Together, these concepts capture the allowed interactions with quantum states in quantum theory.

\begin{definition}[Quantum Channels]
    A quantum channel is a completely positive trace-preserving (CPTP) map. Channels map quantum states to quantum states.
    We use the following notation for channels acting on certain marginals of a state:
    \begin{equation*}
        \rho_{\phi(A)B} \coloneq (\phi_{A} \otimes \mathbb{I}_{B}) (\rho_{AB})\;,
    \end{equation*}
    to denote the state of the system after applying the channel $\phi$ on the marginal~$\rho_{A}$.
\end{definition}

\begin{definition}[POVM]
    Positive Operator-Valued Measures (POVMs) are generalized measurements that can be performed on quantum states.
    POVMs can be modeled as a set of positive semi-definite Hermitian matrices $\set{E_{i}}$ such that $\sum_{i}E_{i} = \1$. The probability of outcome $i$ on a state $\rho$ is $\tr{E_{i}(\rho)}$.
\end{definition}

Any quantum channel followed by any measurement can be modeled as a POVM~\cite{hellwigOperationsmeasurementsII1970}.
A POVM can be modeled by a channel operating on the state and some auxiliary system, followed by a projective measurement~\cite{stinespring1955dilationpositive}.

\subsection{Distance Measures}

We now present a few distance measures that we use throughout the paper.
We begin with the trace distance, a quantum analogue of statistical total variation distance.

\begin{definition}[Trace Distance]
    The trace distance between two normalized quantum states $\rho_{A},\sigma_{A} \in S_{\circ}\left(A\right)$ is defined as
    \begin{equation*}
        \dist(\rho_{A},\sigma_{A}) \coloneq \frac{1}{2} \norm{\rho_{A} - \sigma_{A}}_{1} = \frac{1}{2} \tr{\sqrt{(\rho_{A} - \sigma_{A})^{\dagger}(\rho_{A} - \sigma_{A})}} \;.
    \end{equation*}
\end{definition}
The trace distance between two quantum states $\rho_{A}$ and $\sigma_{A}$ quantifies their distinguishability: it equals the maximum difference in probabilities that these states can yield for the same measurement outcome.
The trace distance between states $\rho_{A},\sigma_{A}\in\cS_{\circ}(A)$ is the maximal probability of any POVM element to distinguish between them,~\cite[Theorem 9.1]{nielsen2010quantumbook}
\begin{equation*}
    \dist(\rho_{A},\sigma_{A}) = \max_{0\le P\le \1}(\tr{P(\rho_{A}-\sigma_{A})})\;.
\end{equation*}

The fidelity between quantum states is a central concept in quantum information theory, particularly in the context of entanglement and quantum state discrimination. It provides a measure of how distinguishable two quantum states are, and it plays a crucial role in the study of quantum entropies and their properties.
The trace distance is a natural quantum analogue of the classical total variation distance, quantifying how well two quantum states can be distinguished by measurements. However, in many quantum information tasks, especially those involving smoothing or operational one-shot scenarios~\cite{KRS09OperationalMeaningEntropy,Dupuis_2014_one_shot_decup}, the trace distance is not always the most suitable metric. This is because any mixed quantum state can be viewed as a reduced part of a larger pure state, and operational tasks often involve considering purifications and their overlaps.

Fidelity-based distance measures, such as the purified distance, are particularly well-suited for these settings. Due to the characterization provided by Uhlmann's theorem~\cite{uhlmann1976transition}, fidelity naturally captures the notion that two states are close if their purifications can be made to have a large overlap, reflecting the structure of quantum information as embedded in larger systems. This perspective motivates the use of fidelity and purified distance as the primary metrics for quantifying the closeness of quantum states in modern quantum information theory~\cite{Tomamichel_2010_Duality}. We will work with the following definition of the generalized fidelity, which is also well suited to work with sub-normalized states.

\begin{definition}[Generalized Fidelity]\label{def:GeneralizedFidelity}
    The (generalized) fidelity between two sub-normalized states ${\rho_{A},\sigma_{A} \in S_{\bullet}(A)}$ is determined by the maximal overlap between purifications:
    \begin{equation*}
        F(\rho_{A},\sigma_{A}) \coloneq \left(\max_{\ket{\phi_{AB}},\ket{\psi_{AB}}} \abs{\braket{\phi_{AB}}{{\psi_{AB}}}}+ \sqrt{(1-\tr{\phi_{AB}})(1-\tr{\psi_{AB}})}\right)^{2}\;,
    \end{equation*}
    where the maximum is taken over all pure states such that
    $\rho_{A}=\ptr{B}{\proj{\phi_{AB}}}$, and  $\sigma_{A}=\ptr{B}{\proj{\psi_{AB}}}$.
\end{definition}
Note that if at least one of the states is normalized, the generalized fidelity reduces to the known expression for the fidelity from~\cite{uhlmann1976transition}
    \begin{equation*}
        F(\rho_{A},\sigma_{A}) = \max_{\ket{\phi_{AB}},\ket{\psi_{AB}}} \abs{\braket{\phi_{AB}}{{\psi_{AB}}}}^{2}\;.
    \end{equation*}
In the context of quantum entropies, the purified distance arises as the most natural choice of metric, due to its compatibility with the operational tasks and smoothing techniques central to one-shot information theory~\cite{Tomamichel_2010_Duality,Tomamichel_2016}.
\begin{definition}[Purified Distance]\label{def:pur_dist}
    The purified distance between two sub-normalized states ${\rho_{A},\sigma_{A} \in S_{\bullet}(A)}$ is given by:
    \begin{equation*}
        \bP(\rho_{A},\sigma_{A}) \coloneq \sqrt{1 - F(\rho_{A},\sigma_{A})}  \;.
    \end{equation*}
\end{definition}

\begin{definition}[Purified Distance Ball]
    For a sub-normalized state $\rho\in \cS_{\bullet}(A)$ with ${\sqrt{\tr{\rho_{A}}}>\varepsilon\ge0}$, we define the $\varepsilon$-ball around $\rho$ as:
    \begin{equation*}
        \cB_{\varepsilon}(\rho_{A}) \coloneq \set{\tilde{\rho}_{A}\in \cS_{\bullet}(A): \; \bP(\rho_{A},\tilde{\rho}_{A})\le\varepsilon}  \;.
    \end{equation*}
\end{definition}

A central property of the purified distance that we utilize in this work is the following extension property.
\begin{lemma}[Extension Property for Purified Distance~\cite{Tomamichel_2010_Duality}]\label{lem:extension_uhlmann_property_for_purefied_distance}
    For any sub-normalized states $\rho_{AB},\sigma_{A}$ there is an extension, $\sigma_{AB}$, of $\sigma_{A}$, i.e. $\ptr{B}{\sigma_{AB}}=\sigma_{A}$, such that the purified distance remains the same:
    \begin{equation*}
        \bP(\rho_{AB},\sigma_{AB}) = \bP(\rho_{A},\sigma_{A}) \;.
    \end{equation*}
\end{lemma}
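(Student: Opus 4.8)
The plan is to prove the two inequalities $\bP(\rho_{AB},\sigma_{AB})\ge\bP(\rho_A,\sigma_A)$ and $\bP(\rho_{AB},\sigma_{AB})\le\bP(\rho_A,\sigma_A)$, where the first holds for \emph{every} extension $\sigma_{AB}$ of $\sigma_A$ and the second is achieved by a carefully constructed one. The $\ge$ direction is just the monotonicity (\dpi) of the purified distance under partial trace: the generalized fidelity is non-decreasing under any CPTP map, and $\ptr{B}{\cdot}$ is CPTP, so $F(\rho_{AB},\sigma_{AB})\le F(\rho_A,\sigma_A)$, and since $\bP=\sqrt{1-F}$ is decreasing in $F$ this yields $\bP(\rho_{AB},\sigma_{AB})\ge\bP(\rho_A,\sigma_A)$ for any extension. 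The entire content of the lemma is therefore to exhibit a single extension that saturates this bound.

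For the saturating extension I would invoke Uhlmann's theorem~\cite{uhlmann1976transition}. First fix any purification $\ket{\phi}_{ABC}$ of $\rho_{AB}$, taking $C$ large enough; note that $\ket{\phi}_{ABC}$ is simultaneously a purification of $\rho_A$ with purifying system $BC$. Applying the ``one-purification-fixed'' form of Uhlmann's theorem to the pair $(\rho_A,\sigma_A)$, there exists a purification $\ket{\psi}_{ABC}$ of $\sigma_A$ on the same space with $\abs{\braket{\phi_{ABC}}{\psi_{ABC}}}^{2}=F(\rho_A,\sigma_A)$. I then define $\sigma_{AB}:=\ptr{C}{\proj{\psi_{ABC}}}$; by construction $\ptr{B}{\sigma_{AB}}=\sigma_A$, so this is a legitimate extension. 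Since $\ket{\phi}_{ABC}$ and $\ket{\psi}_{ABC}$ are now purifications of $\rho_{AB}$ and $\sigma_{AB}$ respectively with the common purifying system $C$, the variational definition of the fidelity gives $F(\rho_{AB},\sigma_{AB})\ge\abs{\braket{\phi_{ABC}}{\psi_{ABC}}}^{2}=F(\rho_A,\sigma_A)$, i.e.\ $\bP(\rho_{AB},\sigma_{AB})\le\bP(\rho_A,\sigma_A)$. Combined with the monotonicity bound, this forces equality.

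The main subtlety, and the step I expect to demand the most care, is the sub-normalized case, since the generalized fidelity of Definition~\ref{def:GeneralizedFidelity} carries the extra term $\sqrt{(1-\tr{\phi_{AB}})(1-\tr{\psi_{AB}})}$ whereas Uhlmann's theorem is stated for normalized states. I would handle this with the standard embedding that appends one orthogonal dimension to each system, sending a sub-normalized $\rho$ to the normalized $\hat{\rho}=\rho\oplus(1-\tr{\rho})\proj{\perp}$; under this embedding the generalized fidelity of the original states equals the ordinary fidelity of the padded ones. Running the normalized argument above on $\hat{\rho}_{AB}$ and $\hat{\sigma}_A$ produces an extension $\hat{\sigma}_{AB}$, and tracing the construction back through the embedding (discarding the appended flag to recover the correct trace) yields the desired sub-normalized extension $\sigma_{AB}$, with the equality of generalized fidelities transferring back to $\rho_{AB}$ and $\sigma_{AB}$.
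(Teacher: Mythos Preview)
The paper does not prove this lemma; it is stated with a citation to~\cite{Tomamichel_2010_Duality} and used as a black box (e.g.\ in the proof of the leakage chain rule). Your argument is correct and is exactly the standard proof from that reference: monotonicity of the generalized fidelity under partial trace gives one inequality, and the ``one-purification-fixed'' form of Uhlmann's theorem produces the saturating extension.

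One small simplification for the sub-normalized case: the additive defect term $\sqrt{(1-\tr{\rho})(1-\tr{\sigma})}$ in Definition~\ref{def:GeneralizedFidelity} depends only on the traces, which are preserved under any extension, so it suffices to match the overlap term $\max_{\phi,\psi}\abs{\braket{\phi}{\psi}}$. Uhlmann's theorem (in the form $\norm{\sqrt{\rho}\sqrt{\sigma}}_1=\max_{\phi,\psi}\abs{\braket{\phi}{\psi}}$) holds for arbitrary positive semidefinite operators, not only normalized states, so the normalized argument you give already covers the general case verbatim. The embedding route you sketch also works, but one has to be a little careful about how the appended orthogonal direction interacts with the $A\otimes B$ tensor factorization; the direct observation above sidesteps that bookkeeping entirely.
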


The extension property notably fails for quantum states when using the trace distance. If we further require that the extensions are both pure, the purified distance is essentially the only distance measure for which this property holds. For a more detailed discussion of the extension property of the purified distance, see~\cite[Appendix~H]{tan2024prospectsdeviceindependentquantumkey}.

\subsection{Entropies}
Entropies are fundamental quantities in information theory and cryptography, used to quantify the uncertainty or randomness of a system. Here, we define the min-entropy, an entropy measure that is related to tasks such as decoupling and privacy amplification, along with its smoothed counterpart~\cite{KRS09OperationalMeaningEntropy,Dupuis_2014_one_shot_decup}.

\begin{definition}[Min-Entropy~\cite{renner2006securityquantumkeydistribution}]\label{def:min-entropy-math}
    Let $\rho_{AB} \in S_{\bullet}(AB)$ be a sub-normalized bipartite quantum state. The conditional min-entropy of $A$ given $B$ is defined as:
    \begin{equation*}
        \mathrm{H}_{\min}(A|B)_{\rho} \coloneq  \sup_{\lambda \in \mathbb{R} ,\sigma_{B} \in \cS_{\bullet}(B) } \left\{ \lambda : \rho _{AB} \le 2^{-\lambda} (I_A \otimes \sigma_B) \right\}\;.
    \end{equation*}
\end{definition}

\begin{definition}[Smooth Min-Entropy~\cite{RennerKonig2005_smoothentext}]
    Let $\rho_{AB} \in S_{\bullet}(AB)$ be a sub-normalized bipartite quantum state and $\varepsilon\ge 0$. The conditional $\varepsilon$-smooth min-entropy of $A$ given $B$ is defined as:
    \begin{equation*}
        \mathrm{H}_{\min}^{\varepsilon}(A|B)_{\rho} \coloneq \sup_{\tilde{\rho}\in\cB_{\varepsilon}(\rho)}\mathrm{H}_{\min}(A|B)_{\tilde{\rho}}\;.
    \end{equation*}
\end{definition}
We now turn to the max-entropy, a dual quantity to min-entropy, which captures the effective support size of $A$ conditioned on $B$ and appears in the analysis of tasks such as data compression, state merging, and entanglement distillation~\cite{wild17_ent_distilation_via_decup,KRS09OperationalMeaningEntropy}.

\begin{definition}[Max-Entropy~\cite{KRS09OperationalMeaningEntropy}]\label{def:max-entropy-math}
    Let $\rho_{AB} \in S_{\bullet}(AB)$ be a sub-normalized bipartite quantum state. The conditional max-entropy of $A$ given $B$ is defined as:
    \begin{equation*}
        \mathrm{H}_{\max}(A|B)_{\rho} \coloneq  \sup_{\sigma_{B} \in \cS_{\circ}(B) } \log F(\rho_{AB},\1_{A}\otimes \sigma_{B}) \;.
    \end{equation*}
\end{definition}
\begin{definition}[Smooth Max-Entropy~\cite{Tomamichel_2010_Duality}]
    Let $\varepsilon\ge 0$ and $\rho_{AB} \in S_{\bullet}(AB)$ be a sub-normalized  bipartite quantum state. The conditional $\varepsilon$-smooth max-entropy of $A$ given $B$ is defined as:
    \begin{equation*}
        \mathrm{H}_{\max}^{\varepsilon}(A|B)_{\rho} \coloneq \inf_{\tilde{\rho}\in\cB_{\varepsilon}(\rho)}\mathrm{H}_{\max}(A|B)_{\tilde{\rho}}\;.
    \end{equation*}
\end{definition}

Smooth entropies generalize the conditional quantum entropies by permitting slight errors in the state. By setting $\varepsilon=0$ in the smooth entropies, we recover the non-smooth variants. 
\begin{equation*}
    \mathrm{H}_{\min}^{0}(A|B)_{\rho} = \mathrm{H}_{\min}(A|B)_{\rho}\;, \quad \mathrm{H}_{\max}^{0}(A|B)_{\rho} = \mathrm{H}_{\max}(A|B)_{\rho} \;.
\end{equation*}

A duality relation between the smooth min and max-entropy was shown in~\cite[Lemma 16]{Tomamichel_2010_Duality}.
\begin{lemma}\label{lem:duality_for_smooth_info_min_max_entropy}
    For any pure state $\rho_{ABC}\in \cS_{\bullet}(ABC)$, and any $0\le\varepsilon < \sqrt{\tr{\rho_{ABC}}}$,
    \begin{equation*}
        \mathrm{H}_{\max}^{\varepsilon}(A|B)_{\rho} = - \mathrm{H}_{\min}^{\varepsilon}(A|C)_{\rho}\;.
    \end{equation*}
\end{lemma}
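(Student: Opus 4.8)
The plan is to reduce the smooth statement to its non-smooth counterpart, which I take as known: for any \emph{pure} $\omega_{ABC}\in\cS_{\bullet}(ABC)$ one has $\mathrm{H}_{\max}(A|B)_{\omega} = -\mathrm{H}_{\min}(A|C)_{\omega}$. This unsmoothed duality follows from Uhlmann's theorem by rewriting $2^{\mathrm{H}_{\max}(A|B)_{\omega}}=\sup_{\sigma_B}F(\omega_{AB},\1_A\otimes\sigma_B)$ and identifying it with $2^{-\mathrm{H}_{\min}(A|C)_{\omega}}=\inf_{\sigma_C}\min\{\mu:\omega_{AC}\le\mu\,\1_A\otimes\sigma_C\}$. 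Granting this, the entire content of the smooth version is that the two smoothing balls $\cB_{\varepsilon}(\rho_{AB})$ and $\cB_{\varepsilon}(\rho_{AC})$ are put into correspondence by passing through \emph{pure} global states close to the pure reference $\rho_{ABC}$.

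The technical engine is the following lift. Given any $\tilde{\rho}_{AB}\in\cB_{\varepsilon}(\rho_{AB})$, I would apply the extension property (\cref{lem:extension_uhlmann_property_for_purefied_distance}), viewing $\rho_{ABC}$ as a state on the bipartition $(AB)\,|\,C$ and $\tilde{\rho}_{AB}$ as the state to be extended, to obtain an extension $\tilde{\rho}_{ABC}$ of $\tilde{\rho}_{AB}$ with $\bP(\tilde{\rho}_{ABC},\rho_{ABC})=\bP(\tilde{\rho}_{AB},\rho_{AB})\le\varepsilon$. Because the reference $\rho_{ABC}$ is pure, the optimal extension produced by Uhlmann's theorem can be taken pure, so $\tilde{\rho}_{ABC}$ is a purification of $\tilde{\rho}_{AB}$. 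Tracing out $B$ and using contractivity of the purified distance under partial trace then gives $\bP(\tilde{\rho}_{AC},\rho_{AC})\le\bP(\tilde{\rho}_{ABC},\rho_{ABC})\le\varepsilon$, so $\tilde{\rho}_{AC}:=\ptr{B}{\tilde{\rho}_{ABC}}\in\cB_{\varepsilon}(\rho_{AC})$. The same argument with the roles of $B$ and $C$ interchanged lifts any element of $\cB_{\varepsilon}(\rho_{AC})$ to a pure global state whose $AB$-marginal lies in $\cB_{\varepsilon}(\rho_{AB})$.

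With the lift in hand, both inequalities follow by monotone manipulations of the infimum and supremum. For ``$\ge$'': for each $\tilde{\rho}_{AB}\in\cB_{\varepsilon}(\rho_{AB})$ take the pure lift $\tilde{\rho}_{ABC}$ above; the non-smooth duality gives $\mathrm{H}_{\max}(A|B)_{\tilde{\rho}}=-\mathrm{H}_{\min}(A|C)_{\tilde{\rho}}\ge-\sup_{\hat{\rho}\in\cB_{\varepsilon}(\rho_{AC})}\mathrm{H}_{\min}(A|C)_{\hat{\rho}}=-\mathrm{H}_{\min}^{\varepsilon}(A|C)_{\rho}$, and taking the infimum over $\tilde{\rho}_{AB}$ yields $\mathrm{H}_{\max}^{\varepsilon}(A|B)_{\rho}\ge-\mathrm{H}_{\min}^{\varepsilon}(A|C)_{\rho}$. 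For ``$\le$'': for each $\hat{\rho}_{AC}\in\cB_{\varepsilon}(\rho_{AC})$ take its pure lift with $AB$-marginal $\hat{\rho}_{AB}\in\cB_{\varepsilon}(\rho_{AB})$; duality gives $-\mathrm{H}_{\min}(A|C)_{\hat{\rho}}=\mathrm{H}_{\max}(A|B)_{\hat{\rho}}\ge\mathrm{H}_{\max}^{\varepsilon}(A|B)_{\rho}$, and taking the infimum over $\hat{\rho}_{AC}$ (equivalently, the negated supremum) gives $-\mathrm{H}_{\min}^{\varepsilon}(A|C)_{\rho}\ge\mathrm{H}_{\max}^{\varepsilon}(A|B)_{\rho}$. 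Together these prove the claimed identity.

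I expect the main obstacle to be the lift's two fine points rather than the inf/sup bookkeeping. First, one must justify that the extension guaranteed by \cref{lem:extension_uhlmann_property_for_purefied_distance} can be taken to be a \emph{pure} state supported on exactly the register $C$: this is where purity of $\rho_{ABC}$ and the generalized-fidelity form of Uhlmann's theorem for sub-normalized states are essential, and one must ensure $\dim\cH_{C}$ is large enough to purify every $\tilde{\rho}_{AB}$ in the ball, which is without loss of generality. Second, the contractivity of the purified distance under partial trace rests on the data-processing inequality for fidelity. The non-smooth duality, if it is not imported as a black box, is the other substantial ingredient and is itself proved by the Uhlmann rewriting indicated in the first paragraph.
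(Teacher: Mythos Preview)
The paper does not actually prove this lemma; it imports it verbatim from~\cite[Lemma~16]{Tomamichel_2010_Duality}. Your argument is precisely the standard proof given there: reduce to the non-smooth duality via Uhlmann's theorem, lift each smoothed marginal to a pure tripartite state in $\cB_{\varepsilon}(\rho_{ABC})$ using the extension property of the purified distance, and conclude both inequalities by the inf/sup bookkeeping you wrote out. The two fine points you flag---that the lifted extension can be taken pure on a $C$ of sufficient dimension, and that contractivity of $\bP$ under partial trace is needed---are exactly the ingredients invoked in the original reference, so your proposal is correct and coincides with the cited proof.
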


These entropies can be seen as different measures of distance from the uniform and independent distribution using different notions of distance. This can be seen most directly for classical distributions, with no conditioning register. For a classical distribution $\set{P_{x}}_{x\in X}$:
$$\mathrm{H}_{\min}(P)=-\log(\max_{x} P_{x}) = -\log \Vert P \Vert_{\infty}\;,$$
and
$$\mathrm{H}_{\max}(P)=\log\left(\sum_{x} \sqrt{P_{x}}\right)^{2} = 2\log \sum_{x}\sqrt{P_{x}} \;.$$
Here, the min-entropy is related to the largest weight of the distribution. As such, it characterizes the best guess one can make with regard to the one-shot output generated by such a distribution. Conversely, the max-entropy is in some sense more related to the spread of the distribution, and hence related to tasks such as data compression~\cite{RenerWolf04smoothreny}.

\subsection{Quantum Computational Model}\label{sec:pre_cm}

In this work, we are concerned with scenarios in which adversaries or information-processing agents are limited in their computational power. Unlike the information-theoretic setting, which assumes unrestricted access to all physically allowed operations, the computational setting restricts the set of feasible operations according to some notion of bounded computational power. In this section, we introduce the quantum computational model used throughout the paper and formalize the notion of bounded computational power in this setting.

To model computational limitations in the quantum setting, we use quantum circuits as our underlying computational model. In this framework, algorithms and physical processes are represented by finite sequences of elementary quantum gates. While the space of physically allowed operations is in principle very large, we focus on processes that can be implemented with limited computational resources. Specifically, we measure computational effort by the circuit size, the total number of gates comprising the circuit. Intuitively, smaller circuits correspond to weaker computational power, so restricting circuit size provides a natural and concrete way to formalize the notion of bounded computational power for information-processing agents or adversaries.

The notion of a universal gate set plays a central role in the circuit model of quantum computation. Intuitively, a gate set is universal if any unitary operation, and hence any quantum computation, can be approximated to arbitrary accuracy by a finite sequence of gates drawn from this set. In the quantum setting, universality ensures that our model of computation is expressive enough to capture all physically realizable quantum processes, up to arbitrarily small error. Note that for some unitary operations, the approximation may require a very large number of gates and thus not be feasible for circuits of bounded size despite the universality. Importantly, universality is a property of the gate set, not of any individual circuit. 

\begin{definition}[Universal Gate Set~\cite{nielsen2010quantumbook}]
    A finite set of quantum gates $\mathcal{G}$ is called universal if for every unitary operation $U$ acting on $n$ qubits and every $\varepsilon > 0$, there exists a quantum circuit composed only of gates from $\mathcal{G}$ that implements a unitary $V$ such that
    \begin{equation*}
        \max_{\ket{\psi}} \Vert (U - V) \ket{\psi} \Vert  \le \varepsilon \;,
    \end{equation*}
    where $\Vert \cdot \Vert $ denotes the 2-norm and the maximum is taken over all pure normalized input states.
\end{definition}

For simplicity, we assume the gates in $\mathcal{G}$ act on one or two qubits; see e.g.~\cite[Section 4.5.3]{nielsen2010quantumbook} for a universal gate set with this property.
The specific choice of one and two-qubit gates for the universal gate set is not all too important in the context of this work; for simplicity, we assume the set includes the CNOT, H, and Z gates.

\begin{definition}[Quantum Circuits~\cite{yao1993quantumcircuitcomplexity}]
    We fix a finite set of universal elementary quantum gates. A quantum circuit is a sequence of quantum gates, that act on qubits, and measurements in the computational basis on any subset of the qubits. A quantum circuit acts on an input state and arbitrarily many qubit ancillas set to $\ketbra{0}{0}$, and may measure any subset of the qubits of the resulting state in the computational basis. The size of a circuit is the number of gates in the circuit, i.e., adding ancillas and applying computational measurements is not included in the computational complexity.\footnote{\diff{We note that intermediate measurements can be deferred to the end using ancillas and CNOT gates~\cite[Chapter 4.4]{nielsen2010quantumbook}. Furthermore, any controlled unitaries $\sum \ketbra{i}{i}\otimes U_i$ that are applied after intermediate measurements must be efficient as a whole; it would not suffice to simply assume that the conditional circuit $U_i$ are efficient.} } We denote by $\cC(s)$, the set of all circuits that can be implemented with at most $s$ gates.
\end{definition}

\begin{remark}    
    Note that we always consider a fixed $s$ for the size of the quantum circuit. One should think of $s$ as being chosen after choosing the computation model, i.e., the elementary universal set of one- and two-qubit gates. Asymptotic behavior is not the main object here; rather, the focus is on computational \emph{power}, not computational \emph{complexity}. Working with a fixed number of gates $s$ facilitates a nuanced approach in establishing the foundations and tradeoffs of quantum computational information theory. Considering asymptotic, non-uniform behaviors (see e.g.~\cite{yao1993quantumcircuitcomplexity}) in future potential applications is straightforward, since the results are stated for any fixed $s$. 
\end{remark}

Classical distributions are a special case of quantum states. We call circuits that map quantum states to classical distributions guessing circuits. Note that any circuit can be made into a guessing circuit without increasing the number of gates, by composing it with a measurement in the computational basis.

\begin{definition}[Guessing Circuits]
    A channel $\cC : \cS_{\circ}(A) \to \mathsf{Distr}(\bits^{n})$ is called a \emph{guessing circuit}, where $\mathsf{Distr}(\bits^{n})$ denotes the set of probability distributions over $\bits^{n}$. 
    We denote the set of all guessing circuits acting on $A$ with a circuit size of at most $s$ by $\cP(s)$.
\end{definition}

\begin{definition}[Distinguisher]
    A channel $\cC : \cS_{\circ}(A) \to \mathsf{Distr}(\bits)$ is called a \emph{distinguisher}, where $\mathsf{Distr}(\bits)$ denotes the set of probability distributions over $\bits$. 
    We denote the set of all distinguishers with circuit size at most $s$ by $\cD_{s}$.
\end{definition}

\begin{restatable}[Computational Distance]{definition}{defcompdist}\label{def:comp_dist}
    Let $\rho_{A},\sigma_{A}\in\cS_{\circ}(A)$ and~$s\in\bbN$. The $s$-computational distance between $\rho_{A}$ and $\sigma_{A}$ is given by:
    \begin{equation*}
        \dist_s(\rho_{A},\sigma_{A}) \coloneq \max_{\substack{\cC \in \cD(s)}} \abs{\pr{\cC(\rho_{A})=1} - \pr{\cC(\sigma_{A})=1}} \;.
    \end{equation*}
    We say that $\rho_{A}$ and $\sigma_{A}$ are $(s,\varepsilon)$-computationally indistinguishable if $\dist_s(\rho_{A},\sigma_{A}) \le \varepsilon$.
\end{restatable}

\subsection{Quantum Computationally Hard Tasks}
This section briefly introduces basic concepts from computational complexity theory and their relevance to information theory. In particular, we discuss the importance of functions that are easy to evaluate but hard to invert, which are central to both classical and quantum cryptographic primitives. We direct the interested reader to the works~\cite{bernstein2017postqcrypt,talbot2006complexityandcrypto,yao1993quantumcircuitcomplexity,nielsen2010quantumbook,morimae2024quantumunpredictability} for a more in-depth treatment of these topics.

In principle, quantum mechanics allows for a vast range of physical processes. However, not all such processes can be realized efficiently using quantum circuits. Even fully fault-tolerant quantum computers, if and when they are built, will be subject to some fundamental limitations on what can be computed using bounded resources. These limitations reflect inherent boundaries of efficient computation; In particular, it is well known that approximating a generic $n$-qubit unitary to constant precision typically demands computational resources that scale exponentially with $n$~\cite{jiaHayHaystackExplicitExamplesExponentialQuantumCircuitComplexity2023}.
This realization that not all computations can be easily executed is central to our understanding of computational hardness in the quantum setting. Since classical functions can be embedded as special cases of quantum channels, such limitations apply to both classical and quantum computation.

In our setting, we consider quantum circuits composed of a finite number of gates taken from a fixed universal gate set. For any fixed circuit size $s$, there exist quantum channels that cannot be implemented or approximated by any quantum circuit of size $s$. This idea is discussed, for example, in~\cite[Section~4.5.5]{nielsen2010quantumbook}, and captures the basic intuition that most physically allowed transformations are too complex to be efficiently realizable.

To illustrate this idea concretely, we consider the task of inverting a permutation. For any fixed gate bound $s$, one can choose the input length $n$ large enough so that the number of $n$-bit permutations exceeds the number of quantum circuits of size $s$ by a large margin. As a result, there must exist a permutation whose inverse cannot be computed, or even guessed with noticeable probability, by any such circuit. The following lemma formalizes this observation and provides a simple, unconditional example of computational hardness within our model.

\begin{restatable}{lemma}{HardPermutations}
    \label{lem:counting_hard_invers_fixed_s}
    Let $G$ be a universal gate set consisting of $|G|$ gates, with gates acting on at most $m$ qubits, and let
    $s\in\mathbb{N}$ be fixed. Then there exists an $n^\prime$ such that for all $n \ge n^\prime$ there is a permutation
    \begin{equation*}
        f:\bits^{n}\to \bits^{n} \;,
    \end{equation*}
   for which every quantum circuit $C$ using at most $s$ gates from $G$ satisfies
    \begin{equation*}
        \Pr_{y\leftarrow\bits^{n}} \left[C(y)=f^{-1}(y)\right] \le 2^{-n/2} \;.
    \end{equation*}
\end{restatable}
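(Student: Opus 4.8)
The plan is to prove this by the probabilistic method / a counting argument, exploiting the mismatch between the doubly-exponential number of permutations of $\bits^n$ and the comparatively tiny number of distinct size-$\le s$ circuits. Concretely, I would draw $f$ uniformly at random among permutations of $\bits^n$, show that for each \emph{fixed} guessing circuit the probability (over $f$) of guessing $f^{-1}$ with advantage above $2^{-n/2}$ is extremely small, and then union bound over all circuits to conclude that a single $f$ defeats all of them simultaneously.

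First I would bound the number of relevant circuits. Since each of the $\le s$ gates acts on at most $m$ qubits, at most $N := n + sm$ qubits are ever touched (the $n$ input qubits together with those ancillas some gate acts on); untouched ancillas only contribute fixed $0$ outputs. A guessing circuit outputting $n$ bits is then specified by its gate sequence, giving at most $(|G|\, N^m)^s$ choices, and by the computational-basis measurement pattern producing the output, giving at most $(N+1)^n$ choices. Hence the number of circuits to consider is $\mathcal N \le (|G|\, N^m)^s (N+1)^n$, and the key point is that $\log_2 \mathcal N = O(n\log n)$ for fixed $s,m,|G|$, since the measurement-pattern factor dominates.

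Next, fix a circuit $C$, described by its conditional output distribution $p_C(z\mid y) = \pr{C(y)=z}\in[0,1]$, and write the success probability as $S_C(f) = 2^{-n}\sum_y p_C(f^{-1}(y)\mid y)$. Because $f^{-1}(y)$ is uniform on $\bits^n$ for each fixed $y$, and $\sum_z p_C(z\mid y)\le 1$, the mean satisfies $\mathbb{E}_f[S_C(f)]\le 2^{-n}$. I would then bound the upper tail $\Pr_f[S_C(f) > 2^{-n/2}]$, i.e. the probability that the permutation sum $W := \sum_y p_C(f^{-1}(y)\mid y)$ exceeds $2^{n/2}$ while having mean at most $1$ and summands in $[0,1]$. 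Applying a multiplicative Chernoff / Bernstein concentration inequality for sums over a uniformly random permutation (equivalently, for the negatively associated ``agreement'' indicators that arise in the deterministic case, where $W$ counts fixed points of $f\circ c$) yields a bound of the form $2^{-\Omega(2^{n/2})}$. Finally, a union bound gives that the random $f$ fails against some circuit with probability at most $\mathcal N \cdot 2^{-\Omega(2^{n/2})} = 2^{O(n\log n)}\cdot 2^{-\Omega(2^{n/2})}$, which is below $1$ once $n\ge n'$ for a threshold $n'$ depending only on $s,m,|G|$; hence a good $f$ exists.

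The main obstacle is the concentration step. The naive Markov bound only gives a per-circuit failure probability of $2^{-n}/2^{-n/2} = 2^{-n/2}$, and even a Chebyshev estimate gives only about $2^{-n}$; neither beats the $2^{O(n\log n)}$ count of circuits, so the union bound would fail. The resolution is that the summands are bounded in $[0,1]$ while their total mean is only $\le 1$, so the appropriate tool is a \emph{multiplicative} Chernoff-type inequality for random permutations, which converts this small-mean/bounded-summand structure into a doubly-strong $2^{-\Omega(2^{n/2})}$ tail and thereby leaves comfortable slack against $\log_2\mathcal N = O(n\log n)$.
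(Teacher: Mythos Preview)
Your overall strategy---draw $f$ uniformly at random, bound the per-circuit failure probability, and union bound over all size-$\le s$ circuits---is exactly the paper's approach. The difference lies in the circuit count and, consequently, in the concentration tool.

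The paper's count is only $N_{n,s}\le (|G|(n+ms)^m)^s$, which is polynomial in $n$ for fixed $s,m,|G|$; it does \emph{not} include your measurement-pattern factor $(N+1)^n$. With a polynomial count, plain Markov already gives $\Pr_f[p_C(f)>2^{-n/2}]\le 2^{-n/2}$, and $\text{poly}(n)\cdot 2^{-n/2}<1$ for large $n$, so no Chernoff is needed. The paper is effectively working under the convention that a guessing circuit outputs the measurement of a fixed set of $n$ wires (say, the input wires), so that the only freedom counted is the gate sequence. Under that convention the argument is complete and elementary.

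Your route is not wrong: if one insists on letting the output-wire assignment be free (which the paper's general circuit definition could be read to allow), then your $2^{O(n\log n)}$ count is the right one, Markov genuinely fails, and a multiplicative Chernoff bound for permutation statistics (via negative association) is the natural fix. That buys robustness to modeling details, at the cost of invoking a non-trivial concentration inequality that you should state precisely and cite. But for the statement as the paper intends it, you are working harder than necessary; adopting the fixed-output convention collapses the whole argument to Markov plus a one-line union bound.
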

For a detailed proof, see Appendix~\ref{Appendix_hard_permutations_counting_compleatness_proofs}. The proof uses a counting argument and the probabilistic method to show the existence of such functions, it does not give us a constructive way to find such functions.

Note that Lemma~\ref{lem:counting_hard_invers_fixed_s} does not require any bound on the size of circuits that evaluate $f$ in the forward direction. For applications, we would prefer to work with concrete functions that can be evaluated in the forward direction efficiently, and for which finding an inverse is computationally hard. Such functions are known as one-way functions and are considered to be a basic tool in computational complexity-based cryptography~\cite{impagliazzo1989onewayfunc}. The existence of such one-way functions would imply $\mathrm{P}\neq\mathrm{NP}$. It is currently an open question in computer science whether $\mathrm{P}\neq\mathrm{NP}$ or $\mathrm{P}=\mathrm{NP}$. As such, the existence of one-way functions is generally based on hardness assumptions, see e.g.~\cite{cryptoeprint:2022/278,regev2009latticesLWE}, which are standard in complexity-based cryptography.

It is also interesting to ask whether there are functions that can be evaluated efficiently in the forward direction by a classical computer, such that even a quantum computer would not be able to find an inverse. This requirement is often called post-quantum cryptography or post-quantum security. There are, in fact, problems, such as learning with errors~\cite{regev2009latticesLWE}, that are believed to be hard for quantum computers to solve, and are useful for generating cryptographic primitives such as pseudo-random functions~\cite{banerjee2012pseudorandom}.

\section{Quantum Computational Min- and Max-Entropies}\label{sec:comp_min_max_entropies}

In this section, we define \textit{conditional}  quantum computational entropies on bipartite states, $\rho_{AB}$, where the computational min-entropy and max-entropy are denoted by $\hmincompc[s](A|B)_{\rho}$ and $\hmaxcompc[s](A|B)_{\rho}$, respectively.
We later discuss how these quantities are reduced to the non-conditional variants when the system $B$ becomes trivial.

\subsection{Computational Min-Entropy}

Our first contribution in this work is a definition of a new entropy-- the quantum computational min-entropy.
Looking at the formal definition of the min-entropy given in Definition~\ref{def:min-entropy-math}, it is not clear how one would incorporate the computational aspect to derive a new meaningful quantity.
Instead, we start by looking at the \emph{operational meaning} of the quantum information-theoretic min-entropy.
In~\cite[Theorem 2]{KRS09OperationalMeaningEntropy}, it was proven that for any bipartite quantum state $\rho_{AB}$, the conditional min-entropy
\begin{equation}\label{eq:min-entropy-op}
    \mathrm{H}_{\min}(A|B)_{\rho} = -\log q_{\text{corr}}(A|B)_{\rho} \;,
\end{equation}
where,
\begin{equation}
    q_{\text{corr}}(A|B)_{\rho} \coloneq d_A \max_{\cE} F \left( (\mathbb{I}_A \otimes \cE)(\rho_{AB}), \proj{\Phi_{AA'}} \right) \;,
\end{equation}
with $d_A$ the dimension of $A$, $\ket{\Phi_{AA'}}$ the maximally entangled state, and the maximum is taken over all quantum operations $\cE$ mapping $B$ to $A'$.

With the above result of~\cite{KRS09OperationalMeaningEntropy} in mind, we can see Equation~\eqref{eq:min-entropy-op} as the one \emph{defining} the min-entropy. That is, the min-entropy can be defined by the maximum achievable singlet fraction using an operation $\cE$ acting on the subsystem $B$.
This points to a new approach to tackling the computational aspect. Rather than evaluating every possible quantum operation $\cE$, we now permit only those executed via a quantum circuit of bounded size $s$.
We define the quantum computational min-entropy $\hmincompc[s](A|B)_{\rho}$ as follows:\footnote{For all entropies, we use the letter $c$ to indicate that we discuss a computational entropy.}
\begin{definition}[Quantum Computational Min-Entropy]\label{def:nosmooth_quantum_comp_entropy}
    For any bipartite quantum state $\rho_{AB}\in \cS_{\bullet}(AB)$, and $s \in \mathbb{N}$, we say that
    \begin{equation*}
        \hmincompc[s](A|B)_{\rho} \coloneq -\log d_{A} \max_{\cE^s_{B\to A'} }  F\left( (\mathbb{I}_A \otimes \cE^s)(\rho_{AB}), \proj{\Phi_{AA'}} \right) \;,
    \end{equation*}
    where the optimization is over the set of all quantum channels from $B$ to $A'$ that can be implemented by circuits of size at most $s$. 
\end{definition}
We illustrate the computational task associated with this definition in~\cref{fig:Es-circuit-min-entropy}.

\begin{figure}[ht]
    \centering
    \begin{quantikz}[row sep=0.3cm]
        \lstick[wires=8]{$\rho_{AB}$} & \qw & \qw                                      & \qw \rstick[wires=4]{$A$}  \\
        & \qw & \qw                                      & \qw                        \\
        & \qw & \qw                                      & \qw                        \\
        & \qw & \qw                                      & \qw                        \\
        & \qw & \gate[wires=4]{\mathcal{E}^s_{B \to A'}} & \qw \rstick[wires=4]{$A'$} \\
        & \qw & \qw                                      & \qw                        \\
        & \qw & \qw                                      & \qw                        \\
        & \qw & \qw                                      & \qw
    \end{quantikz}
    \quad $\approx$ \quad $\proj{\Phi_{AA'}}$
    \caption{A circuit diagram for the operational meaning of the computational min entropy. A channel $\mathcal{E}^s$ (circuit of size~$s$) acting only on the $B$ register. The goal of the circuit is to maximize the fidelity of the resulting global state with the maximally entangled state $F(\rho_{AA'},\proj{\Phi_{AA'}})$.}
    \Description[Computational min entropy diagram]{A circuit diagram for the operational meaning of the computational min entropy. A channel $\mathcal{E}^s$ (circuit of size~$s$) acting only on the $B$ register. The goal of the circuit is to maximize the fidelity of the resulting global state with the maximally entangled state $F(\rho_{AA'},\proj{\Phi_{AA'}})$.}
    \label{fig:Es-circuit-min-entropy}
\end{figure}

A few remarks are in order.
\begin{enumerate}
    \item Only the computational power of the circuit $\cE^s_{B\to A'}$ is considered. The state~$\rho_{AB}$ itself does not need to be efficient to prepare in order for the entropy to be well-defined. However, in most applications, it would make sense to impose constraints on the computational power required to prepare the state, whether locally or non-locally~\cite{gutoski2007toward,arnonfriedman2023computationalentanglementtheory}. No modification in the definitions of the entropies is required.
    \item In contrast to the information-theoretic case, the choice of basis for the maximally entangled $\ket{\Phi_{AA'}}$ state is important, as different choices of basis would result in differences in the size of the circuit. We choose the computational basis for both $A$ and $A'$.
\end{enumerate}

Next, we aim to adopt a standard approach in quantum information theory by examining a smooth version of our proposed entropy. Smooth entropies are crucial for applications in quantum information theory. These are defined by evaluating the entropy on a state $\tilde{\rho}_{AB}$, which is close to the original state $\rho_{AB}$ as measured by the purified distance (see Definition~\ref{def:pur_dist}). The use of purified distance ensures that the approximation is done while maintaining the essential characteristics of the original definitions.

By smoothing, the entropies become more robust and practically useful in real-world scenarios.
One intuitive reason is that the quantum states in realistic scenarios are never described with perfect precision, and small errors are tolerated (as long as they are carefully understood and taken into account) in applications.
Bearing these insights in mind, it becomes abundantly clear that computational power remains irrelevant when it comes to the process of smoothing. The necessity of weighing computational factors applies solely to the circuit, or the specific task at hand, rather than to the smoothing process, which is conducted methodically as a distinct mathematical approximation.
We thus choose to perform the smoothing step with the purified distance here as well.

To this end, we introduce a smoothing parameter $\varepsilon \ge 0$ to extend Definition~\ref{def:nosmooth_quantum_comp_entropy} to its smooth counterpart. Then:

\begin{definition}[Smooth Quantum Computational Min-Entropy]\label{def:smooth_quantum_comp_entropy}
    For any bipartite quantum state $\rho_{AB}\in \cS_{\bullet}(AB)$, and $\varepsilon \ge 0$, $s \in \mathbb{N}$, we say that
    \begin{equation}
        \hmincompc(A|B)_{\rho} \coloneq -\log d_{A} \min_{\tilde{\rho}\in \cB_{\varepsilon}(\rho)}\max_{\cE^s \in \cC(s)}  F\left( (\mathbb{I}_A \otimes \cE^s)(\tilde{\rho}_{AB}), \proj{\Phi_{AA'}} \right) \;,
    \end{equation}
    where the optimization is over all states $\tilde{\rho}\in \cB_{\varepsilon}(\rho)$ and all quantum circuits $\cE^s_{B\to A'}$ acting on $B$, of size at most $s$.
\end{definition}

We point out that when the circuit size is unbounded, i.e., $s\rightarrow\infty$, we retrieve the information-theoretic min-entropy. That is, $\lim_{s\to\infty} \hmincompc(A|B)_{\rho}  = \mathrm{H}_{\min}^{\varepsilon}(A|B)_{\rho}$.

\subsubsection{Non-Conditional Entropy}
Although we usually consider bipartite systems and conditional entropies, our definitions can be straightforwardly applied to a single-party setting. Formally, our definitions require specifying a bipartite state. To define the computational entropy of a state $\rho_{A}$ by itself, $\hmincompc(A)_{\rho}$, we can instead consider the case when $B$ is empty. In such cases, one should simply consider the state $\rho_{AB}= \rho_{A} \otimes \ketbra{0}{0}_{B}$, where $\mathcal{H}_{B}$ is a one-dimensional Hilbert space spanned by $\{ \ket{0} \}$. Here, $B$ contains a trivial ancilla register, which can always be instantiated. 

\begin{definition}[Non-Conditional Quantum Computational Min-Entropy]
    For any quantum state $\rho_{A}\in \cS_{\bullet}(A)$, $\varepsilon \geq 0$, and $s \in \mathbb{N}$, we say that
    \begin{equation*}
        \hmincompc(A)_{\rho} \coloneq \hmincompc(A|B)_{\rho} \;,
    \end{equation*}
    where $\rho_{AB}= \rho_{A} \otimes \ketbra{0}{0}_{B}$ and $B$ is a one-dimensional Hilbert space.
\end{definition}

In certain cases, this unconditional entropy is close to the information-theoretic min-entropy. If $A$ is classical for example, then the circuit that is applied to B simply generates the most likely classical output of $A$. In our setting, this is feasible, as generating any single bit-string can be done via an efficient circuit. However, this argument only holds if $A$ is classical; it will not carry over to the fully quantum or bipartite settings.

\subsubsection{Classical-Quantum States}
Let us examine the case in which the state $\rho_{XB}$ is a classical-quantum (cq-) state, i.e. $X$ describes a classical register.
It is well known that in the information-theoretic setup, the min-entropy is determined by the maximal guessing probability of the value of $X$, while having access to the quantum system $B$~\cite{KRS09OperationalMeaningEntropy}. That is,
\begin{equation}
    \mathrm{H}_{\min}(X|B)_{\rho} = -\log(p_{\text{guess}}(X|B)_{\rho}) \;,
\end{equation}
where, for POVMs $\set{E_{B}^{x}}_{x}$,
\begin{equation}
    p_{\text{guess}}(X|B)_{\rho} = \max_{\set{E_{B}^{x}}_{x}} \sum_{x} p(x)\tr{E_{B}^{x}\rho_{B}^{x}} \;.
\end{equation}

With this operational meaning in mind, two of the current authors have recently defined the computational analog of the min-entropy, coined the \emph{quantum computational unpredictability entropy}~\cite{arnon2025computational}, studied its properties and relevance for quantum cryptography. The entropy is defined for cq-states only (while Definitions~\ref{def:nosmooth_quantum_comp_entropy} and~\ref{def:smooth_quantum_comp_entropy} are fully quantum), as follows:

\begin{definition}[Quantum Computational Unpredictability Entropy] \label{def:notsmooth_unpredictability_entropy}
    For any cq-state ${\rho_{XB}\in \cS_{\bullet}(XB)}$, and $s\in\bbN$.
    We say that
    \begin{equation*}
        \hunps[s](X|B)_{\rho} \coloneq -\log(\max_{\cE \in \cP(s)}   \pr{\cE(\rho_{B}^x) = x}) \; ,
    \end{equation*}
    where one is optimizing over $\cP(s)$, the set of all guessing circuits of size at most $s$.
\end{definition}

Similarly, it can be smoothed by considering the cq-states in the purified $\varepsilon$-ball around the state
\begin{definition}[Smooth Quantum Computational Unpredictability Entropy]\label{def:smooth_unpredictability_entropy}
    For any cq-state $\rho_{XB}\in \cS_{\bullet}(XB)$, and $\varepsilon \geq 0,s\in\bbN$.
    We say that
    \begin{equation*}
        \hunps[s]^{\varepsilon}(X|B)_{\rho} \coloneq -\log(\min_{\tilde{\rho}  \in \cB^{\mathrm{cq}}_{\varepsilon}(\rho) } \max_{\cE \in \cP(s)}   \pr{\cE(\tilde{\rho}_{B}^x) = x}) \; ,
    \end{equation*}
    where one is optimizing over $\cP(s)$, the set of all guessing circuits of size at most $s$, and $\cB^{\mathrm{cq}}_{\varepsilon}(\rho)$ the set of cq-states in purified distance at most $\varepsilon$ from $\rho_{AB}$.
\end{definition}

The quantum computational unpredictability entropy $\hunps[s](X|B)$ is both the quantum counterpart to the classical unpredictability entropy~\cite{HLR07SepPseudoentropyfromCompressibility} as well as the computational version of the quantum min-entropy~\cite{KRS09OperationalMeaningEntropy}.
Furthermore, it satisfies the properties typically associated with the quantum min-entropy, adapted to reflect the computational constraints, as shown in~\cite{arnon2025computational}.

The operation $\cE^s$ can be interpreted as a guessing strategy so that the probability of correctly guessing $X$ given $B$ when $X = x$ is given by $ \bra{x} \cE(\rho_B^x) \ket{x}$.
Thus, the quantum computational unpredictability entropy is of special relevance for cryptographic tasks in which the adversary holding the system $B$ has a limited computational power. Notably, it was shown in~\cite{arnon2025computational} that the proposed entropy can be used in the context of pseudo-randomness extraction with quantum side information.

The quantum computational min-entropy and quantum computational unpredictability entropy are closely related. For cq-states, we show in Theorem~\ref{Lem:PropertiesCompMinEnt} below that these two quantities are  equal,  $\hmincompc[s](X|E)_{\rho}=\hunps[s](X|E)_{\rho}$ as one would hope. Moreover, their \textit{smoothed} versions are equivalent up to minor differences in circuit size. The difference in the number of gates ($s$) is bounded by some fixed constant times the log of the dimension of $X$ (and is constant depth); see proofs in~\cref{app_subsec_cq-eqauive_proofs}.
As such, one should view $\hmincompc[s](A|B)$ as the fully quantum version of the quantum computational unpredictability entropy, which was only defined for cq-states.

\subsubsection{Comparison to Previous Work}

We conclude this section by discussing alternative notions of quantum computational min-entropy proposed in previous work~\cite{CC17Computationalminentropy}.

A complementary approach to defining computational entropy in the quantum setting was developed by~\cite{CC17Computationalminentropy}, where the authors adapt classical notions of HILL entropy~\cite{HILL1999pseudorandom} to the quantum domain. Their work focuses on quantum analogues of HILL and metric entropy, both of which aim to characterize computational entropy via indistinguishability from high min-entropy states. These definitions follow a different philosophical route than our operational approach: rather than bounding what a computationally restricted adversary can learn, they ask whether the state is computationally indistinguishable from an ideal one with high entropy.

The authors introduce four definitions of quantum pseudo-entropy for the conditional quantum setting: strict HILL, relaxed HILL, strict metric, and relaxed metric. The distinction between HILL and metric lies in the order of quantifiers. For HILL entropy, there must exist a single ideal state with high min-entropy that is indistinguishable from the given state to all computationally bounded distinguishers. In contrast, metric entropy allows for a different high min-entropy ideal state for each distinguisher. The difference between the strict and relaxed variants is related to constraints on the conditioning register, which are not particularly relevant to the discussion and therefore omitted.

For the relaxed HILL entropy, the authors establish a leakage chain rule for a restricted class of ccq states.
Formally, the relaxed HILL entropy is defined as follows~\cite[Definition 4.4]{CC17Computationalminentropy}
\begin{definition}
    A quantum state $\rho_{AB}\in \cS_{\circ}(AB)$ has $(s,\varepsilon)$-relaxed HILL entropy at least $k$, i.e.\ $\mathrm{H}_{s,\varepsilon}^{\hill}(A|B)\ge k$,  if there is another state $\sigma_{AB}\in \cS_{\circ}(AB)$ such that $\textrm{H}_{\min}(A|B)_{\sigma} \ge k$ and $\rho_{AB},\sigma_{AB}$ are $(s,\varepsilon)$ computationally indistinguishable.
\end{definition}

The relaxed HILL entropy satisfies the following leakage chain rule in the ccq case~\cite[Theorem 6.2.]{CC17Computationalminentropy}.

\begin{lemma}
    For any $n,m,\ell,s'\in \mathbb{N}$ and $\varepsilon>0$ the following holds for $s=\mathrm{poly}(s',n,2^\ell,1/\varepsilon)$ and $\varepsilon'=\mathrm{O}(\varepsilon)$. Let $\rho_{XZB}\in \cS_{\circ}(XZB)$ be a ccq state with $n+m$ classical bits and $\ell$ qubits. If~${\mathrm{H}_{s,\varepsilon}^{\hill}(X|Z) \ge k}$, then~$\mathrm{H}_{s',\varepsilon'}^{\hill}(X|ZB) \ge k-\ell$.
\end{lemma}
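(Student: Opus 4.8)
The plan is to produce, out of the witness state for the left-hand side, an explicit witness state for the right-hand side and then to verify the two defining conditions of relaxed HILL entropy (a min-entropy lower bound and computational indistinguishability) for it. Let $\sigma_{XZ}\in\cS_{\circ}(XZ)$ be the state guaranteed by $\mathrm{H}_{s,\varepsilon}^{\hill}(X|Z)\ge k$, so that $\mathrm{H}_{\min}(X|Z)_{\sigma}\ge k$ and $\rho_{XZ}$ is $(s,\varepsilon)$-computationally indistinguishable from $\sigma_{XZ}$. The difficulty is that this witness lives only on $XZ$ and says nothing about the $\ell$-qubit leakage register $B$, so my first step is to manufacture a leakage for it. I would invoke a \emph{quantum leakage simulation} statement: for the real state $\rho_{XZB}$ and the distinguisher bound $s'$, there is a single channel $\mathrm{Sim}$ (a circuit reading the classical registers $X,Z$ and outputting $B$) of size $\poly(2^{\ell},s',1/\varepsilon)$ such that $\rho_{XZB}$ is $(s',\varepsilon)$-indistinguishable from $\Lambda(\rho_{XZ})$, where $\Lambda_{XZ\to XZB}$ copies the classical registers and writes $\mathrm{Sim}(X,Z)$ into a fresh $B$. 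I then define the candidate witness by applying the \emph{same} map to the ideal state, $\sigma_{XZB}:=\Lambda(\sigma_{XZ})$.

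For the min-entropy condition I would show $\mathrm{H}_{\min}(X|ZB)_{\sigma}\ge k-\ell$. Since $\Lambda$ only appends $B$, the marginal $\ptr{B}{\sigma_{XZB}}=\sigma_{XZ}$ is unchanged. Because $X$ and $Z$ are classical, $\sigma_{XZB}=\sum_{x,z}q(x,z)\proj{x}\otimes\proj{z}\otimes\beta_B^{x,z}$ is block diagonal in these registers, so from $\beta_B^{x,z}\le\1_B$ one gets, blockwise, the operator inequality $\sigma_{XZB}\le\sigma_{XZ}\otimes\1_B$. Combining this with the defining certificate $\sigma_{XZ}\le 2^{-k}\,\1_X\otimes\tau_Z$ and writing $\1_B=2^{\ell}\omega_B$ for the normalized maximally mixed $\omega_B$ gives
\[
\sigma_{XZB}\le\sigma_{XZ}\otimes\1_B\le 2^{-k}\,\1_X\otimes\tau_Z\otimes\1_B=2^{-(k-\ell)}\,\1_X\otimes(\tau_Z\otimes\omega_B),
\]
which is exactly a certificate for $\mathrm{H}_{\min}(X|ZB)_{\sigma}\ge k-\ell$. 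I emphasize that this clean ``lose at most $\log\dim B=\ell$ bits'' step relies on the classicality of $XZ$; in a fully quantum register the naive dimension bound would cost $2\ell$.

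For the indistinguishability condition I would bound the computational distance $\dist_{s'}(\rho_{XZB},\sigma_{XZB})$ by a triangle inequality through the hybrid $\Lambda(\rho_{XZ})$. The first leg, $\rho_{XZB}$ versus $\Lambda(\rho_{XZ})$, is at most $\varepsilon$ by the simulation guarantee. For the second leg, $\Lambda(\rho_{XZ})$ versus $\Lambda(\sigma_{XZ})$, I would use data processing for the computational distance: any size-$s'$ distinguisher for these two outputs can be precomposed with $\Lambda$ (the simulator $\mathrm{Sim}$ plus the classical-copy gadget) to give a distinguisher for $\rho_{XZ}$ versus $\sigma_{XZ}$ with the \emph{same} advantage and of size $s'+\poly(2^{\ell},s',1/\varepsilon)+\poly(n)\le s=\poly(s',n,2^{\ell},1/\varepsilon)$; by hypothesis this advantage is at most $\varepsilon$. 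Summing the two legs yields $\dist_{s'}(\rho_{XZB},\sigma_{XZB})\le 2\varepsilon=:\varepsilon'=\bO(\varepsilon)$, which is the second HILL condition and completes the construction of the witness.

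The main obstacle is the leakage simulation statement, which is where all the quantitative loss originates: the $2^{\ell}$ blow-up in $s$ and the appearance of $s'$ inside the $\poly$. I would prove it by a min-max argument, setting up a two-player zero-sum game between a ``faker'' choosing a distribution over leakage-generating channels into the $2^{\ell}$-dimensional register $B$ and a distinguisher from $\cD(s')$, with payoff the distinguishing advantage; Sion's min-max theorem exchanges the order of optimization, and a boosting/sparsification argument then lets the optimal faker be implemented by combining only $\poly(1/\varepsilon)$ distinguisher-derived operations, which fixes $s=\poly(s',n,2^{\ell},1/\varepsilon)$. Everything else, namely the dimension bound and the data-processing accounting, is routine. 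Since this simulation lemma is already established in~\cite{CC17Computationalminentropy}, one may alternatively simply invoke it and assemble the three steps above.
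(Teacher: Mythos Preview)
The paper does not supply its own proof of this lemma: it is quoted verbatim as \cite[Theorem~6.2]{CC17Computationalminentropy} and used only as a point of comparison for the authors' own (different) computational min-entropy. So there is no ``paper's proof'' to compare against here.

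That said, your outline is the standard route to this kind of HILL leakage chain rule and matches the architecture of the proof in \cite{CC17Computationalminentropy}: a quantum leakage-simulation lemma (this is indeed the technical core, proved there via a min-max/boosting argument exactly as you sketch, and responsible for the $2^{\ell}$ and $s'$ dependence in $s$), followed by pushing the simulator onto the ideal witness, a dimension bound $\sigma_{XZB}\le\sigma_{XZ}\otimes\1_B$ exploiting that $X,Z$ are classical (whence the loss is $\ell$ rather than $2\ell$), and a two-step hybrid for indistinguishability. Your accounting of the parameters ($s=\poly(s',n,2^{\ell},1/\varepsilon)$ and $\varepsilon'=2\varepsilon$) is consistent with that reference. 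The only thing to flag is that you are \emph{assuming} the simulation lemma rather than proving it in full; since you explicitly defer to \cite{CC17Computationalminentropy} for it, this is fine, but be aware that the quantum version (with an $\ell$-qubit rather than $\ell$-bit leakage register) is genuinely more delicate than its classical analogue and is precisely the contribution of that paper.
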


In contrast, for~\cref{def:nosmooth_quantum_comp_entropy}, we define computational entropy directly by limiting the computational power of the adversary attempting to learn information about the state, rather than via indistinguishability from ideal states.
Our notion of computational min-entropy fulfills a leakage chain rule in the fully quantum setting~\cref{lem:chain_rule_quantum_comp_entropy}, with only linear loss in the computational parameter and no loss in the smoothing error.

Under standard computational assumptions, we see that HILL entropy and the computational min entropy presented in this work are different, and neither entropy dominates the other. This can be seen from the following two examples. First, let us define pseudo-random generators, an illustrative object for the advantage of indistinguishability-based computational entropies.

\begin{definition}[Pseudo-Random Generator (PRG)]
    A function $G: \bits^{n} \to \bits^{m}$ is called a $(s_{\prg},\varepsilon_{\prg})$ pseudo-random generator (PRG) if it is efficiently computable, length expanding ($m>n$) and the output distribution on a random input is $(s_{\prg},\varepsilon_{\prg})$ computationally indistinguishable from the uniform distribution.
\end{definition}

Under standard cryptographic assumptions, PRGs exist, with security for both classical and quantum distinguishers~\cite{banerjee2012pseudorandom}. Usually, the definition of PRGs is asymptotic, defining an infinite family of functions parametrized by $n$. The efficiency and $m,s_{\prg},\varepsilon_{\prg}$ should be seen as functions of $n$ in that case.

A strength of indistinguishability-based entropy notions is their suitability for reasoning about some computational objects, such as pseudo-random generators. For example, HILL entropy assigns high pseudo-entropy to the output of a pseudo-random generator (PRG) applied to a short random seed, since this output is by definition computationally indistinguishable from a uniformly random string. However, under our definition, the output is not highly unpredictable, as an efficient adversary can guess a random seed and compute the PRG output using bounded computational resources. Formally, for a maximally mixed state $\rho_{S}$ and $\Prg:\bits^{n} \to \bits^{m}$ a $(s_{\prg},\varepsilon_{\prg})$ pseudo-random generator and $\sigma = \rho_{\Prg(S)}$. Since $\Prg$ can be evaluated using less then $s_{\prg}$ gates we see:

\begin{equation*}
    \mathrm{H}_{s_{\prg},\varepsilon_{\prg}}^{\hill}(\Prg(S))_{\sigma} \approx m \;, \quad \text{and} \quad  \hmincompc[\left(s_{\prg},\varepsilon_{\prg}\right)](\Prg(S))_{\sigma} \approx n \;.
\end{equation*}
Since $m>n$, this gives an example for states such that
\begin{equation*}
    \mathrm{H}_{s_{\prg},\varepsilon_{\prg}}^{\hill}(X)_{\rho} > \hmincompc[\left(s_{\prg},\varepsilon_{\prg}\right)](X)_{\rho}\;.
\end{equation*}

On the other hand, one-way permutations~\cite{kaliski1991onewayperm} provide an example in the other direction. One-way permutations are families of permutations $\set{f}_{f\in F}$ that can be efficiently computed but are computationally hard to invert on average with non-negligible probability.

For a random variable $X\in \bits^{n}$ and a family $\set{f}_{f\in F}$ of one-way permutations $f :\bits^{n}\to\bits^{n}$, the distribution of $(x,f(x),f)$, with uniform probability over $x\in X$ and choice of $f\in F$ would not have high HILL entropy $\mathrm{H}^{s,\varepsilon}_{\hill}(X|F(X),F)$. The joint conditional distribution $(x,f(x),f)$ is distinguishable from distributions where $x$ is not determined by $(f(x),f)$ for any adversary that can evaluate $f$. To distinguish between $(y,f(x),f)$ and $(x,f(x),f)$, all a distinguisher needs to do is to evaluate $f(y)$ or $f(x)$. Since $f$ is a one-way permutation, evaluating it in the forward direction is computationally easy. On the other hand, since $F$ is a family of \emph{one-way} permutations, guessing $x$ from $f(x),f$ is computationally hard, meaning $\hmincompc(X|F(X),F)$ would be high. This thus demonstrates that for such states
\begin{equation*}
    \mathrm{H}_{s_{\prg},\varepsilon_{\prg}}^{\hill}(X|F(X),F)_{\rho} < \hmincompc[\left(s_{\prg},\varepsilon_{\prg}\right)](X|F(X),F)_{\rho}\;.
\end{equation*}

\subsubsection{Properties of the Computational Min-Entropy}
The information-theoretic smooth min-entropy satisfies a range of useful mathematical properties~\cite{Tomamichel_2016}, as well as operational uses~\cite{KRS09OperationalMeaningEntropy}, making it an ideal quantity to work with. In this section, we show that several of these properties can be extended to the computational setting. Moreover, we include additional properties, which relate quantum computational min-entropy to both its information-theoretic counterpart as well as to the quantum computational unpredictability entropy. The following lemma provides a useful overview of our results. 

\begin{restatable}{theorem}{PropertiesCompMinEnt}\label{Lem:PropertiesCompMinEnt}
    The smooth computational min-entropy, $\hmincompc(A|B)_{\rho}$, satisfies
    \begin{enumerate}
        \item (Monotonicity in $s$): For any $\rho_{AB}\in \cS_{\bullet}(AB)$, $\varepsilon \geq 0$, and $s^\prime \geq s$,
              \begin{align}
                  \hmincompc(A|B)_{\rho} \ge \hmincompc[\left(s',\varepsilon\right)](A|B)_{\rho}  \;.
              \end{align}
        \item (Monotonicity in $\varepsilon$): For any $\rho_{AB}\in \cS_{\bullet}(AB)$, $s \in \mathbb{N}$, and $\varepsilon^\prime \geq \varepsilon \geq 0$,
              \begin{align}
                  \hmincompc[\left(s,\varepsilon^\prime\right)](A|B)_{\rho} \ge \hmincompc[\left(s,\varepsilon\right)](A|B)_{\rho}  \;.
              \end{align}
        \item (Data Processing):  Let $\cC_{B\to B'}$ be a quantum channel that can be implemented using a circuit of size $t$. Then, for any $\rho_{AB}\in \cS_{\bullet}(AB)$, $s \in \mathbb{N}$, and $\varepsilon \geq 0$,
              \begin{align}
                  \hmincompc(A|B')_{\cC(\rho)} \ge \hmincompc[\left(s+t,\varepsilon\right)](A|B)_{\rho} \;.
              \end{align}
        \item (Leakage Chain Rule):     For any $\rho_{ABC}\in \cS_{\bullet}(ABC)$, and any $\varepsilon \ge 0$, $s\in\bbN$, denote by $d_{C}$ the dimension of the register $C$ let $t$ be the size of a circuit generating $\omega_{C}$, the maximally mixed state. Then,
              \begin{equation*}
                  \hmincompc(A|BC)_{\rho} \ge \hmincompc[(s+t,\varepsilon)](A|B)_{\rho} - 2\log(d_{C}) \;.
              \end{equation*}
        \item (Purification Chain Rule):  For any $\rho_{ABC}\in \cS_{\bullet}(ABC)$, and any $\varepsilon \ge 0$, $s\in\bbN$, denote by $d_{A}$ the dimension of the register $A$
              \begin{equation*}
                  \hmincompc(B|C)_{\rho} \le \hmincompc(AB|C)_{\rho} + \log(d_{A}) \;.
              \end{equation*}
        \item ($\hunps[s]$ Equivalence):   For any cq-state $\rho_{XE}\in \cS_{\bullet}(XE)$ such that  ${\ell = \log\dim(X) \in \mathbb{N}}$, and any $s\in\mathbb{N}$, $\varepsilon \ge 0$, 
              \begin{align}
                  \hmincompc[s](X|E)_{\rho} & = \hunps[s](X|E)_{\rho}                                                                     \\
                  \hmincompc(X|E)_{\rho}    & \geq \hunps[s]^{\varepsilon}(X|E)_{\rho} \geq \hmincompc[(s+l,\varepsilon)](X|E)_{\rho} \;.
              \end{align}
        \item ($\mathrm{H}_{\min}^{\varepsilon}$ Relation):  For any $\rho_{AB}\in \cS_{\bullet}(AB)$, $s \in \mathbb{N}$, and $\varepsilon \geq 0$,
              \begin{align}
                  \hmincompc(A|B)_{\rho}                   & \ge \mathrm{H}_{\min}^{\varepsilon}(A|B)_{\rho}    \\
                  \lim_{s\to\infty} \hmincompc(A|B)_{\rho} & = \mathrm{H}_{\min}^{\varepsilon}(A|B)_{\rho}  \;.
              \end{align}
    \end{enumerate}
\end{restatable}
For a proof of Lemma~\ref{Lem:PropertiesCompMinEnt}, we refer the reader to Appendix~\ref{App:Proofs}.

The next lemma demonstrates a sharp separation between the information-theoretic and computational conditional min-entropy, even for fully classical states. 

\begin{restatable}{lemma}{InfoCompMinSep}\label{Lem:InfoCompMinSep}
    Let $s\in\mathbb{N}$ be fixed.
    There exists an integer $n$, and a normalized fully classical bipartite state ${\rho_{AB}\in \cS_{\circ}(AB)}$ such that
    \begin{equation*}
        \mathrm{H}_{\min}(A|B)_{\rho}=0
        \quad\text{and}\quad
        \hmincompc[s](A|B)_{\rho}\ge n/2 \;.
    \end{equation*}
\end{restatable}
For the proof of Lemma~\ref{Lem:InfoCompMinSep}, we refer the reader to Appendix~\ref{App:Proofs}.

\subsection{Computational Max-Entropy} \label{Sec:CompMaxEnt}
The dual of the (smooth) information-theoretic min-entropy is known as the max-entropy~\cite{Tomamichel_2010_Duality}, and it plays a central role in tasks such as data compression~\cite{KRS09OperationalMeaningEntropy} and entanglement distillation~\cite{khatri2024principlesquantumcommunicationtheory}. Our second main contribution is to define a notion of \emph{quantum computational max-entropy}, inspired by the duality that relates the information-theoretic min- and max-entropies.

In the information-theoretic setting, the duality relation asserts that for any pure state $\rho_{ABC}$, one has $\mathrm{H}_{\max}(A|B)_{\rho}=-\mathrm{H}_{\min}(A|C)_{\rho}$, see Lemma~\ref{lem:duality_for_smooth_info_min_max_entropy}. Crucially, this yields a well-defined max-entropy because the min-entropy $\mathrm{H}_{\min}(A|C)_{\rho}$ is invariant under the choice of purification of $\rho_{AB}$. All purifications of a given mixed state $\rho_{AB}$ are related by local isometries on the purifying system $C$, and the min-entropy is unchanged under such isometries.

In the computational setting, however, this invariance breaks down. The value of the computational min-entropy can vary significantly between purifications of the same $\rho_{AB}$, since the isometries relating them may require substantial computational resources to implement. As a result, the straightforward application of the duality relation does not yield a well-defined notion of computational max-entropy. This issue is illustrated by the following lemma, proving the existence of pairs of purifications for the same state with drastically different computational min-entropy. The construction is based on the existence of permutations that are hard to invert for circuits of size at most $s$. The existence of permutations with this property follows from a standard counting argument, once the number of permutations in a set is large enough, no circuit with $s$ gates can approximate all of them. This argument holds for both classical and quantum circuits. For completeness, we include a formal statement and proof of the counting argument with the relevant parameters to the following lemma in Appendix~\ref{Appendix_hard_permutations_counting_compleatness_proofs}.

\begin{restatable}{lemma}{AltPurMinEnt}\label{Lem:AltPurMinEnt}
    Let $s\in\mathbb{N}$ be fixed.
    There exists an integer $n$, a bipartite state ${\rho_{AB}\in \cS_{\circ}(AB)}$, and two purifications
    $\rho_{ABC}, \rho_{ABD}$
    of $\rho_{AB}$ such that
    \begin{equation*}
        \hmincompc[s](A|C)_{\rho}=-n
        \quad\text{and}\quad
        \hmincompc[s](A|D)_{\rho}\ge -n/2
    \end{equation*}
\end{restatable}

\begin{proof}
    For any $s\in\mathbb{N}$, we choose $n$ such that Lemma~\ref{lem:counting_hard_invers_fixed_s} holds. Then, let
    \begin{equation*}
        \rho_{A} = \frac{1}{2^{n}}\sum_{x\in\bits^{n}}\ketbra{x}{x},
        \qquad \rho_{B} \coloneq \proj{0}.
    \end{equation*}
    Since $B$ is trivial, every purification of $\rho_{A}$ is admissible.
    Let us first define a purification with minimal computational min-entropy. Define
    \begin{equation*}
        \ket{\Phi_{AC}}
        =
        \frac{1}{\sqrt{2^{n}}}
        \sum_{x\in\bits^{n}}\ket{x}_{A}\ket{x}_{C},
        \qquad
        \rho_{ABC} \coloneq \ketbra{\Phi_{AC}}{\Phi_{AC}} \otimes \rho_{B}.
    \end{equation*}
    This state is already maximally entangled. Therefore, the identity channel on $C$, only relabeling it to $A'$ is a valid circuit of size less than $s$, demonstrating that
    \begin{equation*}
        \hmincompc[s](A|C)_{\rho} \leq -n \;.
    \end{equation*}
   Equality then holds due to Lemma~\ref{Lem:PropertiesCompMinEnt}, i.e.
    \begin{align}
        -n = \mathrm{H}_{\min}(A|C)_\rho \leq \hmincompc[s](A|C)_{\rho} \leq -n \; .
    \end{align}
    By Lemma~\ref{lem:counting_hard_invers_fixed_s} there exists a permutation $f:\bits^{n}\to\bits^{n}$
    such that every quantum circuit of size at most $s$ inverts $f$ with success probability at most $2^{-n/2}$.
    We set
    \begin{equation*}
        \ket{\psi_{AD}}=
        \frac{1}{\sqrt{2^{n}}}
        \sum_{x\in\bits^{n}}\ket{x}_{A}\ket{f(x)}_{D},
        \qquad
        \rho_{ABD} \coloneq \ketbra{\psi_{AD}}{\psi_{AD}} \otimes \rho_{B}.
    \end{equation*}
     For any channel $\cE^{s}_{D\to A'}$ we can define a guessing strategy that applies $\cE^{s}_{D\to A'}$ and measures $A'$ in the computational basis. Since measuring in the computational basis is free in this computational model, the guessing strategy uses the same number of gates as the channel, which we denote as this guessing strategy $\hat{\cE}^{s}$. \diff{Additionally, we define the operation $\Pi_{XX}$ which simply acts as
     \begin{align}
         \Pi_{XX} \left( \cdot \right) \coloneq \sum_{x} \proj{xx} \left(\cdot \right) \proj{xx} \; .
     \end{align}
Applying this projection on $\ket{\Phi_{AA'}}$ is equivalent to measuring both registers in the computational basis, i.e.,
     \begin{align}\label{Eq: PhiProjRel}
         \Pi_{XX}\left(\proj{\Phi_{AA'}}\right) = \sum_{x,y} \proj{xy}\proj{\Phi_{AA'}} \proj{xy}\; ,
     \end{align}
     as the measurement outputs of $A$ and $A'$ always coincide for this maximally entangled state.
     The Fidelity is non-decreasing under measurements~\cite{doi:10.1080/09500349414552171}. Therefore, using Eq.~\eqref{Eq: PhiProjRel}, we find that
     \begin{align}
        &F\left(\mathbb{I}_{A} \otimes \cE^{s}_{D\to A'}(\psi_{AD}),\proj{\Phi_{AA'}}\right)  \\ 
        & \leq  F\left( \sum_{x,y} \proj{xy}\left(\mathbb{I}_{A} \otimes \cE^{s}_{D\to A'}(\psi_{AD})\right)\proj{xy},\Pi_{XX} \left(\proj{\Phi_{AA'}}\right)\right)  \\
        & =  F\left( \Pi_{XX}\left(\mathbb{I}_{A} \otimes \cE^{s}_{D\to A'}(\psi_{AD})\right),\Pi_{XX} \left(\proj{\Phi_{AA'}}\right)\right)  \\
         & =  \frac{1}{2^{2n}}\left( \sum_{x} \sqrt{\bra{x} \hat{\cE}^{s}\left(\proj{f(x)}\right) \ket{x}} \right)^2\\
        & \leq  \frac{1}{2^{n}} \sum_{x} \bra{x} \hat{\cE}^{s}\left(\proj{f(x)}\right) \ket{x}\\
        & = \Pr_{x \leftarrow \bits^n}[\hat{\cE}^{s}\left(\proj{f(x)}\right) = x] \;.
     \end{align}
     The third line is a consequence of the fact that one may ignore contributions for which $x\neq y$ since measuring $\proj{\Phi_{AA'}}$ in the computational basis will never produce such outcomes and the second-to-last line follows from the concavity of the square root.   }
    We know that for any channel using at most $s$ gates  
    \begin{equation*}
        \Pr_{x \leftarrow \bits^n}[\hat{\cE}^{s}(f(x)) = x] \le 2^{-n/2} \;.
    \end{equation*}
    Therefore   $F(\mathbb{I}_{A} \otimes \cE^{s}_{D\to A'}(\psi_{AD}),\proj{\Phi_{AA'}}) \leq 2^{-n/2}$ 
    we get:
    \begin{equation*}
        \hmincompc[s](A|D)_{\rho} = -\log \max_{\cE^{s}_{D\to A'}} d_{A} F(\mathbb{I}_{A} \otimes \cE^{s}_{D\to A'}(\psi_{AD}),\proj{\Phi_{AA'}}) \ge - -n/2 \;. \qedhere
    \end{equation*}
\end{proof}

Given the above, in order to define a dual entropy to the computational min-entropy, one has to choose a fixed purification to work with. We choose the  ``pretty good purification''~\cite{Winter_2004prettygoodpure}. This specific choice of purification may, in general, not always be efficiently preparable, but having access to it does not straightforwardly provide quantum circuits $\cE^s_{C\to A'}$ with any undue advantage (or disadvantage). In principle, however, alternative choices for the purification can be suitable as well.

\begin{definition}[Pretty Good Purification]\label{Def:PrettyGoodPurification}
    For a state $\rho_{A}\in \cS_{\bullet}(A)$ we define the pretty good purification as
    \begin{equation*}
        \proj{\rho^{\textrm{pg}}}_{AC} \coloneq \diff{d_{A}}
        \left(\sqrt{\rho_{A}} \otimes \mathbb{I}_C\right)
        \proj{\Phi}_{A|C}
        \left(\sqrt{\rho_{A}} \otimes \mathbb{I}_C\right)\;,
    \end{equation*}
    where $\ket{\Phi}_{A|C}$ is the maximally entangled state between $A$ and $C$ in the computational basis.
\end{definition}

\begin{remark}\label{Remark:Purification_Is_Not_A_Channel}
    Note that the mapping $\rho_{A}\to \proj{\rho^{\textrm{pg}}}_{AC}$ does not correspond to a valid quantum channel, and is in fact not linear.
    Purifications should be seen as a mathematical tool. Any mixed state can be written as a marginal of a pure state on a larger Hilbert space that we think of as the environment. The choice of a specific purification can be viewed as an assumption on the structure of the state on this environment register.
\end{remark}

\begin{definition} [Quantum Computational Max-Entropy] \label{Def:DualRel}
    Let $\rho_{AB}\in \cS_{\bullet}(AB)$ be a bipartite state and let $\proj{\rho^{\textrm{pg}}}_{ABC}$ be the pretty good purification for $\rho_{AB}$ given in Definition~\ref{Def:PrettyGoodPurification}.
    The quantum computational max-entropy is  defined as
    \begin{equation} \label{Eq:CompDualRel}
        \hmaxcompc[s](A|B)_\rho \coloneq - \hmincompc[s](A|C)_{\rho^{\textrm{pg}}}\;.
    \end{equation}
\end{definition}
We now derive an alternative expression for the computational max-entropy from Definition~\ref{Def:DualRel}.
\begin{lemma}[Alternative Expression]\label{Prop:nosmooth_quantum_comp_maxentropy}
    For any bipartite quantum state $\rho_{AB}\in \cS_{\bullet}(AB)$, and $s \in \mathbb{N}$, it holds that
    \begin{equation*}
        \hmaxcompc[s](A|B)_{\rho} =  \log \max_{\sigma_{B}} \max_{\cE\in \cC(s)} d_{A} F\left((\mathbb{I}_{AB} \otimes \cE_{C\to A'} )   \proj{\rho^{\textrm{pg}}}_{ABC},\ketbra{\Phi_{AA'}}{\Phi_{AA'}}\otimes \sigma_{B} \right) \;,
    \end{equation*}
    where $\ket{\rho^{\textrm{pg}}}_{ABC} = 
        \diff{\sqrt{d_{AB}}}\left(\sqrt{\rho_{AB}} \otimes \mathbb{I}_C\right)
        \ket{\Phi}_{AB|C}$, and one is optimizing over all mixed quantum states $\sigma_{B}$ and quantum circuits $\cE^s_{C\to A'}$ acting on $C$, of size at most $s$.
\end{lemma}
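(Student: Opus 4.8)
The plan is to unfold Definition~\ref{Def:DualRel} through the duality relation and then reduce the whole claim to a single statement about the generalized fidelity. Reading $\hmincompc[s](A|C)_{\rho^{\textrm{pg}}}$ as the min-entropy of the bipartite marginal $\rho^{\textrm{pg}}_{AC} = \ptr{B}{\proj{\rho^{\textrm{pg}}}_{ABC}}$ and substituting Definition~\ref{def:nosmooth_quantum_comp_entropy}, I get
\begin{equation*}
    \hmaxcompc[s](A|B)_{\rho} = -\hmincompc[s](A|C)_{\rho^{\textrm{pg}}} = \log d_{A} \max_{\cE\in\cC(s)} F\!\left((\mathbb{I}_{A}\otimes \cE_{C\to A'})(\rho^{\textrm{pg}}_{AC}), \proj{\Phi_{AA'}}\right)\;.
\end{equation*}
Comparing with the target formula, it then suffices to prove, for every fixed circuit $\cE_{C\to A'}\in\cC(s)$, the fidelity identity
\begin{equation*}
    F\!\left(\tau_{AA'}, \proj{\Phi_{AA'}}\right) = \max_{\sigma_{B}} F\!\left(\tau_{ABA'}, \proj{\Phi_{AA'}}\otimes\sigma_{B}\right)\;,
\end{equation*}
where $\tau_{ABA'} \coloneq (\mathbb{I}_{AB}\otimes \cE_{C\to A'})\proj{\rho^{\textrm{pg}}}_{ABC}$ and $\tau_{AA'} = \ptr{B}{\tau_{ABA'}}$; the equality $\tau_{AA'} = (\mathbb{I}_{A}\otimes\cE_{C\to A'})(\rho^{\textrm{pg}}_{AC})$ holds because tracing out $B$ commutes with the action of $\cE$ on $C$. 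Once this is established, I would take the maximum over $\cE$ on both sides and move the positive constant $d_{A}$, the extra maximization over $\sigma_{B}$, and the logarithm into the stated order, obtaining the claimed expression verbatim.

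To establish the fidelity identity, I would prove the two inequalities separately. For the bound $\max_{\sigma_{B}} F(\tau_{ABA'}, \proj{\Phi_{AA'}}\otimes\sigma_{B}) \le F(\tau_{AA'}, \proj{\Phi_{AA'}})$ I would invoke monotonicity of the generalized fidelity under the partial trace over $B$: since $\cE_{C\to A'}$ is trace preserving we have $\ptr{B}{\tau_{ABA'}} = \tau_{AA'}$, while $\ptr{B}{\proj{\Phi_{AA'}}\otimes\sigma_{B}} = \proj{\Phi_{AA'}}$ for normalized $\sigma_{B}$, so data processing yields the inequality for every $\sigma_{B}$. For the reverse inequality I would exhibit a $\sigma_{B}$ attaining the marginal fidelity using the extension property of the purified distance, Lemma~\ref{lem:extension_uhlmann_property_for_purefied_distance}, applied with $AA'$ as the retained system and $B$ as the extension register: there exists an extension $\xi_{AA'B}$ of $\proj{\Phi_{AA'}}$ with $\bP(\tau_{ABA'}, \xi_{AA'B}) = \bP(\tau_{AA'}, \proj{\Phi_{AA'}})$, hence (via $\bP = \sqrt{1-F}$) with the same generalized fidelity. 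The crucial structural point is that, because the marginal $\ptr{B}{\xi_{AA'B}} = \proj{\Phi_{AA'}}$ is rank one, the extension is forced to factorize as $\xi_{AA'B} = \proj{\Phi_{AA'}}\otimes\sigma_{B}$ for some state $\sigma_{B}$; this is the standard fact that a bipartite state whose marginal on one subsystem is pure is a product state, which one checks by noting that $\tr{(\proj{\Phi_{AA'}}\otimes\mathbb{I}_{B})\xi_{AA'B}} = \tr{\xi_{AA'B}}$ confines the support of $\xi_{AA'B}$ to the range of $\proj{\Phi_{AA'}}\otimes\mathbb{I}_{B}$. This particular $\sigma_{B}$ then attains equality, closing the argument.

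The step I expect to require the most care is the factorization of the extension, together with the bookkeeping of (sub)normalization. One must verify that the extension furnished by Lemma~\ref{lem:extension_uhlmann_property_for_purefied_distance} is genuinely of the product form $\proj{\Phi_{AA'}}\otimes\sigma_{B}$ rather than an arbitrary extension of $\proj{\Phi_{AA'}}$, and confirm that the resulting $\sigma_{B}$ is a legitimate mixed state over which the outer maximization ranges. Since $\proj{\Phi_{AA'}}$ is normalized, its extension has unit trace, so $\sigma_{B}$ is automatically a normalized state, which also makes the partial-trace argument in the converse direction clean (and one should note that enlarging to sub-normalized $\sigma_{B}$ cannot increase the maximum). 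With the structural fact and the two inequalities in hand, the remaining manipulations---swapping the two maximizations and relocating $d_{A}$ and the logarithm---are routine.
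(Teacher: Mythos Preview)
Your proposal is correct and follows essentially the same route as the paper: unfold the duality definition, then use the extension property of fidelity together with the fact that every extension of the pure state $\proj{\Phi_{AA'}}$ must factorize as $\proj{\Phi_{AA'}}\otimes\sigma_{B}$. The only cosmetic difference is that the paper obtains the fidelity identity in one stroke by citing \cite[Corollary~3.1]{Tomamichel_2016} (the fidelity extension equality), whereas you reprove it via Lemma~\ref{lem:extension_uhlmann_property_for_purefied_distance} combined with monotonicity of fidelity under partial trace; both arguments are equivalent and the factorization step is identical.
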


\begin{proof}
    By definition,
    \begin{equation*}
        \hmaxcompc[s](A|B)_{\rho} = \log d_{A} \max_{\cE^s \in \cC(s)}  F\left( (\mathbb{I}_A \otimes \cE^s)(\rho_{AC}), \proj{\Phi_{AA'}} \right) \;,
    \end{equation*}
    where $\rho_{AC} = \operatorname{Tr}_{B} \left[\ketbra{\psi}{\psi}_{ABC} \right]$ and $ \ket{\psi}_{ABC} = (\sqrt{\rho_{AB}} \otimes \mathbb{I}_C)\ket{\Phi_{AB|C}}$. Moreover, due to~\cite[Corollary 3.1]{Tomamichel_2016}, it must hold that
    \begin{align*}
        \hmaxcompc[s](A|B)_{\rho}
         & = \log d_{A} \max_{\cE^s \in \cC(s)}  F\left( (\mathbb{I}_A \otimes \cE^s)(\rho_{AC}), \proj{\Phi_{AA'}} \right) \\
         & =\log d_{A} \max_{\substack{
        \sigma_{A A^\prime B} \ \text{s.t.}                                                                                 \\
                \ \operatorname{Tr}_{B}\left[   \sigma\right]=\proj{\Phi}}
        }
        \max_{\cE^s\in \cC(s)}  F\left((\mathbb{I}_{AB} \otimes \cE_{C\to A'} )   \proj{\rho^{\textrm{pg}}}_{ABC},\sigma_{A A^\prime B} \right) \;.
    \end{align*}
    Here, $\sigma_{AA^\prime B}$ is simply an extension  of $\ket{\Phi_{AA'}}$. However, since $\ket{\Phi_{AA'}}$ is pure, all extensions must be of the form $\sigma_{AA^\prime B} = \proj{\Phi_{AA'}} \otimes \sigma_{B}$. It therefore follows that
    \begin{equation*}
        \hmaxcompc[s](A|B)_{\rho} =  \log \max_{\sigma_{B}} \max_{\cE^s\in \cC(s)} d_{A} F\left((\mathbb{I}_{AB} \otimes \cE_{C\to A'} )  \proj{\rho^{\textrm{pg}}}_{ABC},\ketbra{\Phi_{AA'}}{\Phi_{AA'}}\otimes \sigma_{B} \right) \;. \qedhere
    \end{equation*}
\end{proof}

This alternative expression has an operational meaning as a quantum task. The state $ \left(\sqrt{\rho_{AB}} \otimes \mathbb{I}_{C} \right)\ket{\Phi_{AB|C}}$ is a purification of $\rho_{AB}$, and it contains entanglement between the cut $AB$ and $C$. The set of circuits we are optimizing over aims to distill the entanglement between the registers $A$ and $C$, and in the process, the entangled state is decoupled from the state on  $B$. From this point of view, $\sigma_{B}$ should be viewed as a trash register. The existence of $\sigma_{B}$ used in~\cref{Prop:nosmooth_quantum_comp_maxentropy} is guaranteed by Uhlmann's Theorem~\cite{uhlmann1976transition}. Note that $\sigma_{B}$ may be hard to generate~\cite{bostanci2023unitary}. The focus of this definition is on the efficiency of the entanglement distillation or decoupling protocol~$\cE^s_{C\to A'}$, and not the complexity of the closest product state $\Phi_{AA'}\otimes \sigma_{B}$.

We illustrate the process that we use in~\cref{Prop:nosmooth_quantum_comp_maxentropy} to define the computational max entropy in~\cref{fig:Es-circuit-max-entropy}. We emphasize that the purification part of this process does not correspond to a quantum channel or a physical process that can be performed on the marginal $\rho_{AB}$.

\begin{figure}[ht]
    \centering
    \begin{quantikz}[row sep=0.3cm]
        \lstick[wires=8]{$\ket{\rho^{\textrm{pg}}_{AB|C}}$} & \qw & \qw \rstick[wires=2]{$A$} \\
        & \qw & \qw                       \\
        & \qw & \qw \rstick[wires=2]{$B$} \\
        & \qw & \qw                       \\
        & \gate[wires=4]{\mathcal{E}^s_{C \to A'}} & \qw \rstick[wires=4]{$A'$} \\
        &                                   & \qw                        \\
        &                                   & \qw                        \\
        &                                   & \qw
    \end{quantikz}
    \quad $\approx$ \quad $\proj{\Phi_{AA'}}\otimes \sigma_B$
    \caption{A circuit $\mathcal{E}^s_{C \to A'}$ acts on the purifying system $C$ for the pretty good purification of $\rho_{AB}$. The goal is to distill entanglement between $A$ and $C$ and decoupling from $B$, quantified by fidelity with $\proj{\Phi_{AA'}} \otimes \sigma_B$.}
    \Description[Entanglement distillation circuit]{Quantum circuit with eight input wires representing a purification of $\rho_{AB}$ conditioned on $C$. 
  A four-wire operation $\mathcal{E}^s$ acts on the $C$ subsystem and outputs four wires labeled $A'$. 
  The target state is close to a maximally entangled pair between $A$ and $A'$ tensor an arbitrary state on $B$.}

    \label{fig:Es-circuit-max-entropy}
\end{figure}
\subsubsection{Non-Conditional Entropy}
Whenever the register $B$ is trivial, it follows directly from the alternative expression of the max-entropy that the expression simplifies to
    \begin{equation*}
        \hmaxcompc[s](A)_{\rho} =  \log \max_{\cE\in \cC(s)} d_{A} F\left((\sqrt{\rho_{A}} \otimes \cE_{C\to A'} )\ket{\Phi_{A|C}},\ket{\Phi_{AA'}} \right) \;.
    \end{equation*}

\begin{definition}[Smooth Computational Max-Entropy]\label{def:smooth_max_comp_entropy}
    For any state $\rho_{AB}\in \cS_{\bullet}(AB)$, ${s\in\mathbb{N}}$, and $\varepsilon\ge 0$ the smooth computational max-entropy as given by:
    \begin{equation*}
        \hmaxcompc(A|B)_{\rho} \coloneq  \min_{\tilde{\rho}\in \cB_{\varepsilon}(\rho_{AB})} \hmaxcompc[s](A|B)_{\tilde{\rho}} 
    \end{equation*}
\end{definition}

\begin{restatable}{open_question}{SmoothDualityRelation}\label{OQ: SmoothDualityRelation}
It is unclear whether the smoothed quantities satisfy an exact duality relation. Can it be shown, however, that for a pure state $\rho_{ABC}\in \cS_{\bullet}(ABC)$
\begin{align*}
    \hmaxcompc(A|B)_{\rho} + \hmincompc(A|C)_{\rho} \approx 0 \; ?
\end{align*}
\end{restatable}

\subsubsection{Properties of the Computational Max-Entropy}\label{subse:properties_of_comp_max}

The quantum computational max-entropy satisfies properties that are analogous to the quantum computational min-entropy. The following lemma provides a useful overview of them.
\begin{restatable}{theorem}{PropertiesCompMaxEnt} \label{lem:PropertiesCompMaxEnt}
    The smooth computational max-entropy, $\hmaxcompc(A|B)_{\rho}$, satisfies
    \begin{enumerate}
        \item (Monotonicity in $s$): For any $\rho_{AB}\in \cS_{\bullet}(AB)$, $\varepsilon \geq 0$, and $s^\prime \geq s$,
              \begin{align}
                  \hmaxcompc(A|B)_{\rho} \le \hmaxcompc[\left(s',\varepsilon\right)](A|B)_{\rho}  \;.
              \end{align}
        \item (Monotonicity in $\varepsilon$): For any $\rho_{AB}\in \cS_{\bullet}(AB)$, $s \in \mathbb{N}$, and $\varepsilon^\prime \geq \varepsilon \geq 0$,
              \begin{align}
                  \hmaxcompc[\left(s,\varepsilon^\prime\right)](A|B)_{\rho} \le \hmaxcompc[\left(s,\varepsilon\right)](A|B)_{\rho}  \;.
              \end{align}
        \item ($\mathrm{H}_{\max}^{\varepsilon}$ Relation):  For any $\rho_{AB}\in \cS_{\bullet}(AB)$, $s \in \mathbb{N}$, and $\varepsilon \geq 0$,
              \begin{align}
                  \hmaxcompc(A|B)_{\rho}                   & \le \mathrm{H}_{\max}^{\varepsilon}(A|B)_{\rho}\;, \label{Eq:CompmaxHmaxEpsRelation} \\
                  \lim_{s\to\infty} \hmaxcompc(A|B)_{\rho} & = \mathrm{H}_{\max}^{\varepsilon}(A|B)_{\rho}  \;.
              \end{align}
    \end{enumerate}
\end{restatable}
For a proof of Lemma~\ref{lem:PropertiesCompMaxEnt}, we refer the reader to Appendix~\ref{App:Proofs}. 

As the reader may directly notice, some core properties that one would expect the entropy to have, e.g., data processing, are not included in Lemma~\ref{lem:PropertiesCompMaxEnt}.
We phrase the main ``missing'' properties as an open question below and leave it for future work.

\begin{restatable}{open_question}{MissingPropertiesCompMaxEnt}\label{Prop:MissingPropertiesCompMaxEnt}
    For a quantum state $\rho_{AB}\in \cS_{\bullet}(AB)$ and $s\in\mathbb{N},\varepsilon\ge 0 $ are there $s'\in \mathbb{N},\varepsilon'\ge 0$ such that the following properties hold
    \begin{enumerate}
        \item (Data Processing):  Let $\cC_{B\to B'}$ be a quantum channel that can be implemented using a circuit of size $t$. Then, for any $\rho_{AB}\in \cS_{\bullet}(AB)$, $s \in \mathbb{N}$, and $\varepsilon \geq 0$,
              \begin{align}
                  \hmaxcompc[\left(s',\varepsilon'\right)](A|B')_{\cC(\rho)} \ge \hmaxcompc(A|B)_{\rho} \;.
              \end{align}
        \item (Leakage Chain Rule):  For any $\rho_{ABC}\in \cS_{\bullet}(ABC)$, such that $\ell = \log\dim(C)$,
              \begin{align}
                  \hmaxcompc(A|BC)_{\rho} \ge \hmaxcompc[(s',\varepsilon')](A|B)_{\rho} - 2\ell \;.
              \end{align}
        \item (Smoothing cq-States):   For any cq-state $\rho_{XE}\in \cS_{\bullet}(XE)$ and $\ell = \log\dim(X) \in \mathbb{N}$, there exists a cq-state $\tilde{\rho}\in \cB^{\mathrm{cq}}_{\varepsilon}(\rho)$ such that
              \begin{align}
                  \hmaxcompc(X|E)_{\rho} \le \hmaxcompc[s](X|E)_{\tilde{\rho}} \le \hmaxcompc[\left(s^\prime,\varepsilon\right)](X|E)_{\rho} \;.
              \end{align}
    \end{enumerate}
\end{restatable}

A key technical obstacle in proving the above properties for the computational max-entropy is that its definition, Definition~\ref{def:smooth_max_comp_entropy}, includes a purification of the state. A purification, however, is not a quantum channel and, so, working with it directly is somewhat problematic. At the time of writing, we are unsure whether the aforementioned properties hold. Regardless, proving or disproving them will provide valuable insights for advancing computational entropies.

\section{Applications and Open Questions}

\subsection{Computational Min-Entropy and Pseudo-Randomness} \label{Sec: Computational Min-Entropy and Pseudo-Randomness}
One key benefit of our definition of the computational min-entropy is that it allows us to quantify the uncertainty associated with computational hardness. As we see by Lemma~\ref{Lem:AltPurMinEnt}, there are states that have very low information-theoretical min-entropy, even $0$, but have high computational min-entropy. This means that there are states that have no ``real'' information-theoretic randomness but, nevertheless, cannot be predicted with high probability by a computationally bounded adversary.

The ideal randomness resource for cryptography is a random variable that is uniformly distributed and independent of any side information an adversary may hold. In practice, however, sources of randomness are rarely perfect and may be partially predictable or correlated with adversarial knowledge.

A key tool in cryptography is privacy amplification, the process of producing nearly uniform and independent randomness out of imperfect randomness. The leftover hashing lemma~\cite{HILL1999pseudorandom,bennett2002generalizedprivacyamp_leftoverhashlema} relates the optimal rate of privacy amplification with the min-entropy of the imperfect sources.
Privacy amplification is typically done using functions known as randomness extractors. Randomness extractors are defined as follows:
\begin{definition}
    A function $\Ext: \bits^{n} \times \bits^{d} \to \bits^{m}$ is a quantum proof $(\varepsilon_{\ext},k_{\ext})$ seed extractor with uniform seed, $Y$, if, for all ccq-states $\rho_{XYE} = \rho_{XE}\otimes \omega_{Y}\in \cS_{\circ}(XYE)$, such that $H_{\min}(X|E) \ge k_{\ext}$:
    \begin{equation*}
    \dist(\rho_{\Ext(X,Y)YE},\omega_{m}\otimes\rho_{YE}) \le \varepsilon_{\ext}\;,
    \end{equation*}
    where $\omega_{m}$ is maximally mixed state over $m$ bits.
\end{definition}
A particularly clean example of an extractor is the inner-product function $\mathrm{IP}(X,Y)$, introduced in~\cite{CG85_inner_product_strong_extractor}. When $X$ is a source with sufficient min-entropy conditioned on $E$, and $Y$ is a uniformly random seed, $\mathrm{IP}(X,Y)$ outputs a bit that is statistically close to uniform and independent of both $E$ and $Y$. This result is known to hold even when the side information $E$ may be quantum~\cite{CDNT98quantumentanglementcommunicationcomplexityinnerproduct}.

Randomness extractors enable the generation of useful randomness, that is, provably close to uniform and independent from side information, from sources with a weaker form of randomness characterized by min-entropy.
A natural question to ask is whether this can be extended to computational min-entropy. That is, can sources with sufficiently high computational min-entropy be used to generate better randomness? We cannot hope to extract true, information-theoretical randomness out of sources with high computational min-entropy. However, one can extract something that looks like uniform and independent randomness to a computationally bounded observer. To formalize this notion, let us recall the definition of computational distance from Section~\ref{sec:pre_cm}:

\defcompdist*

We say that states that are close to uniform and independent in the computational distance are \emph{pseudo-random}. In the recent work~\cite{arnon2025computational}, it was shown that pseudo-randomness can be extracted from cq-states with sufficiently high unpredictability entropy, using the inner product extractor as a single-bit seeded extractor. Formally:
\begin{theorem}[\cite{arnon2025computational}, Theorem 2]\label{thm:IPext}
    Let $\rho_{XE}\in \cS_{\circ}(XE)$ be a cq-state where $X$ is distributed over $\bits^{n}$ and $Y$ is uniformly distributed over~$\bits^{n}$. Let $k_{\ext} \in \mathbb{N}, \varepsilon_{\ext}>0$ and ${k_{\ext} \ge 1-2\log(\varepsilon_{\ext})}$. We denote by~$\IP(X,Y)$ the binary inner-product of the values taken by $X$ and $Y$.
    If
    \begin{equation*}
        \hunps[2s+2n+5]^{\varepsilon}(X|E) \ge k_{\ext} \;,
    \end{equation*}
    then
    \begin{equation*}
        \dist_{s}(\rho_{\IP(X,Y)YE},\omega_{1}\otimes\rho_{YE}) \le \varepsilon_{\ext} + 2\varepsilon \;.
    \end{equation*}
\end{theorem}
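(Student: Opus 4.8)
The plan is to prove the statement in the contrapositive, by a reduction that turns a size-$s$ distinguisher separating $\rho_{\IP(X,Y)YE}$ from $\omega_1\otimes\rho_{YE}$ into a guessing circuit for $X$ from $E$ of size at most $2s+2n+5$ whose success probability is too large to be compatible with $\hunps[2s+2n+5]^{\varepsilon}(X|E)\ge k_{\ext}$. The backbone is the familiar \emph{distinguishing $\Rightarrow$ predicting $\Rightarrow$ inverting} chain, where the final arrow is the quantum Goldreich--Levin reconstruction (coherent list-decoding of the Hadamard code).

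First I would strip off the smoothing. By definition of the smooth unpredictability entropy there is a cq-state $\tilde\rho_{XE}\in\cB^{\mathrm{cq}}_{\varepsilon}(\rho)$ with $\hunps[2s+2n+5](X|E)_{\tilde\rho}\ge k_{\ext}$, i.e.\ no guessing circuit of size $2s+2n+5$ recovers $X$ from $E$ on $\tilde\rho$ with probability exceeding $2^{-k_{\ext}}$. Since the purified distance dominates the trace distance, $\dist(\rho_{XE},\tilde\rho_{XE})\le\varepsilon$, and because computing $\IP(X,Y)$ with a fresh uniform seed $Y$ while keeping $YE$ is a channel and $\dist_s\le\dist$, we get $\dist_s(\rho_{\IP(X,Y)YE},\tilde\rho_{\IP(X,Y)YE})\le\varepsilon$ and $\dist_s(\omega_1\otimes\rho_{YE},\omega_1\otimes\tilde\rho_{YE})\le\varepsilon$. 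By the triangle inequality it then suffices to prove the non-smooth bound $\dist_s(\tilde\rho_{\IP(X,Y)YE},\omega_1\otimes\tilde\rho_{YE})\le\varepsilon_{\ext}$, which yields the advertised $\varepsilon_{\ext}+2\varepsilon$.

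Now suppose toward a contradiction that some $\cD\in\cD(s)$ has advantage strictly above $\varepsilon_{\ext}$ on $\tilde\rho$. A single-bit distinguisher gives a single-bit predictor in the standard way: sample a uniform guess bit $r$, run $\cD(r,Y,E)$, and output $r$ or $\bar r$ according to whether $\cD$ accepts; a short calculation shows this outputs $\IP(X,Y)$ with probability $\tfrac12+\gamma$, where $\gamma$ equals the distinguishing advantage, so $\gamma>\varepsilon_{\ext}$, at a cost of only $O(1)$ extra gates. The heart of the argument, and the step I expect to be the main obstacle, is the quantum Goldreich--Levin reconstruction: the predictor is a correlation with the Hadamard encoding $y\mapsto\IP(x,y)$ of $x$, and one must decode $x$ from it while touching the quantum register $E$ only once (no cloning). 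I would do this coherently: prepare $\tfrac{1}{\sqrt{2^n}}\sum_y\ket{y}$ with $n$ Hadamards, run the predictor by phase kickback and uncompute it so that $Y$ carries only the phase $(-1)^{\IP(x,y)}$ and is disentangled from $E$ and the ancillas, then apply $H^{\otimes n}$ to $Y$ and measure. Running the predictor forward and back costs $\approx 2s$ gates, the two Hadamard layers cost $2n$, and the phase-kickback and sampling gadgets cost a constant $\le 5$, placing the resulting guessing circuit in $\cP(2s+2n+5)$; the Adcock--Cleve analysis guarantees it outputs $X$ from $E$ with probability at least $\gamma^2$ (averaged over the source).

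Finally I would close the loop using the parameter condition. From $k_{\ext}\ge 1-2\log\varepsilon_{\ext}$ we get $2^{-k_{\ext}}\le\tfrac12\varepsilon_{\ext}^{2}$, while the constructed circuit succeeds with probability at least $\gamma^2>\varepsilon_{\ext}^{2}>2^{-k_{\ext}}$, contradicting $\hunps[2s+2n+5](X|E)_{\tilde\rho}\ge k_{\ext}$. Hence no such distinguisher exists and $\dist_s(\tilde\rho_{\IP(X,Y)YE},\omega_1\otimes\tilde\rho_{YE})\le\varepsilon_{\ext}$, which completes the proof. The delicate points are the coherent, once-only use of $E$ in the Goldreich--Levin step (forcing the factor $2s$ for uncomputation) and the exact gate accounting producing the seed contribution $2n$ and the constant $5$; the correlation-to-inversion amplitude bound $\gamma^2$ is precisely what the hypothesis on $k_{\ext}$ is tailored to beat.
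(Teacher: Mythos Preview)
The paper does not contain a proof of this theorem: it is quoted verbatim from the companion work~\cite{arnon2025computational} and used as a black box to derive Lemma~\ref{lem:IPext_compmin}. There is therefore no ``paper's own proof'' to compare against.

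That said, your sketch is the standard and expected route for such a statement, and it is essentially correct. The reduction \emph{distinguisher $\Rightarrow$ predictor $\Rightarrow$ inverter} with the quantum Goldreich--Levin step (Adcock--Cleve coherent Hadamard decoding) is exactly how results of this shape are proved, and it is what one should expect the cited proof to do. Your handling of the smoothing via two triangle-inequality hops (each costing $\varepsilon$ since $\dist_s\le\dist\le\bP$) correctly accounts for the $2\varepsilon$ term; your reading of the parameter condition $k_{\ext}\ge 1-2\log\varepsilon_{\ext}$ as $2^{-k_{\ext}}\le\tfrac12\varepsilon_{\ext}^2$ is precisely what closes the contradiction against a success probability $\ge\gamma^2>\varepsilon_{\ext}^2$. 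The only places that would need care in a full write-up are (i) making the predictor reversible by deferring the distinguisher's measurements so that the forward/backward run genuinely restores $E$ and the ancillas, and (ii) the exact gate bookkeeping that lands on the constant $5$; both are routine but fiddly, and are presumably worked out in the cited paper.
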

This demonstrates that the relation between randomness extractors and min-entropy extends to the computational setting. The inner-product extractor only outputs one bit and requires a seed that is as long as the min-entropy source. In~\cite[Lemma 11]{arnon2025computational}, parts of the seed are reused to construct extractors that output~$m$ bits using seeds of length at most $2n^2\log(4m)$, with similar parameters against computational adversaries.

The equivalence between the computational min-entropy and unpredictability entropy shown in Lemma~\ref{Lem:PropertiesCompMinEnt} implies that these results can also be stated for computational min-entropy sources with similar parameters.

\begin{restatable}{lemma}{IPforcompmin}\label{lem:IPext_compmin}
    Let $\rho_{XE}\in \cS_{\circ}(XE)$ be a cq-state where $X$ is distributed over $\bits^{n}$ and $Y$ be uniformly distributed over~$\bits^{n}$. Let $k_{\ext} \in \mathbb{N}, \varepsilon_{\ext}>0$ and $k_{\ext} \ge 1-2\log(\varepsilon_{\ext})$. We denote by~$\IP(X,Y)$ the binary inner-product of the values taken by $X$ and $Y$.
    If
    \begin{equation*}
        \hmincompc[(2s+3n+5,\varepsilon)](X|E) \ge k_{\ext} \;,
    \end{equation*}
    then
    \begin{equation*}
        \dist_{s}(\rho_{\IP(X,Y)YE},\omega_{1}\otimes\rho_{YE}) \le \varepsilon_{\ext} + 2\varepsilon \;.
    \end{equation*}
\end{restatable}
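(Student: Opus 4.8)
The plan is to reduce the claim to the extraction result already established for the unpredictability entropy, namely Theorem~\ref{thm:IPext}, by rewriting the hypothesis on the computational min-entropy as a hypothesis on the smooth unpredictability entropy. The only bridge I need is the equivalence between these two quantities recorded in item~5 of Lemma~\ref{Lem:PropertiesCompMinEnt}, and the sole point requiring care is that the circuit-size parameters line up.

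First I would invoke the second line of the $\hunps[s]$-equivalence property. For a cq-state with $\ell = \log\dim(X) = n$ it gives, for every $s'\in\bbN$,
\[
    \hunps[s']^{\varepsilon}(X|E)_{\rho} \ge \hmincompc[(s'+n,\varepsilon)](X|E)_{\rho} \;.
\]
Choosing $s' = 2s+2n+5$, the right-hand side becomes exactly $\hmincompc[(2s+3n+5,\varepsilon)](X|E)_{\rho}$, which by the hypothesis of the lemma is at least $k_{\ext}$. Hence
\[
    \hunps[2s+2n+5]^{\varepsilon}(X|E)_{\rho} \ge k_{\ext} \;.
\]

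With this bound in hand, the hypothesis of Theorem~\ref{thm:IPext} is met verbatim: the circuit-size parameter there is precisely $2s+2n+5$, the smoothing parameter is the same $\varepsilon$, and the side condition $k_{\ext}\ge 1 - 2\log(\varepsilon_{\ext})$ is assumed in both statements. Applying Theorem~\ref{thm:IPext} directly then yields
\[
    \dist_{s}(\rho_{\IP(X,Y)YE},\omega_{1}\otimes\rho_{YE}) \le \varepsilon_{\ext} + 2\varepsilon \;,
\]
which is the claim.

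There is essentially no hard step in this argument; all of its content is carried by the equivalence of the two computational entropies from Lemma~\ref{Lem:PropertiesCompMinEnt}, and what remains is pure parameter bookkeeping. The one thing I would double-check is that the extra $n$ gates appearing in the stated parameter $2s+3n+5$ exactly absorb the $+\ell = +n$ overhead incurred in passing from the computational min-entropy to the smooth unpredictability entropy, and that this offset is the only source of discrepancy between the two lemmas --- so that no further slack in either $\varepsilon$ or the circuit size is required.
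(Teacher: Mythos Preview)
Your proposal is correct and follows exactly the same route as the paper: use the equivalence from Lemma~\ref{Lem:PropertiesCompMinEnt} to convert the computational min-entropy hypothesis into the unpredictability-entropy hypothesis with $s'=2s+2n+5$, then apply Theorem~\ref{thm:IPext}. The parameter bookkeeping you flag is precisely the content of the paper's proof as well.
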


\begin{proof}
    For cq-states, we know from the equivalence between the computational min-entropy and unpredictability entropy shown in~\cref{Lem:PropertiesCompMinEnt} that
    \begin{equation*}
        \hunps[s]^{\varepsilon}(X|E)_{\rho} \ge \hmincompc[(s+n,\varepsilon)](X|E)_{\rho} \;.
    \end{equation*}
    Therefore if $\hmincompc[(2s+3n+5,\varepsilon)](X|E)_{\rho}\ge k_{\ext}$ we get $\hunps[2s+2n+5]^{\varepsilon}(X|E) \ge k_{\ext}$ and by~\cref{thm:IPext}
    \begin{equation*}
        \dist_{s}(\rho_{\IP(X,Y)YE},\omega_{1}\otimes\rho_{YE}) \le \varepsilon_{\ext} + 2\varepsilon \;. \qedhere
    \end{equation*}
\end{proof}

The discussion above establishes a concrete connection between computational min-entropy and pseudo-randomness extraction in the quantum setting, specifically through the use of the inner-product function as a seeded extractor. This illustrates how computational min-entropy can be harnessed to produce pseudo-random outputs. Nevertheless, much remains unknown regarding the broader landscape of extractors in this computational setting.

A central open direction is to characterize the families of extractor functions that can extract pseudo-randomness from sources with only computational min-entropy, as opposed to the much more restrictive information-theoretic condition. While the inner-product function provides one explicit example, it is unclear to what extent this property is shared by other classical extractor families. 

\begin{restatable}{open_question}{ExtractorFamilies}\label{OQ:Comp_ext_pseudorandomness}
    Which families of randomness extractors can be used to (efficiently) extract pseudo-randomness from sources with sufficiently high computational min-entropy?
\end{restatable}

Possible directions for tackling the above open question are discussed in our previous work~\cite{arnon2025computational}.
\subsection{Computational Min-Entropy and Decoupling}
Apart from applications involving pseudo-randomness extractors, one should also explore the possibility of relating the computational min-entropy to computational decoupling. That is, if a quantum state $\rho_{AE}$ has high computational min-entropy of $A$ conditioned on $E$, does this imply that $A$ can be computationally decoupled from~$E$? 

In many ways, decoupling can be viewed as akin to randomness extraction in a fully quantum setting. For decoupling, the objective is to act on $A$, which may be partially entangled or correlated with an environment $E$, such that after applying a suitable quantum operation, the resulting state on $A'$ is nearly uniform and decoupled from the environment $E$, the seed $Y$, and any measurement outcome data, $X$, that is generated during decoupling~\cite{Dupuis_2014_one_shot_decup}. In other words, using a uniformly random seed, one can view it as a mapping
\begin{eqnarray}
\mathcal{D}: \cS_{\circ}({AY})\ &\to& \cS_{\circ}({A^\prime X Y}) \; ,
\end{eqnarray}
which satisfies
    \begin{eqnarray} \label{Eq: InfoTheoretDecoupling}
\frac{1}{2} \Vert \mathcal{D} \left(\rho_{AE} \otimes \omega_{Y} \right)- \omega_{A^\prime}  \otimes \rho_{XYE}\Vert_{1} \leq \varepsilon \; .
\end{eqnarray}
One approach towards decoupling $A$ from the environment is discussed in~\cite{Dupuis_2014_one_shot_decup}, where $\mathcal{D}$ is constructed as follows.\footnote{In fact, the results from~\cite{Dupuis_2014_one_shot_decup} are slightly stronger than Eq.~\eqref{Eq: InfoTheoretDecoupling}, as all registers are decoupled, i.e. ${\rho_{XYE} \approx \omega_{X} \otimes \omega_{Y} \otimes \rho_{E}}$.} The seed $Y$ randomly samples a unitary from an (approximate) two-design. After applying this unitary on $A$, a subset of the qubit registers on $A$ are measured in the computational basis.\footnote{In principle, the work~\cite{Dupuis_2014_one_shot_decup} also discusses how decoupling is possible if one additionally traces out some subsystems of $A$. However, for this brief overview, we do not focus on this contribution.} Using this approach, the optimal, i.e. maximal, number of qubits that \textit{do not} need to be measured for Eq.~\eqref{Eq: InfoTheoretDecoupling} to hold is characterized by the smooth min-entropy of $A$ conditioned on $E$ in the quantum information theoretical setting~\cite{Dupuis_2014_one_shot_decup}.

It is natural to ask if, like extractors, decoupling can also be extended to the computational setting. Here, the LHS of Eq.~\eqref{Eq: InfoTheoretDecoupling} would be replaced with the computational distance between $\mathcal{D} \left(\rho_{AE} \otimes \omega_{Y} \right)$ and $ \omega_{A^\prime}  \otimes \rho_{XYE}$, and one would expect the dimension of the register $A^\prime$ to be related to the computational min-entropy. 
Understanding whether this task is possible for any quantum state with sufficiently high computational min-entropy is a fundamental question. Natural candidates for achieving computational decoupling include (approximate) unitary two-designs~\cite{Szehr_2013}, which can be constructed efficiently, see e.g.~\cite{Nakata_2017, DiVincenzo_2002, Harrow_2009}, but their power in the computational setting remains open.

\begin{restatable}{open_question}{ComputationalDecoupling}\label{OQ:ComputationalDecoupling}
    Is it possible to (efficiently) computationally decouple any quantum state from its environment, given that it has sufficiently high computational min-entropy?
\end{restatable}

Note that the efficiency of unitary two-designs is not enough to resolve this open question. Rather, it has to be explicitly shown that if the quantum output is not computationally close to uniform and decoupled from other systems, it must hold that the input state did not have high computational min-entropy.

\subsection{Computational Max-Entropy and Entanglement Distillation/Data Compression}

In this work, we define the computational max-entropy via a duality relation, mirroring the information-theoretic case and ensuring that it inherits a range of structural properties. Beyond this formal definition, it is also directly associated with an operational task: for a fixed circuit-size bound $s$, it quantifies the maximal fidelity achievable by an $s$-bounded channel acting on the auxiliary register $C$ of the pretty good purification, thereby distilling entanglement between $A$ and $C$ while decoupling $A$ from $B$. At the same time, this definition does not preserve all of the operational meanings of the information-theoretic max-entropy.

To see why this is the case, let us first consider the information-theoretic task of entanglement distillation between two parties. Given any bipartite state $\rho_{AB}$, it is known that approximately $-\mathrm{H}_{\max}^{\varepsilon}(A|B)_{\rho}$ Bell pairs can be distilled via the decoupling protocol~\cite{khatri2024principlesquantumcommunicationtheory}. This protocol takes advantage of isometries, which are generally not considered to be efficiently implementable~\cite{bostanci2023unitary}.\footnote{The complexity framework used in~\cite{bostanci2023unitary} when discussing the closely related Channel Decoding Problem is of a slightly different flavor that ours. However, as is discussed in~\cite{arnonfriedman2023computationalentanglementtheory}, entanglement distillation is considered to generally be a computationally inefficient task, and we expect that the results from~\cite{bostanci2023unitary} can be related closer to our formalism.} The reason behind this high computational cost can be related to the Uhlmann transformation problem~\cite{bostanci2023unitary}. More concretely, the computational complexity arises from the following observations: The decoupling protocol generates classical (measurement output) data that acts as an error syndrome.\footnote{If one follows the decoupling protocol from~\cite{khatri2024principlesquantumcommunicationtheory}, then the protocol only explicitly implements a single measurement, which occurs on $A$ and not on $B$. However, this measurement also generally collapses a subsystem of $B$. Although encoded in a larger quantum register, this subsystem essentially contains classical data, and should be considered part of the syndrome.} That is, the value of the classical data contains the information as to which (local) quantum circuit needs to be applied so that the output state is maximally entangled. However, whenever there are exponentially many different error syndromes that need to be accounted for,\footnote{As is the case, for example, when one starts with IID noisy initial copies.} coding all of this information into an efficient circuit may generally not be possible under standard complexity assumptions. This is, for example, the case with quantum error correcting codes, where it is known that decoding them is generally a $\#\mathrm{P}$-Complete problem, see e.g.~\cite{iyer2013hardness}. Apart from the potential complexity of associating to each error syndrome the correct decoding circuit, even when there are solely two types of errors it may be that distinguishing between them is a computationally expensive task. In~\cite[Section 9.1.1]{bostanci2023unitary}, an example to this effect is provided, using EFI pairs~\cite{brakerski2022computational}.

Ideally, one would hope that $- \hmaxcompc(A|B)_{\rho}$ analogously provides an approximate lower bound on the efficiently distillable entanglement. However, not only is $- \hmaxcompc(A|B)_{\rho}$ decreasing in $s$, but due to Eq.~\eqref{Eq:CompmaxHmaxEpsRelation}, it holds that
\begin{align*}
    - \hmaxcompc(A|B)_{\rho}  \geq -\mathrm{H}_{\max}^{\varepsilon}(A|B)_{\rho} \; .
\end{align*}
This would indicate that there always exist efficient distillation protocols that can rival, and even surpass, the decoupling protocol. Furthermore, due to the scaling of the computational max-entropy as a function of $s$, this would indicate that the circuit-depth of distillation protocols is negatively correlated to the distillation rate. Given these observations, it seems unreasonable for the computational max-entropy to be related to the usual notion of entanglement distillation.

Similarly, on classical states, the information-theoretic max-entropy is related to the task of data compression~\cite{KRS09OperationalMeaningEntropy, bostanci2023unitary}. For a classical, single-register state $\rho_{X}$, the minimal encoding length is approximately given by $\mathrm{H}_{\max}^{\varepsilon}(X)_{\rho}$, see e.g.~\cite{KRS09OperationalMeaningEntropy}. Again, we find that $ \hmaxcompc(X)_{\rho}$ is increasing in $s$ and
\begin{align*}
    \hmaxcompc(X)_{\rho}  \leq \mathrm{H}_{\max}^{\varepsilon}(X)_{\rho} \; .
\end{align*}
If the encoding length for efficient data compression were given by the computational max-entropy, then this would both imply that efficient circuits can essentially compress data at an optimal rate, and that decreasing the circuit length has no negative impact on the compression rate. However, as is discussed in~\cite[Section 9.3]{ bostanci2023unitary}, if all efficiently generateable distributions can be optimally compressed, it would follow from~\cite{yaotrapdor82, HILL1999pseudorandom} that one-way functions do not exist. Moreover, whenever there is a sufficiently large gap between the computational max-entropy and the information-theoretic max-entropy, then this would indicate that efficient circuits can compress data more efficiently than is information-theoretically possible.

Given this discussion, we pose the following open question.
\begin{restatable}{open_question}{MaxEntropyOperationalMeaning}\label{OQ: MaxEntropyOperationalMeaning}
    What relevant operational tasks can be related to the computational max-entropy, $ \hmaxcompc(A|B)_{\rho}$?
\end{restatable}
Rather than using duality relations, one can define variants of the computational max-entropy with these tasks in mind. However, they will, in general, not satisfy duality relations, nor is it immediately clear how one can establish a single definition that relates to both efficient entanglement distillation and efficient data compression. 
\begin{restatable}{open_question}{MaxEntropyAlternativeDefinition}\label{OQ: MaxEntropyAlternativeDefinition}
    Is there an alternative, computationally meaningful definition of the max-entropy that accurately captures some quantity of interest relevant to tasks such as efficient entanglement distillation or efficient data compression?
\end{restatable}

One attempt at such a definition is the computational relative entropy defined using hypothesis testing from~\cite{Munson_2025,Yunger_Halpern_2022}. It accurately captures the optimal rate for data compression under computational constraints (though without giving an explicit efficient protocol). This can be potentially used to define a different variant of a computational max-entropy. An interesting avenue for future work would be to relate this quantity to efficient entanglement distillation.

\subsection{Relation to Computational Entanglement Theory, Quantum Cryptography and More}\label{sec:othe_connections}

Entropic quantities play a major role in quantum information theory. They are of particular importance when used to upper or lower bound the rates in which certain tasks can be performed, e.g., key rates in quantum key distribution or the rate in which maximally entangled states can be distilled from a given state. The relations between the rates and the entropies give the latter their operational meaning.

Recent studies on pseudo-entanglement~\cite{gheorghiu2020estimating,aaronson2023quantumpseudoentanglement}, for example, revealed that the \emph{information-theoretic} entropies do not carry their operational meaning once we limit the computational power of the protocols~\cite{arnonfriedman2023computationalentanglementtheory,leone2025entanglementtheorylimitedcomputational}.
It is thus natural to wonder whether the newly defined computational entropies (or other variants of them) are now the quantities that should be used to prove lower and upper bounds on operational tasks with limited computational power. For example, is it possible to relate specific computational entropies to the computational distillable entanglement and computational entanglement cost defined in~\cite{arnonfriedman2023computationalentanglementtheory}?

In a different direction,  information-theoretic entropies are a main tool in quantum cryptography with information-theoretic security. This leads to an interesting fundamental question-- can the computational entropies act as a useful tool for the analysis of cryptographic protocols in which the quantum adversary has a limited computational power? The results of~\cite{arnon2025computational} indicate that for pseudo-randomness extractors this is indeed possible. It is thus intriguing to consider further applications in cryptography, classical and quantum~\cite{alagic2016computational,barooti2023public,bernstein2017post}.

Lastly, entropic quantities arise in various ways in the study of the AdS/CFT correspondence~\cite{ryu2006holographic,ryu2006aspects,hayden2013holographic,akers2021leading,akers2023one}. It is natural to hypothesize that computational entropies should come about when studying the computational complexity of the correspondence itself. In that context, computational entropies might even be a more fruitful object to examine rather than pseudo-entanglement~\cite{gheorghiu2020estimating,aaronson2023quantumpseudoentanglement,arnonfriedman2023computationalentanglementtheory}.

\section{Open Questions}\label{sec:OQ}
Throughout this work, we raised several open questions regarding properties of the computational entropies. In this section, we collect and summarize these questions in one place to provide a unified view of the challenges that remain and to highlight directions for future research. 

Open Questions~\ref{OQ:Comp_ext_pseudorandomness} is about generalizing the result regarding Trevisan's extractors to other families of extractors. 

\ExtractorFamilies*

Open Questions~\ref{OQ:ComputationalDecoupling} is about generalizing the result from the classical task of pseudo-randomness extraction to the fully quantum task of computational decoupling. 

\ComputationalDecoupling*

While the computational max-entropy is defined by duality, it is not clear how well this duality holds when we allow for slight perturbations, by smoothing using the purified distance. Open Questions~\ref{OQ: SmoothDualityRelation} is about this duality in the smooth case.

\SmoothDualityRelation*

We saw that, when defined by duality, the computational max-entropy preserves some of the properties expected from a computational entropy measure. Several other properties remain as an open question.

\MissingPropertiesCompMaxEnt*

In contrast to the computational min-entropy, which has a clear operational meaning and is related to an operationally meaningful task of extracting pseudo-randomness, the operational meaning of the computational max-entropy remains an open question.

\MaxEntropyOperationalMeaning*

We also refer back to Section~\ref{sec:othe_connections} for further fundamental questions.

\section{Summary}

In this manuscript, we advocate for the development of a computational variant of quantum information theory (QIT).
Our work marks a significant step toward a rigorous computational QIT, establishing foundational computational entropies, alongside a mathematical toolkit and various open directions for future research. 

Motivated by the results of~\cite{arnon2025computational}, we extend the quantum computational unpredictability entropy to arbitrary bipartite states, thereby introducing a fully quantum computational min-entropy. This quantity satisfies several essential properties, including data processing and a leakage chain rule. This quantity lets us directly quantify the uncertainty associated with computational hardness in the fully quantum setting. We show that the computational min-entropy is an operationally relevant quantity for cryptographic tasks. It was shown in a partner work~\cite{arnon2025computational} that Trevisan's extractor can be used to extract computationally secret keys from sources with sufficiently high computational unpredictability entropy. As the computational min-entropy is essentially equivalent to the computational unpredictability entropy for cq-states, the same results must hold here as well.

We expect this computational min-entropy to be a foundational building block, hopefully useful for many other cryptographic and decoupling tasks. With this in mind, we believe that 
Open Questions~\ref{OQ:Comp_ext_pseudorandomness} and~\ref{OQ:ComputationalDecoupling} pose interesting avenues for further research.

We also define a computational max-entropy that is dual to the min-entropy. When defining the computational max-entropy in this way, we show that it cannot retain the exact operational meaning of the information-theoretic max-entropy. This raises two interesting questions. First, what tasks can this computational max-entropy be related to? Secondly, how should one construct an alternate definition for the computational max-entropy whose operational meaning more closely resembles the information-theoretic case? 

Although the computational max-entropy must have a different operational meaning, it does satisfy certain monotonicity properties and converges to the information-theoretic max-entropy in the asymptotic circuit-size limit. Moreover, as stated in Open Questions~\ref{OQ: SmoothDualityRelation} and~\ref{Prop:MissingPropertiesCompMaxEnt}, it is possible that this quantity satisfies other desirable properties. Finding an answer to these questions would provide further insight as to how useful this quantity can be.

Together, the computational min- and max-entropies enrich the emerging toolkit of computational QIT. We expect the min-entropy to serve as a foundational primitive for pseudo-randomness extraction and other quantum information processing tasks, such as decoupling protocols. Moreover, a deeper understanding of the max-entropy may illuminate how computational limitations affect the well-established information-theoretic duality relation between min and max-entropy. Addressing the open questions (also highlighted in the following section for convenience), we believe, will advance the development of a full-fledged computational QIT.

\begin{anonsuppress}
\paragraph{Acknowledgments.}
NA, TAH, and RA were supported by the Peter and Patricia Gruber Award and by the Air Force Office of Scientific Research under award number FA9550-22-1-0391. 
RA was further generously supported by the Koshland Research Fund and is the Daniel E. Koshland Career Development Chair.
JMR was supported by the  CHIST-ERA project “Modern Device Independent Cryptography" and the
ETH Zurich Quantum Center. 

\end{anonsuppress}

\bibliographystyle{unsrtnat}
\bibliography{bib}

\appendix

\section*{Appendix}

    \section{Proofs for computational entropies} \label{App:Proofs}
    \subsection{Proof of Theorem~\ref{Lem:PropertiesCompMinEnt}: properties of the computational min-entropy}

    \PropertiesCompMinEnt*

    \subsubsection{Monotonicity}
    The information-theoretic smooth min entropy is known to be monotonic in $\varepsilon$. This can be seen by the definition of smooth min-entropy and the fact that the purified distance is indeed a distance measure.
    Similarly, our computational min-entropy is monotonic in both the smoothing parameter $\varepsilon$ and the computational parameter $s$. This is a natural consequence of the definition; in both cases, the set for the optimization for a larger parameter contains the sets for the optimization for smaller parameters. As $\varepsilon$ grows, the definition allows for more error, and as $s$ grows, more circuits are available.
    \begin{lemma}[Monotonicity]
        For any $s' \ge s$ and any bipartite state $\rho_{AB}$,
        \begin{equation*}
            \hmincompc(A|B)_{\rho} \ge \hmincompc[\left(s',\varepsilon\right)](A|B)_{\rho}  \;.
        \end{equation*}
        Moreover,  for any $\varepsilon' \ge \varepsilon \ge 0$,
        \begin{equation*}
            \hmincompc[\left(s,\varepsilon'\right)](A|B)_{\rho} \ge \hmincompc(A|B)_{\rho}  \;.
        \end{equation*}
    \end{lemma}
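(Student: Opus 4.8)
The plan is to read off both inequalities directly from Definition~\ref{def:smooth_quantum_comp_entropy}, using two nested-set inclusions together with the fact that $x\mapsto-\log x$ is decreasing. I would start by isolating the quantity on which everything hinges, namely
\[
    Q(s,\varepsilon) := d_A \min_{\tilde{\rho}\in\cB_\varepsilon(\rho)} \max_{\cE^s\in\cC(s)} F\bigl((\mathbb{I}_A\otimes\cE^s)(\tilde{\rho}_{AB}),\proj{\Phi_{AA'}}\bigr),
\]
so that $\hmincompc[\left(s,\varepsilon\right)](A|B)_{\rho}=-\log Q(s,\varepsilon)$, and then show that $Q$ is monotone in each argument.

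For the dependence on $s$, the key point I would invoke is that the admissible circuit families are nested: whenever $s\le s'$ one has $\cC(s)\subseteq\cC(s')$, since a circuit using at most $s$ gates uses in particular at most $s'$ gates. Hence, for each fixed $\tilde{\rho}$, the inner maximum is taken over a larger set for $s'$, giving $\max_{\cE^s\in\cC(s)}F(\cdots)\le\max_{\cE^{s'}\in\cC(s')}F(\cdots)$; this survives the outer minimization over $\cB_\varepsilon(\rho)$ and multiplication by the positive factor $d_A$, so $Q(s,\varepsilon)\le Q(s',\varepsilon)$. Applying the decreasing map $-\log$ reverses the inequality and yields $\hmincompc[\left(s,\varepsilon\right)](A|B)_{\rho}\ge\hmincompc[\left(s',\varepsilon\right)](A|B)_{\rho}$.

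For the dependence on $\varepsilon$, the analogous ingredient is that the smoothing balls are nested: if $\varepsilon\le\varepsilon'$ then $\cB_\varepsilon(\rho)\subseteq\cB_{\varepsilon'}(\rho)$, straight from the definition of the purified-distance ball. Minimizing the fixed function $\tilde{\rho}\mapsto\max_{\cE^s\in\cC(s)}F(\cdots)$ over a larger domain can only lower its value, so $Q(s,\varepsilon')\le Q(s,\varepsilon)$; once more $-\log$ flips the direction to give $\hmincompc[\left(s,\varepsilon'\right)](A|B)_{\rho}\ge\hmincompc[\left(s,\varepsilon\right)](A|B)_{\rho}$.

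I do not expect any genuine obstacle here — both statements are immediate consequences of the two set inclusions and the monotonicity of the optimizations. The only thing I would be careful about is the bookkeeping of the inequality directions as they pass through the nested $\min$--$\max$ structure and the final order-reversing $-\log$, so I would spell out each step explicitly to rule out a sign slip.
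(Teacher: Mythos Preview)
Your proposal is correct and follows essentially the same approach as the paper: both arguments reduce the two monotonicity claims to the nested-set inclusions $\cC(s)\subseteq\cC(s')$ and $\cB_\varepsilon(\rho)\subseteq\cB_{\varepsilon'}(\rho)$, then track the resulting inequalities through the $\min$--$\max$ and the decreasing map $-\log$. Your write-up is in fact slightly cleaner than the paper's, which contains a small misprint in the direction of the ball inclusion for the $\varepsilon$-monotonicity step.
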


    \begin{proof}
        Given any computational parameter, $s$, the computational min-entropy $\hmincompc(A|B)_{\rho}$ is defined via an optimization over quantum circuits $\cE_{B\to A'}$ of size at most $s$. If we increase $s$ to $s'\ge s$, the optimization would be over all circuits of size $s'$, which includes all circuits of size $s$. Therefore,
        \begin{align*}
            \hmincompc(A|B)_{\rho}
             & = -\log d_{A} \min_{\tilde{\rho}\in \cB_{\varepsilon}(\rho)}\max_{\cE \in \cC(s)}
            F\left( (\mathbb{I}_A \otimes \cE)(\tilde{\rho}_{AB}), \proj{\Phi_{AA'}} \right)               \\
             & \ge  -\log d_{A} \min_{\tilde{\rho}\in \cB_{\varepsilon}(\rho)}\max_{\cE \in \cC(s^\prime)}
            F\left( (\mathbb{I}_A \otimes \cE)(\tilde{\rho}_{AB}), \proj{\Phi_{AA'}} \right)               \\
             & = \hmincompc[\left(s',\varepsilon\right)](A|B)_{\rho} \;.
        \end{align*}

        Given a smoothing parameter, $\varepsilon$, $\hmincompc(A|B)_{\rho}$ is obtained by optimizing over all states $\tilde{\rho}$ such that $\tilde{\rho}\in\cB_{\varepsilon}(\rho)$. Whenever $\varepsilon' \ge \varepsilon$, if $\tilde{\rho}\in\cB_{\varepsilon^\prime}(\rho)$, then it must also hold that $\tilde{\rho}\in\cB_{\varepsilon}(\rho)$. As such,
        \begin{align*}
            \hmincompc[\left(s,\varepsilon'\right)](A|B)_{\rho}
             & = -\log d_{A} \min_{\tilde{\rho}\in \cB_{\varepsilon^\prime}(\rho)}\max_{\cE \in \cC(s)}
            F\left( (\mathbb{I}_A \otimes \cE)(\tilde{\rho}_{AB}), \proj{\Phi_{AA'}} \right)            \\
             & \ge  -\log d_{A} \min_{\tilde{\rho}\in \cB_{\varepsilon}(\rho)}\max_{\cE \in \cC(s)}
            F\left( (\mathbb{I}_A \otimes \cE)(\tilde{\rho}_{AB}), \proj{\Phi_{AA'}} \right)            \\
             & = \hmincompc[\left(s,\varepsilon\right)](A|B)_{\rho} \;. \qedhere
        \end{align*}
    \end{proof}

    \subsubsection{Data Processing}
    The data processing inequality for smooth min-entropy~\cite[Theorem 18]{Tomamichel_2010_Duality}, shows that uncertainty about the system $A$ given side-information $B$ does not decrease under local operation on $B$.
    In the computational case, the computational resources need to be also taken into account, as operations on the side information, that require computational power, may reduce the computational power needed to guess $A$, compared to the initial state. We formalize this idea in the following lemma.
    \begin{lemma}[Data Processing Inequality]\label{lem:DPI_data_processing_inequality_comp_min_entropy}
        Let $\rho_{AB}$ be a bipartite state, $s\in\mathbb{N}$, $\varepsilon\ge0$, and $\cC_{B\to B'}$ be a quantum channel that can be implemented using a circuit of size $t$. Then,
        \begin{equation*}
            \hmincompc(A|B')_{\cC(\rho)} \ge \hmincompc[\left(s+t,\varepsilon\right)](A|B)_{\rho} \;.
        \end{equation*}
    \end{lemma}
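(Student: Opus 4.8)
The plan is to unfold both smooth computational min-entropies via Definition~\ref{def:smooth_quantum_comp_entropy}, observe that the inequality for the entropies is equivalent to a \emph{reversed} inequality between the inner fidelity optimizations (the leading $-\log$ is decreasing and the common factor $d_A$ cancels), and then produce, for the optimal smoothing state on the right, a concatenated circuit that witnesses the bound. Writing $\sigma_{AB'} := (\mathbb{I}_A\otimes\cC_{B\to B'})(\rho_{AB})$, the goal reduces to showing
\[
\min_{\tilde\sigma\in\cB_\varepsilon(\sigma)}\ \max_{\cE\in\cC(s)}\ F\!\left((\mathbb{I}_A\otimes\cE)(\tilde\sigma_{AB'}),\,\proj{\Phi_{AA'}}\right)\ \le\ \min_{\tilde\rho\in\cB_\varepsilon(\rho)}\ \max_{\cF\in\cC(s+t)}\ F\!\left((\mathbb{I}_A\otimes\cF)(\tilde\rho_{AB}),\,\proj{\Phi_{AA'}}\right).
\]

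First I would fix a minimizer $\tilde\rho^\ast\in\cB_\varepsilon(\rho)$ for the right-hand side. Such a minimizer exists: the purified-distance ball is compact, and because the gate set is finite and the circuit size is bounded there are only finitely many circuits in $\cC(s+t)$, so the inner maximum is attained and the objective is continuous in the state. Pushing $\tilde\rho^\ast$ through the channel, set $\tilde\sigma^\ast := (\mathbb{I}_A\otimes\cC_{B\to B'})(\tilde\rho^\ast_{AB})$. By data processing (monotonicity) of the purified distance under the CPTP map $\mathbb{I}_A\otimes\cC$, we have $\bP(\tilde\sigma^\ast,\sigma)\le\bP(\tilde\rho^\ast,\rho)\le\varepsilon$, so $\tilde\sigma^\ast\in\cB_\varepsilon(\sigma)$ and is an admissible competitor for the left-hand minimization.

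The heart of the argument is the circuit-composition step. Bounding the left-hand $\min$ by its value at $\tilde\sigma^\ast$ and rewriting the state gives
\[
\text{LHS}\ \le\ \max_{\cE\in\cC(s)} F\!\left((\mathbb{I}_A\otimes(\cE\circ\cC))(\tilde\rho^\ast_{AB}),\,\proj{\Phi_{AA'}}\right).
\]
For every circuit $\cE\in\cC(s)$ acting $B'\to A'$, the concatenation $\cE\circ\cC_{B\to B'}$ is a circuit acting $B\to A'$ assembled from at most $s+t$ gates, hence $\cE\circ\cC\in\cC(s+t)$. Thus $\{\cE\circ\cC:\cE\in\cC(s)\}\subseteq\cC(s+t)$, and the maximum over this subfamily is dominated by the maximum over all of $\cC(s+t)$, which equals the right-hand side by the choice of $\tilde\rho^\ast$. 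Chaining the inequalities yields LHS $\le$ RHS, and applying the decreasing map $x\mapsto -\log(d_A x)$ to both sides gives the claim.

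I expect the only delicate point to be the smoothing bookkeeping: one must check that the channel carries the $\varepsilon$-ball around $\rho$ into the $\varepsilon$-ball around $\sigma=\cC(\rho)$, which is exactly where monotonicity of the purified distance under CPTP maps is used, and it is independent of any computational restriction. Everything else is routine — the $d_A$ cancellation and the sign flip from $-\log$, together with the subadditivity of circuit size under composition, which is precisely where the additive $+t$ in the computational parameter appears.
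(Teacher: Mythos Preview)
Your proof is correct and follows essentially the same route as the paper's own argument: pick the optimal smoothing state for $\hmincompc[\left(s+t,\varepsilon\right)](A|B)_{\rho}$, push it through $\cC$ and use monotonicity of the purified distance to land inside $\cB_\varepsilon(\cC(\rho))$, then observe that any size-$s$ circuit on $B'$ precomposed with $\cC$ is a size-$(s+t)$ circuit on $B$. Your additional remark about existence of the minimizer via compactness and finiteness of $\cC(s+t)$ is a nice bit of bookkeeping that the paper glosses over.
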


    \begin{proof}
        By definition, there exists a state $\tilde{\rho}_{AB}$ with $\bP(\rho_{AB},\tilde{\rho}_{AB})\le \varepsilon$ such that for \emph{any} circuit $\cE$ of size at most $(s + t)$ acting on $B$,
        \begin{equation*}
            \hmincompc[\left(s+t,\varepsilon\right)](A|B)_{\rho}  =
            -\log d_{A} \max_{\cE \in \cC(s+t)}
            F\left( (\mathbb{I}_A \otimes \cE)(\tilde{\rho}_{AB}), \proj{\Phi_{AA'}} \right) \;.
        \end{equation*}
        By the monotonicity of the purified distance under quantum channels,
        \begin{equation} \label{Eq:DPIEpsTrDist}
            \bP\left((\mathbb{I}_A\otimes\cC)(\rho_{AB}), (\mathbb{I}_A\otimes\cC)(\tilde{\rho}_{AB})\right) \le \varepsilon \;.
        \end{equation}
        Let $\cE^{\prime}$ be any circuit of size at most $s$ acting on $B'$. Composing $\cE'$ with the size-$t$ circuit for $\cC$ yields a circuit $\cE$ of size $s + t$ acting on $B$, such that
        \begin{equation*}
            (\mathbb{I}_A\otimes\cE')\left((\mathbb{I}_A\otimes\cC)(\tilde{\rho}_{AB})\right)
            =
            (\mathbb{I}_A\otimes\cE)(\tilde{\rho}_{AB}) \;.
        \end{equation*}
        By our assumption on $\tilde{\rho}_{AB}$, it must hold that
        \begin{align*}
            \hmincompc[\left(s+t,\varepsilon\right)](A|B)_{\rho}
             & = -\log d_{A} \max_{\cE \in \cC(s+t)}  F\left( (\mathbb{I}_A \otimes \cE)(\tilde{\rho}_{AB}), \proj{\Phi_{AA'}} \right)                                                                                          \\
             & \le  -\log d_{A}\max_{\cE^{\prime} \in \cC(s)}  F\left( (\mathbb{I}_A \otimes \cE^{\prime} ) (\cC(\tilde{\rho}_{AB})), \proj{\Phi_{AA'}} \right)                                                                 \\
             & \le  -\log d_{A} \min_{\tilde{\rho}^{\prime}\in \cB_{\varepsilon}(\cC (\rho))}\max_{\cE^{\prime} \in \cC(s)}  F\left( (\mathbb{I}_A \otimes \cE^{\prime})(\tilde{\rho}^{\prime}_{AB}), \proj{\Phi_{AA'}} \right) \\
             & = \hmincompc[\left(s,\varepsilon\right)](A|B)_{\cC(\rho)} \;.
        \end{align*}
        We highlight that the second-to-last line is a direct consequence of Eq.~\eqref{Eq:DPIEpsTrDist}.
    \end{proof}

    \subsubsection{Leakage Chain Rule}\label{app_leakage_chainrule}
    The leakage chain rule, shown in the information-theoretic setting in~\cite{WTHR11Impossibilitygrowingquantumbitcommit}, gives a bound on how much extra information can reduce the uncertainty about a state. It turns out that for the computational min-entropy entropy, the chain rule closely resembles the information-theoretic chain rule.

    \begin{restatable}[Leakage Chain Rule for Computational Min-Entropy]{lemma}{ChainRuleQuantumCompEntropy} \label{lem:chain_rule_quantum_comp_entropy}
        For any state $\rho_{ABC}$, and any $\varepsilon \ge 0$, $s\in\bbN$, $\ell = \log\dim(C)$, and let $t$ be the size of a circuit generating $\omega_{C}$ we have:
        \begin{equation*}
            \hmincompc(A|BC)_{\rho} \ge \hmincompc[(s+t,\varepsilon)](A|B)_{\rho} - 2\ell \;.
        \end{equation*}
    \end{restatable}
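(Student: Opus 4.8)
The plan is to prove the inequality by reducing an optimal guessing strategy acting on the larger register $BC$ to one acting on $B$ alone, at the cost of generating the maximally mixed state $\omega_C$ inside the circuit and paying a bounded penalty in the achievable fidelity. Working with the fidelity form of the entropy, I would use that for the pure target $\proj{\Phi_{AA'}}$ one has $F(\tau,\proj{\Phi_{AA'}}) = \bra{\Phi_{AA'}}\tau\ket{\Phi_{AA'}}$ (the correction term in Definition~\ref{def:GeneralizedFidelity} vanishes since the target is normalized), a quantity that is linear and monotone in $\tau$; this is what makes operator inequalities usable.

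First I would strip off the smoothing. Let $\tilde\sigma_{AB}\in\cB_{\varepsilon}(\rho_{AB})$ be an optimal smoothing state attaining $\hmincompc[(s+t,\varepsilon)](A|B)_{\rho}$ (a minimizer exists since $\cB_{\varepsilon}$ is compact and $\cC(s+t)$ is finite). Using the extension property of the purified distance (Lemma~\ref{lem:extension_uhlmann_property_for_purefied_distance}), applied with $AB$ playing the role of the ``$A$''-system and $C$ that of the ``$B$''-system, I would extend $\tilde\sigma_{AB}$ to a state $\tilde\sigma_{ABC}$ with $\ptr{C}{\tilde\sigma_{ABC}}=\tilde\sigma_{AB}$ and $\bP(\rho_{ABC},\tilde\sigma_{ABC})=\bP(\rho_{AB},\tilde\sigma_{AB})\le\varepsilon$, so that $\tilde\sigma_{ABC}\in\cB_{\varepsilon}(\rho_{ABC})$ is an admissible smoothing state for the left-hand side. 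Plugging this single state into the minimization defining $\hmincompc[(s,\varepsilon)](A|BC)_{\rho}$ upper-bounds that minimum.

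The key technical step is the operator inequality $\tilde\sigma_{ABC}\le d_C\,\tilde\sigma_{AB}\otimes\mathbb{I}_C$, where $d_C=2^{\ell}$. I would prove it by writing the completely depolarizing channel on $C$ as $\frac{1}{d_C^2}\sum_k (\mathbb{I}\otimes W_k)(\cdot)(\mathbb{I}\otimes W_k^{\dagger})$ for a basis of Weyl--Heisenberg unitaries $\{W_k\}$, so that $\tilde\sigma_{AB}\otimes\omega_C=\frac{1}{d_C^2}\sum_k(\mathbb{I}\otimes W_k)\tilde\sigma_{ABC}(\mathbb{I}\otimes W_k^{\dagger})\ge\frac{1}{d_C^2}\tilde\sigma_{ABC}$, and using $\omega_C=\mathbb{I}_C/d_C$. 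Then, for any circuit $\cE_{BC\to A'}\in\cC(s)$, I form the circuit $\cE'_{B\to A'}\in\cC(s+t)$ that first generates $\omega_C$ (size $t$, as in the statement) and then runs $\cE$ on $B$ together with the fresh register, so that $\cE'(\tilde\sigma_{AB})=\cE(\tilde\sigma_{AB}\otimes\omega_C)$. Combining the linearity and monotonicity of $\bra{\Phi_{AA'}}\cdot\ket{\Phi_{AA'}}$ with the operator inequality gives
\begin{align*}
  F\!\left(\cE(\tilde\sigma_{ABC}),\proj{\Phi_{AA'}}\right)
  &\le d_C\,F\!\left(\cE(\tilde\sigma_{AB}\otimes\mathbb{I}_C),\proj{\Phi_{AA'}}\right)\\
  &= d_C^2\,F\!\left(\cE'(\tilde\sigma_{AB}),\proj{\Phi_{AA'}}\right),
\end{align*}
where the two factors of $d_C$ come respectively from the operator inequality and from the normalization $\mathbb{I}_C=d_C\,\omega_C$.

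Finally I would take the maximum over $\cE\in\cC(s)$ on the left (each corresponding $\cE'$ lies in $\cC(s+t)$) and identify $\max_{\cE'\in\cC(s+t)}F(\cE'(\tilde\sigma_{AB}),\proj{\Phi_{AA'}})$ with the value defining $\hmincompc[(s+t,\varepsilon)](A|B)_{\rho}$, since $\tilde\sigma_{AB}$ was chosen as the optimal smoothing state. Applying $-\log d_A(\cdot)$ to the resulting chain of inequalities converts the multiplicative factor $d_C^2$ into the additive loss $-\log d_C^2=-2\ell$, yielding the claim. The main obstacle I anticipate is the smoothing bookkeeping: one must lift the optimal smoothing state from $AB$ to $ABC$ without enlarging the ball, which is precisely what the extension property guarantees. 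The accompanying operator inequality, and the fact that it produces a factor $d_C^2$ rather than $d_C$ (the extra $d_C$ arising from the $\omega_C$ normalization), are then the arithmetic responsible for the $2\ell$ penalty, in contrast to the single $\ell$ appearing in the information-theoretic chain rule.
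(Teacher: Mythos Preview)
Your proposal is correct and follows essentially the same approach as the paper: choose the optimal smoothing state on $AB$, lift it to $ABC$ via the extension property (Lemma~\ref{lem:extension_uhlmann_property_for_purefied_distance}), apply the operator inequality $\tilde\sigma_{ABC}\le d_C^{2}\,\tilde\sigma_{AB}\otimes\omega_C$ together with the monotonicity of $\bra{\Phi_{AA'}}\cdot\ket{\Phi_{AA'}}$, and absorb the preparation of $\omega_C$ into the circuit at cost $t$. The only cosmetic difference is that the paper invokes the pinching-type bound (Lemma~\ref{lem:inequality-of-operators-for-comp-leakage-chain-rule}) directly with the factor $d_C^{2}$, whereas you split it into $\tilde\sigma_{ABC}\le d_C\,\tilde\sigma_{AB}\otimes\mathbb{I}_C$ followed by $\mathbb{I}_C=d_C\,\omega_C$; the resulting bound is identical.
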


    \begin{remark}
        The exact value of $t$ depends on the choice of universal gate set. Assuming the Haddamard and CNOT gates each require 1 gate, we get $t\le 2\ell$, as can be seen by the circuit in~\cref{fig:circuit_maximal_mixed_state}.
    \end{remark}

    \begin{figure}[ht]
        \centering
        \begin{quantikz}[row sep=0.25cm, column sep=0.4cm]
            \lstick{$\ket{0}$} & \gate{H} & \ctrl{3} & \qw      & \qw      & \qw  \rstick[wires=3]{$\omega_{C}$}\\
            \lstick{$\ket{0}$} & \gate{H} & \qw      & \ctrl{3} & \qw      & \qw                                \\
            \lstick{$\ket{0}$} & \gate{H} & \qw      & \qw      & \ctrl{3} & \qw                                \\
            \lstick{$\ket{0}$} & \qw      & \targ{}  & \qw      & \qw      & \qw \rstick[wires=3]{}             \\
            \lstick{$\ket{0}$} & \qw      & \qw      & \targ{}  & \qw      & \qw                   & \trash{}   \\
            \lstick{$\ket{0}$} & \qw      & \qw      & \qw      & \targ{}  & \qw
        \end{quantikz}
        \caption{A circuit generating the maximally mixed state on $3$ qubits using $3$ $H$ gates and $3$ CNOT gates.}
        \Description[A circuit generating the maximally mixed state]{A circuit generating the maximally mixed state on $3$ qubits using $3$ $H$ gates and $3$ CNOT gates.}
        \label{fig:circuit_maximal_mixed_state}
    \end{figure}

    For the proof of the chain rule, the main tool we needed in the cq-case was the following lemma from~\cite{WTHR11Impossibilitygrowingquantumbitcommit}, an immediate consequence of the pinching inequality~\cite{hayashi_optimal_2002}, and which is true not only for cq states, but for any positive operator.

    \begin{lemma}\label{lem:inequality-of-operators-for-comp-leakage-chain-rule}
        For any state $\rho_{A}$ and any extension $\rho_{AB}$, we have:
        \begin{equation*}
            \rho_{AB} \le \dim(B)^{2} (\rho_{A} \otimes \omega_{B}) \;,
        \end{equation*}
        where $\omega_{B}$ is the maximally mixed state on $B$ (recall~\cref{def:max_mix}).
    \end{lemma}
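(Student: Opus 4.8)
The plan is to prove this operator inequality as a direct packaging of Hayashi's pinching inequality~\cite{hayashi_optimal_2002}, exactly as the surrounding text suggests. First I would simplify the target. Since $\omega_{B} = \1_{B}/\dim(B)$, the claimed bound $\rho_{AB}\le \dim(B)^{2}(\rho_{A}\otimes\omega_{B})$ is equivalent to
\[
    \rho_{AB}\ \le\ \dim(B)\,\bigl(\rho_{A}\otimes\1_{B}\bigr)\;,
\]
so it suffices to establish this form.

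The key tool is the completely dephasing (pinching) map on $B$. Fix an orthonormal basis $\set{\ket{j}}_{j=1}^{\dim(B)}$ of $\cH_{B}$ and set
\[
    \Lambda(\rho_{AB}) \coloneq \sum_{j}\bigl(\1_{A}\otimes\proj{j}\bigr)\,\rho_{AB}\,\bigl(\1_{A}\otimes\proj{j}\bigr)\;.
\]
The projectors $\1_{A}\otimes\proj{j}$ are mutually orthogonal, sum to $\1_{AB}$, and there are exactly $\dim(B)$ of them; hence $\Lambda$ is a pinching with $\dim(B)$ components, and the pinching inequality applied to the positive operator $\rho_{AB}$ yields $\rho_{AB}\le \dim(B)\,\Lambda(\rho_{AB})$. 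This is the step that produces the sharp factor.

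It then remains to bound $\Lambda(\rho_{AB})$ by the product operator. Writing $\rho_{A}^{(j)}\coloneq \bra{j}_{B}\rho_{AB}\ket{j}_{B}\ge 0$, we have $\Lambda(\rho_{AB}) = \sum_{j}\rho_{A}^{(j)}\otimes\proj{j}$ and $\sum_{j}\rho_{A}^{(j)} = \rho_{A}$. Using $\1_{B}=\sum_{j}\proj{j}$, the difference
\[
    \rho_{A}\otimes\1_{B} - \Lambda(\rho_{AB}) = \sum_{j}\Bigl(\rho_{A}-\rho_{A}^{(j)}\Bigr)\otimes\proj{j} = \sum_{j}\Bigl(\sum_{k\neq j}\rho_{A}^{(k)}\Bigr)\otimes\proj{j}
\]
is a sum of positive operators, since the $\proj{j}$ are orthogonal and each $\rho_{A}^{(k)}\ge 0$; thus $\Lambda(\rho_{AB})\le \rho_{A}\otimes\1_{B}$. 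Chaining the two bounds gives $\rho_{AB}\le\dim(B)\,(\rho_{A}\otimes\1_{B}) = \dim(B)^{2}(\rho_{A}\otimes\omega_{B})$.

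I do not expect a serious obstacle, as the result is essentially the pinching inequality in disguise. The only two points needing care are (i) arranging rank-one projectors so that their number is exactly $\dim(B)$, which is what yields the sharp constant, and (ii) the block comparison $\Lambda(\rho_{AB})\le\rho_{A}\otimes\1_{B}$, which holds because both operators are block-diagonal in $\set{\ket{j}}$ and, block by block, $\rho_{A}^{(j)}\le\rho_{A}$.
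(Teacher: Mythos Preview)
Your proof is correct and follows exactly the route the paper indicates: the paper does not spell out a proof but simply cites~\cite{WTHR11Impossibilitygrowingquantumbitcommit} and notes that the bound is an immediate consequence of Hayashi's pinching inequality~\cite{hayashi_optimal_2002}, which is precisely the argument you carry out. Your two-step chain—$\rho_{AB}\le\dim(B)\,\Lambda(\rho_{AB})$ from pinching with $\dim(B)$ rank-one projectors on $B$, followed by the block-diagonal comparison $\Lambda(\rho_{AB})\le\rho_{A}\otimes\1_{B}$—is the standard unpacking of that remark.
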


    Additionally, we need the following lemma, stating that fidelity with a pure state is monotonous with respect to the ordering of positive operators.\footnote{We note that Lemma~\ref{lem:fidelity_monoton_with_pure} holds more generally, i.e., even if the second entry is not pure. This can be seen using the SDP formulation for the fidelity.}
    \begin{lemma}\label{lem:fidelity_monoton_with_pure}
        For any states $\rho,\sigma$ such that $\rho \ge \sigma$, and for any pure state $\proj{\psi}$
        \begin{equation*}
            F(\rho,\proj{\psi}) \ge F(\sigma,\proj{\psi})  \;.
        \end{equation*}
    \end{lemma}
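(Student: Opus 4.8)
The plan is to reduce the generalized fidelity to its explicit form when one argument is a normalized pure state, and then invoke the operator ordering directly. First I would observe that since $\proj{\psi}$ is a normalized pure state we have $\tr{\proj{\psi}} = 1$, so the sub-normalization correction term $\sqrt{(1-\tr{\cdot})(1-\tr{\proj{\psi}})}$ appearing in Definition~\ref{def:GeneralizedFidelity} vanishes identically, regardless of whether the other argument is sub-normalized. Hence for every positive semi-definite operator $\tau$ the generalized fidelity collapses to the standard single-pure-argument expression
\begin{equation*}
    F(\tau,\proj{\psi}) = \bra{\psi}\tau\ket{\psi} \;.
\end{equation*}
This is the familiar identity that the fidelity of a state with a pure state equals the expectation value of that state in $\ket{\psi}$; it follows from $\lVert \sqrt{\tau}\,\proj{\psi} \rVert_1 = \sqrt{\bra{\psi}\tau\ket{\psi}}$, since $\sqrt{\tau}\ket{\psi}\bra{\psi}$ is a rank-one operator and only uses $\tau \ge 0$.

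With this reduction in hand, the statement becomes immediate. I would rewrite the hypothesis $\rho \ge \sigma$ as $\rho - \sigma \ge 0$, so that $\bra{\psi}(\rho-\sigma)\ket{\psi} \ge 0$ for every vector $\ket{\psi}$. Rearranging gives $\bra{\psi}\rho\ket{\psi} \ge \bra{\psi}\sigma\ket{\psi}$, and by the previous paragraph this is exactly $F(\rho,\proj{\psi}) \ge F(\sigma,\proj{\psi})$, as desired.

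There is essentially no obstacle in the pure case: the entire content is the reduction of the generalized fidelity to a single matrix element, together with positivity of $\rho - \sigma$. The only points requiring a moment's care are verifying that the sub-normalization term genuinely disappears and that the reduction remains valid when $\rho$ or $\sigma$ is sub-normalized, both of which hold since the identity $F(\tau,\proj{\psi}) = \bra{\psi}\tau\ket{\psi}$ needs only $\tau \ge 0$. For the more general claim noted in the footnote, where the second argument need not be pure, this elementary argument no longer suffices, and one would instead appeal to the semidefinite-programming formulation of the fidelity, in which the feasible region associated with $\rho$ contains that associated with $\sigma$ whenever $\rho \ge \sigma$; but for the stated lemma that machinery is unnecessary.
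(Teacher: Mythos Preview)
Your proposal is correct and follows essentially the same approach as the paper: both reduce the fidelity with a pure normalized state to the matrix element $F(\tau,\proj{\psi}) = \bra{\psi}\tau\ket{\psi}$ and then read off the inequality directly from $\rho - \sigma \ge 0$. Your version simply adds more detail on why the generalized-fidelity correction term vanishes and why the reduction holds for sub-normalized $\tau$, which the paper leaves implicit.
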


    \begin{proof}
        The fidelity with a pure state can be written as:
        \begin{equation*}
            F(\rho,\proj{\psi}) = \bra{\psi}\rho\ket{\psi} \;.
        \end{equation*}
        Therefore for any $\rho \ge \sigma$ we can write
        \begin{equation*}
            F(\rho,\proj{\psi}) =\bra{\psi}\rho\ket{\psi} \ge \bra{\psi}\sigma\ket{\psi} = F(\sigma,\proj{\psi}) \;.  \qedhere
        \end{equation*}
    \end{proof}

    With these tools, we are ready to prove the fully quantum leakage chain rule for the computational min entropy~\cref{lem:chain_rule_quantum_comp_entropy}.
    \begin{proof}
        Let $\cF$ be a circuit that generates $\omega_C$. It has size $t=\bO(\ell)$.
        Let $\tilde \rho_{AB}\in \cB_{\varepsilon}(\rho_{AB})$ be the optimal smoothed state in $\hmincompc[s+t,\varepsilon](A|B)_{\rho}$.
        By~\cref{lem:extension_uhlmann_property_for_purefied_distance} there exists an extension $\tilde \rho_{ABC}$ of $\tilde \rho_{AB}$ such that $\bP(\tilde{\rho}_{ABC},\rho_{ABC})\le\varepsilon$.
        Therefore, by the definition of the smoothed computational min-entropy,
        \begin{equation*}
            2^{-\hmincompc[s,\varepsilon](A|BC)_{\rho}}\leq \max_{\cE\in \cC(s)}d_{A}F((\mathbb{I}_{A}\otimes\cE) (\tilde{\rho}_{ABC}),\proj{\phi_{AA'}})\,.
        \end{equation*}
        By \cref{lem:inequality-of-operators-for-comp-leakage-chain-rule} and~\cref{lem:fidelity_monoton_with_pure} we then have
        \begin{align*}
            2^{-\hmincompc[s,\varepsilon](A|BC)_{\rho}}
             & \leq \max_{\cE\in \cC(s)}d_{A}|C|^2 F((\mathbb{I}_{A}\otimes\cE) (\tilde{\rho}_{AB}\otimes \omega_C),\proj{\phi_{AA'}}) \\
             & \leq \max_{\cE\in \cC(s)}d_{A}|C|^2 F((\mathbb{I}_{A}\otimes\cE\circ\cF) (\tilde{\rho}_{AB}),\proj{\phi_{AA'}})         \\
             & \leq \max_{\cE\in \cC(s+t)}d_{A}|C|^2 F((\mathbb{I}_{A}\otimes\cE) (\tilde{\rho}_{AB}),\proj{\phi_{AA'}})               \\
             & \leq 2^{2\ell}2^{-\hmincompc[s+t,\varepsilon](A|B)_{\rho}}\,. \qedhere
        \end{align*}
    \end{proof}

    \subsubsection{Purification Chain Rule}\label{app_subsec_purification_chain_rule}
    
    In~\cite{Munson_2025} for the complexity entropy, there is a conjectured chain rule, in a different direction from the leakage chain rule. While the leakage chain rule bounds how much knowledge an adversary can get by getting more qubits, this chain rule, which we call the purification chain rule, measures how much additional qubits can purify a random system. We show that for our definition of computational min entropy, this chain rule holds with no loss in parameters.

    \begin{restatable}[Purification Chain Rule for Computational Min-Entropy]{lemma}{PurificationChainRuleQuantumCompMinEntropy} \label{lem:Purification_chain_rule_quantum_comp_entropy}
       Let $\rho_{ABC}$ be a tripartite state and $\varepsilon \ge 0,s\in\mathbb{N}$. Denote $d_{A}$ the dimension of $\rho_{A}$, 
        \begin{equation*}
            \hmincompc(B|C)_{\rho} \le  \hmincompc(AB|C)_{\rho} + \log(d_{A}) \;.
        \end{equation*}
    \end{restatable}

    \begin{proof}
        Let $\cE^{s}_{C\to A'B'}$ be a channel such that
        \begin{equation*}
                    \hmincompc[s](AB|C)_{\rho} = -\log d_{A} d_{B}  F\left( (\mathbb{I}_{AB} \otimes \cE^{s}_{C\to A'B'})(\rho_{ABC}), \proj{\Phi_{AA'BB'}} \right) \;,
        \end{equation*}
        By discarding $A'$, any channel $\cE^{s}_{C\to A'B'}$ that can be implemented using at most $s$ gates, yields a channel: $\hat{\cE}^{s}_{C\to B'}$ that also uses at most $s$ gates:
        $$\hat{\cE}^{s}_{C\to B'} = \mathrm{Tr}_{A'}\circ\cE^{s}_{C\to A'B'}\;.$$ 
        By the definition of the computational min entropy, we know that for any such channel:
        \begin{equation*}
            \hmincompc[s](B|C)_{\rho}  \le -\log  d_{B}  F\left( (\mathbb{I}_{B} \otimes \hat{\cE}^{s}_{C\to B'})(\rho_{BC}), \proj{\Phi_{BB'}} \right) \;,
        \end{equation*}
        Since the fidelity is monotonically increasing under partial trace and $\Phi_{BB'} = \ptr{AA'}{\proj{\Phi_{AA'BB'}}}$ , we get:    
        \begin{align*}
                    \hmincompc[s](AB|C)_{\rho} 
                    & = -\log d_{A} d_{B}  F\left( (\mathbb{I}_{AB} \otimes \cE^{s}_{C\to A'B'})(\rho_{ABC}), \proj{\Phi_{AA'BB'}} \right) \\
                    & \ge -\log d_{A} d_{B}  F\left( (\mathbb{I}_{B} \otimes \hat{\cE}^{s}_{C\to B'})(\rho_{BC}), \proj{\Phi_{BB'}} \right) \\
                    & \ge  \hmincompc[s](B|C)_{\rho} -\log d_{A} \;. 
        \end{align*}
        This solves the lemma in the non-smooth case. 
        For the smooth case, let $\varepsilon\ge 0$, denote $\tilde{\rho}_{BC}\in \cB_{\varepsilon}(\rho_{BC})$ a state such that there is a channel $\tilde{\cE}^{s}_{C\to B'}$
        \begin{equation*}
            \hmincompc(B|C)_{\rho} = \hmincompc[s](B|C)_{\tilde{\rho}} \;.
        \end{equation*}
        Consider an extension $\tilde{\rho}_{ABC}$, and let $\cE^{s}_{C\to A'B'}$ be the optimal $s$ gate channel for $\hmincompc[s](AB|C)_{\tilde{\rho}}$, that is, a channel such that
        \begin{equation*}
            \hmincompc[s](AB|C)_{\tilde{\rho}}  = -\log d_{A} d_{B}  F\left( (\mathbb{I}_{AB} \otimes \cE^{s}_{C\to A'B'})(\tilde{\rho}_{ABC}), \proj{\Phi_{AA'BB'}} \right) \;.
        \end{equation*}
        Let $\hat{\cE}^{s}_{C\to B'} = \mathrm{Tr}_{A'}\circ\cE^s_{C\to A'B'}$, such that:
        \begin{equation*}
            \hmincompc(B|C)_{\rho} \leq -\log d_{B} F\left( (\mathbb{I}_{B} \otimes \hat{\cE}^{s}_{C\to B'})(\tilde{\rho}_{BC}), \proj{\Phi_{BB'}} \right) \;.
        \end{equation*}
        Note that we pick an optimal state for $\hmincompc(B|C)_{\rho}$, $\tilde{\rho}_{BC}$, but a potentially suboptimal channel. The channel is optimal for the extension $\tilde{\rho}_{ABC}$, the reduced channel $\hat{\cE}^{s}_{C\to B'}$ is not nececeraly optimal for $BC$.
        Using that, we can write:
            \begin{align*}
                    \hmincompc[s](AB|C)_{\tilde{\rho}} 
                    & = -\log d_{A} d_{B}  F\left( (\mathbb{I}_{AB} \otimes \cE^{s}_{C\to A'B'})(\tilde{\rho}_{ABC}), \proj{\Phi_{AA'BB'}} \right) \\
                    & \ge -\log d_{A} d_{B}  F\left( (\mathbb{I}_{B} \otimes \hat{\cE}^{s}_{C\to B'})(\tilde{\rho}_{BC}), \proj{\Phi_{BB'}} \right) \\
                    & \ge  \hmincompc(B|C)_{\rho} -\log d_{A} \;. 
        \end{align*}
        If $\tilde{\rho}_{ABC}\in\cB_{\varepsilon}(\rho_{ABC})$, by definition:
        \begin{equation*}
            \hmincompc(AB|C)_{\rho} \ge \hmincompc[s](AB|C)_{\tilde{\rho}} \;.
        \end{equation*}
        For any state $\tilde{\rho}_{BC}\in \cB_{\varepsilon}(\rho_{BC})$ an extension such that $\tilde{\rho}_{ABC}\in\cB_{\varepsilon}(\rho_{ABC})$ is guaranteed by the extension property of the purified distance (Lemma~\ref{lem:extension_uhlmann_property_for_purefied_distance}), therefore for any $\varepsilon\ge 0$
        \begin{equation*}
            \hmincompc(B|C)_{\rho} \le  \hmincompc(AB|C)_{\rho} + \log(d_{A}) \;.\qedhere
        \end{equation*}
    \end{proof}

    \subsubsection{$\hunps[s]$ Equivalence}\label{app_subsec_cq-eqauive_proofs}
    \begin{lemma}[Equivalence for cq-States]\label{lem:cq-version1}
        For any cq-state $\rho_{XE}$ and $s\in\mathbb{N}$:
        \begin{equation*}
            \hmincompc[s](X|E)_{\rho} = \hunps[s](X|E)_{\rho} \;.
        \end{equation*}
    \end{lemma}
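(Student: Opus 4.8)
The plan is to reduce both sides to the \emph{same} optimization over guessing probabilities, using the classical structure of the $X$ register to evaluate the singlet fraction in Definition~\ref{def:nosmooth_quantum_comp_entropy} explicitly. Write $\rho_{XE}=\sum_x p_x\proj{x}_X\otimes\rho_E^x$ and recall $\ket{\Phi_{XX'}}=\frac{1}{\sqrt{d_X}}\sum_i\ket{i}_X\ket{i}_{X'}$. Since the target $\proj{\Phi_{XX'}}$ is pure, the fidelity is the overlap $\bra{\Phi_{XX'}}(\cdot)\ket{\Phi_{XX'}}$; and because $X$ is diagonal in the computational basis, only the terms $i=j=x$ survive the double sum from $\ket{\Phi_{XX'}}$. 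The first step is therefore the routine collapse
\begin{equation*}
    d_X\, F\!\left((\mathbb{I}_X\otimes\cE^s)(\rho_{XE}),\proj{\Phi_{XX'}}\right)=\sum_x p_x\,\bra{x}\cE^s(\rho_E^x)\ket{x}\;.
\end{equation*}
The quantity $\bra{x}\cE^s(\rho_E^x)\ket{x}$ is exactly the probability of obtaining $x$ when the output register $X'$ of $\cE^s$ is measured in the computational basis. Hence $2^{-\hmincompc[s](X|E)_\rho}=\max_{\cE^s\in\cC(s)}\sum_x p_x\bra{x}\cE^s(\rho_E^x)\ket{x}$, whereas by Definition~\ref{def:notsmooth_unpredictability_entropy} we have $2^{-\hunps[s](X|E)_\rho}=\max_{\cE\in\cP(s)}\sum_x p_x\,\pr{\cE(\rho_E^x)=x}$, so it remains only to show these two optima coincide.

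The second step establishes the two inequalities via the equivalence of the circuit models. For $2^{-\hmincompc[s]}\le 2^{-\hunps[s]}$, I would take any $\cE^s\in\cC(s)$ and append a computational-basis measurement of $X'$; this produces a guessing circuit whose output equals $x$ with probability $\bra{x}\cE^s(\rho_E^x)\ket{x}$, and since the gate count excludes final measurements this circuit lies in $\cP(s)$. For the reverse inequality, given a guessing circuit $\cE\in\cP(s)$ with output distribution $\{q_y(\cdot)\}$, I would view it as the quantum channel $\rho\mapsto\sum_y q_y(\rho)\proj{y}_{X'}$ that writes the classical outcome into the computational basis of $X'$; this channel has the same size $s$ and satisfies $\bra{x}(\cdot)\ket{x}=\pr{\cE(\rho_E^x)=x}$, reproducing the guessing probability. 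Combining the two directions gives equality of the optima, and hence $\hmincompc[s](X|E)_\rho=\hunps[s](X|E)_\rho$.

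The main obstacle is purely one of careful bookkeeping between the two circuit models: a guessing circuit in $\cP(s)$ terminates in a computational-basis measurement and outputs a string in $\bits^n$, while the min-entropy channel $\cE^s$ outputs a \emph{quantum} register $X'$ of dimension $d_X$. What makes the conversions size-preserving is precisely the convention, built into the definitions of both $\cC(s)$ and $\cP(s)$, that computational-basis measurements and ancilla preparation are not counted towards the gate budget; I would invoke this explicitly in both directions. I would also record the implicit assumption that the classical alphabet of $X$ is identified with $\bits^n$, so that $d_X=2^n$ matches the guessing-circuit output format. Finally, I would emphasize that the prefactor $d_X$ appearing in Definition~\ref{def:nosmooth_quantum_comp_entropy} is exactly what converts the singlet fraction into the guessing probability, so no separate normalization step is needed.
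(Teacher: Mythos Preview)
Your proposal is correct and follows essentially the same approach as the paper: both compute the collapse $d_X\,F((\mathbb{I}_X\otimes\cE^s)(\rho_{XE}),\proj{\Phi_{XX'}})=\sum_x p_x\bra{x}\cE^s(\rho_E^x)\ket{x}$ for cq-states, then identify this optimization with the guessing-probability optimization by observing that $\cC(s)$ and $\cP(s)$ differ only by a free computational-basis measurement. You spell out the two directions of the circuit-model equivalence and the gate-budget conventions more explicitly than the paper, which simply states that any guessing circuit decomposes as a size-$s$ circuit followed by a measurement, but the argument is the same.
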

    \begin{proof}
        For this lemma, we use the same argument from~\cite{KRS09OperationalMeaningEntropy}. Let $\rho_{XE}= \sum_{x} p_x \proj{x}\otimes \rho^{x}_{E} $. Then, for any circuit, $\cE^s$, of size $s$ we have that
        \begin{align*}
            d_{X} F\left( (\mathbb{I}_X \otimes \cE^s)(\rho_{XE}), \proj{\Phi_{XX'}} \right) = \sum_{x} p_x \bra{x} \cE^s \left(\rho^{x}_{E} \right)\ket{x} \;.
        \end{align*}
        As such,
        \begin{equation}\label{eq:compminent-cq}
            \hmincompc[s](X|E)_{\rho} = -\log \left( \max_{\cE^s \in \cC(s)}\sum_{x} p_x \bra{x} \cE^s \left(\rho^{x}_{E} \right)\ket{x} \right) \;.
        \end{equation}
        Conversely, any guessing circuit, $\cF^{s}\in \cP(s)$, for the unpredictability entropy can be composed of a circuit of size $s$, which we also denote by $\cE^{s}\in \cC(s)$, and a subsequent measurement in the computational basis. Thus,
        \begin{align*}
            \hunps[s](X|E)_{\rho}
             & = -\log \left( \max_{\cF^s\in \cP(s)} \pr{\cF^s(\rho_{E}^x) = x}\right)                                         \\
             & =  -\log \left( \max_{\cE^s \in \cC(s)}\sum_{x} p_x \bra{x} \cE^s \left(\rho^{x}_{E} \right)\ket{x} \right) \;.
        \end{align*}
        This is the expression we derived for the computational min-entropy in the cq-case in Eq~\eqref{eq:compminent-cq} above.
    \end{proof}

    \begin{lemma}[Almost Equivalence for Smoothing cq-States]\label{lem:cq-version2}
        For any cq-state $\rho_{XE}$ and any $\varepsilon\ge 0 $, $s\in\mathbb{N}$, and $\ell = \log\dim(X) \in \mathbb{N}$:
        \begin{align*}
            \hmincompc(X|E)_{\rho} \geq \hunps[s]^{\varepsilon}(X|E)_{\rho} \geq \hmincompc[(s+l,\varepsilon)](X|E)_{\rho}  \;.
        \end{align*}
    \end{lemma}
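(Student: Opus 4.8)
The plan is to prove the two inequalities separately, using the exact cq-equivalence of Lemma~\ref{lem:cq-version1} as a bridge together with the trivial inclusion $\cB^{\mathrm{cq}}_{\varepsilon}(\rho)\subseteq\cB_{\varepsilon}(\rho)$ of the cq-smoothing ball into the full smoothing ball. Throughout I identify $A=X$, $B=E$, $A'=X'$, so that the optimizing circuit is $\cE_{E\to X'}\in\cC(s)$ and the fidelity target is $\proj{\Phi_{XX'}}$.

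For the left inequality $\hmincompc(X|E)_{\rho}\ge\hunps[s]^{\varepsilon}(X|E)_{\rho}$, I would simply unfold both definitions: each is the $-\log$ of a minimization over a smoothing ball of a maximized fidelity/guessing term. Since $\cB^{\mathrm{cq}}_{\varepsilon}(\rho)\subseteq\cB_{\varepsilon}(\rho)$, the minimum over the larger ball is no larger, so $d_{X}\min_{\tilde\rho\in\cB_{\varepsilon}(\rho)}\max_{\cE\in\cC(s)}F\le d_{X}\min_{\tilde\rho\in\cB^{\mathrm{cq}}_{\varepsilon}(\rho)}\max_{\cE\in\cC(s)}F$. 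On cq states the inner quantity equals the guessing probability by Lemma~\ref{lem:cq-version1}, so the right-hand side is exactly $2^{-\hunps[s]^{\varepsilon}(X|E)_{\rho}}$. Taking the (order-reversing) $-\log$ gives the claim. This direction is essentially bookkeeping.

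The right inequality $\hunps[s]^{\varepsilon}(X|E)_{\rho}\ge\hmincompc[(s+\ell,\varepsilon)](X|E)_{\rho}$ is the substantive one, because the min-entropy is allowed to smooth over coherent, non-cq states. Let $\tilde\rho^{*}_{XE}\in\cB_{\varepsilon}(\rho)$ attain (or nearly attain) the minimum defining $\hmincompc[(s+\ell,\varepsilon)]$, let $\Lambda_{X}$ be the completely dephasing channel on $X$ in the computational basis, and set $\hat\rho=\Lambda_{X}(\tilde\rho^{*})$, which is a cq state. Since $\Lambda_{X}$ is CPTP and fixes $\rho$ (because $\rho$ is already cq, $\Lambda_{X}(\rho)=\rho$), monotonicity of the purified distance gives $\bP(\hat\rho,\rho)=\bP(\Lambda_{X}(\tilde\rho^{*}),\Lambda_{X}(\rho))\le\bP(\tilde\rho^{*},\rho)\le\varepsilon$, so $\hat\rho\in\cB^{\mathrm{cq}}_{\varepsilon}(\rho)$ is feasible for the unpredictability entropy; with Lemma~\ref{lem:cq-version1} this yields $2^{-\hunps[s]^{\varepsilon}(X|E)_{\rho}}\le\max_{\cE\in\cC(s)}d_{X}F((\mathbb{I}\otimes\cE)(\hat\rho),\proj{\Phi_{XX'}})$. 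The crux is then a pinching-transfer identity: dephasing the input register $X$ before the circuit equals dephasing the output register $X'$ after it, as far as the overlap with $\proj{\Phi_{XX'}}$ is concerned. Since $\Lambda_{X}$ acts on $X$ and $\cE$ on $E$ they commute, and a short computation using $(\proj{x}_{X}\otimes\mathbb{I})\ket{\Phi_{XX'}}=d_{X}^{-1/2}\ket{x}_{X}\ket{x}_{X'}$ shows both sides equal $\sum_{x}\bra{x}_{X}\bra{x}_{X'}(\mathbb{I}\otimes\cE)(\tilde\rho^{*})\ket{x}_{X}\ket{x}_{X'}$; that is, for every $\cE\in\cC(s)$ one has $d_{X}F((\mathbb{I}\otimes\cE)(\hat\rho),\proj{\Phi_{XX'}})=d_{X}F((\mathbb{I}\otimes(\Lambda_{X'}\circ\cE))(\tilde\rho^{*}),\proj{\Phi_{XX'}})$. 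The dephasing $\Lambda_{X'}$ on the $\ell$-qubit register $X'$ costs at most $\ell$ gates (e.g.\ $\ell$ CNOTs onto fresh ancillas, then discarding them), so $\Lambda_{X'}\circ\cE\in\cC(s+\ell)$. Maximizing over $\cE$ therefore gives $\max_{\cE\in\cC(s)}d_{X}F((\mathbb{I}\otimes\cE)(\hat\rho),\proj{\Phi_{XX'}})\le\max_{\cE'\in\cC(s+\ell)}d_{X}F((\mathbb{I}\otimes\cE')(\tilde\rho^{*}),\proj{\Phi_{XX'}})=2^{-\hmincompc[(s+\ell,\varepsilon)](X|E)_{\rho}}$, and chaining the bounds yields $2^{-\hunps[s]^{\varepsilon}}\le 2^{-\hmincompc[(s+\ell,\varepsilon)]}$, i.e.\ the desired inequality.

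The main obstacle is precisely this transfer step: the min-entropy's advantage of smoothing over coherent states must be converted into a bona fide cq state without changing the value too much, and the only register where this coherence can re-enter is $X$, on which the optimizing circuit $\cE$ cannot act (it acts on $E$). Recognizing that dephasing $X$ can be pushed through the circuit to a cheap dephasing of the output register $X'$ is what both legitimizes the pinching and accounts for the $+\ell$ gate cost. I would additionally verify the routine technical points: attainment of the optimal $\tilde\rho^{*}$ (compactness of $\cB_{\varepsilon}(\rho)$ and continuity, or else an $\inf$-argument), preservation of sub-normalization under $\Lambda_{X}$, and that Lemma~\ref{lem:cq-version1} is invoked with the correct identification $A=X$, $B=E$, $A'=X'$.
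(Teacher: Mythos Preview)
Your proposal is correct and follows the same overall strategy as the paper: the first inequality is handled identically (inclusion $\cB^{\mathrm{cq}}_{\varepsilon}(\rho)\subseteq\cB_{\varepsilon}(\rho)$ plus Lemma~\ref{lem:cq-version1}), and for the second inequality both proofs dephase the $X$ register of the optimal smoothed state, invoke the data-processing inequality for the purified distance to land back in $\cB^{\mathrm{cq}}_{\varepsilon}(\rho)$, and then argue that this dephasing can be absorbed into the circuit at a cost of at most $\ell$ gates on the output register $X'$.

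The one genuine technical difference is in that last step. The paper realizes the dephasing of $X$ as a uniform average over the Pauli unitaries $Z^{\mathbf{i}}$, transfers each $Z^{\mathbf{i}}$ from $X$ to $X'$ via the transpose trick $(U\otimes\mathbb{I})\ket{\Phi}=(\mathbb{I}\otimes U^{\intercal})\ket{\Phi}$, and then uses ``equality on average implies existence of one good $\mathbf{i}$'' to exhibit a single \emph{unitary} circuit $Z^{\mathbf{i}}$ of size $\le\ell$ to append to $\cE^{(s)}$. You instead implement the full channel $\Lambda_{X'}$ directly ($\ell$ CNOTs to fresh ancillas, then discard), and your projector computation yields an exact equality rather than an averaging inequality. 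Your route is a bit more direct and avoids the probabilistic-method flavor; the paper's route has the mild advantage that the appended gates are unitary $Z$'s (no ancillas or tracing needed), which would matter in a more restrictive circuit model, but in the model of this paper both implementations are equally valid and cost the same $\ell$ gates.
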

    
    \begin{proof}
        The first inequality,
        \begin{align*}
            \hmincompc(X|E)_{\rho} & \geq \hunps[s]^{\varepsilon}(X|E)_{\rho} \; ,
        \end{align*}
        directly follows from  Lemma~\ref{lem:cq-version1} and the fact that any state
        $\tilde{\rho}_{XE} \in \cB^{\mathrm{cq}}_{\varepsilon}(\rho) $ also trivially satisfies $\tilde{\rho}_{XE} \in \cB_{\varepsilon}(\rho) $.

        To prove the latter inequality, let us consider any circuit of size $s$, $\cE^{(s)}$, which acts on a fully quantum state $\tilde{\rho}_{XE} \in \cB_{\varepsilon}\left(\rho\right)$. Moreover, let $\tilde{\rho}_{\mathcal{M}(X)E} = \sum_{x} \ketbra{x}{x} \otimes \bra{x} \tilde{\rho}_{XE}\ket{x}$  denote the post-measurement state after measuring the register $X$ in the computational basis. 
        Due to the data processing inequality for the purified distance~\cite[Proposition 3.2]{Tomamichel_2016} and the definition of~$\cB^{\mathrm{cq}}_{\varepsilon}(\rho)$, it must also hold that $\tilde{\rho}_{\mathcal{M}(X)E}\in \cB^{\mathrm{cq}}_{\varepsilon}(\rho)$.
        We now show that for every circuit, $\cE^{(s)}$, there exists another circuit, $\cE^{\prime \left(s+l\right)}$, of size at most $s+l$ such that
        \begin{align} \label{Eq: AFidIneq}
            F\left( \tilde{\rho}_{\mathcal{M}(X)\cE^s \left(E\right)}, \proj{\Phi_{XX'}} \right) \leq F\left( \tilde{\rho}_{X\cE^{\prime \left(s+l\right)} \left(E\right)}, \proj{\Phi_{XX'}} \right)
        \end{align}
        holds, where $\tilde{\rho}_{X\cE^{\prime \left(s+l\right)} (E)}= \cE^{\prime \left(s+l\right)} \left(\tilde{\rho}_{XE} \right)$ and $\tilde{\rho}_{\mathcal{M}(X)\cE^s (E)}= \cE^{(s)} \left(\tilde{\rho}_{\mathcal{M}(X)E} \right)$.
        This would then directly imply the second inequality. To see this, first note that
        \begin{align*}
            \hmincompc[(s+l,\varepsilon)](X|E)_{\rho} \leq
            -\log d_{X} \min_{\tilde{\rho}\in \cB_{\varepsilon}(\rho)}\max_{\cE^{\prime} }
            F\left( (\mathbb{I}_X \otimes \cE^{\prime})(\tilde{\rho}_{XE}), \proj{\Phi_{XX'}} \right) \; ,
        \end{align*}
        as the RHS is only optimizing over a subset of quantum circuits, $\cE^{\prime}$,  of size at most $s+l$, which are constructed from a circuit $\cE$ of size at most $s$. Moreover, given that one can construct a circuit $ \cE^{\prime \left(s+l\right)} $ from every circuit $\cE^{(s)}$ of size less than or equal to $s$, it follows from Eq.~\eqref{Eq: AFidIneq} that
        \begin{align*}
            \phantom{ \hmincompc[(s+l,\varepsilon)](X|E)_{\rho} \leq}
                  -\log d_{X} &\min_{\tilde{\rho}\in \cB_{\varepsilon}( \rho)}\max_{\cE^{\prime} }
            F\left( (\mathbb{I}_X \otimes \cE^{\prime})(\tilde{\rho}_{XE}), \proj{\Phi_{XX'}} \right) \\
            \leq & -\log d_{X} \min_{\tilde{\rho}\in \cB_{\varepsilon}(\rho)}\max_{\cE }
            F\left( (\mathcal{M} \otimes \cE)(\tilde{\rho}_{XE}), \proj{\Phi_{XX'}} \right)           \\
            =    & \hunps[s]^{\varepsilon}(X|E)_{\rho} \;.
        \end{align*}
        The last equality holds, as optimizing over all $\tilde{\rho}_{\mathcal{M}(X)E}$ such that $\tilde{\rho}_{XE} \in \cB_{\varepsilon}(\rho)$ is equivalent to optimizing over all $\rho^{\prime}_{XE}  \in \cB^{\mathrm{cq}}_{\varepsilon}(\rho)$; i.e. not only is $\tilde{\rho}_{\mathcal{M}(X)E}  \in \cB^{\mathrm{cq}}_{\varepsilon}(\rho)$, but
        for every ${\rho^{\prime}_{XE}  \in \cB^{\mathrm{cq}}_{\varepsilon}(\rho)}$, there exists a $\rho_{XE}  \in \cB_{\varepsilon}(\rho)$ such that $\tilde{\rho}_{\mathcal{M}(X)E} = \rho^\prime$. Combined, this would then yield the desired inequality,
        \begin{align*}
            \hmincompc[(s+l,\varepsilon)](X|E)_{\rho} \leq \hunps[s]^{\varepsilon}(X|E)_{\rho} \; .
        \end{align*}

        Let us now discuss how $\cE^{\prime \left(s+l\right)}$ is constructed. Without loss of generality, we assume that $\ell = \log\dim(X) = \log\dim(X^\prime)$. Moreover, let us denote the computational basis of $\mathcal{H}_{X}$ (and $\mathcal{H}_{X^\prime}$) by the set $\{ \ket{ \mathbf{k}} \}_{\mathbf{k} \in \{ 0,1 \}^{l}}$ and, correspondingly, let $\{ \ket{ \mathbf{k};\mathbf{l}}\}_{\mathbf{k}, \mathbf{l}\in \{ 0,1 \}^{\ell}}$ denote the computational basis of $\mathcal{H}_{X X^\prime}$, where $ \ket{ \mathbf{k};\mathbf{l}} \coloneq \ket{\mathbf{k}} \otimes \ket{\mathbf{l}}$. Then, in this basis, $\tilde{\rho}_{X\cE^s \left(E\right)}$ can be written as
        \begin{align*}
            \tilde{\rho}_{X\cE^s
                \left(E\right)} = \sum_{\mathbf{k},\mathbf{l},\mathbf{m},\mathbf{n}} \tilde{\rho}_{\mathbf{k},\mathbf{l},\mathbf{m},\mathbf{n}} \ketbra{\mathbf{k};\mathbf{l}}{\mathbf{m};\mathbf{n}} \; .
        \end{align*}
        For a given unitary circuits $Z^{\mathbf{i}} \coloneq Z^{i_1} \otimes \cdots \otimes Z^{i_\ell}$, where
        $Z= \begin{pmatrix}
                1 & 0  \\
                0 & -1
            \end{pmatrix}$
        and ${i_{1},\dots, i_{\ell} \in \{ 0,1 \}}$, one finds that
        \begin{align*}
            \tilde{\rho}_{Z^{\mathbf{i}}(X)\cE^s
            \left(E\right)} = \sum_{\mathbf{k},\mathbf{l},\mathbf{m},\mathbf{n}} (-1)^{\mathbf{i} \cdot \left( \mathbf{k} \oplus \mathbf{m}\right)}\tilde{\rho}_{\mathbf{k},\mathbf{l},\mathbf{m},\mathbf{n}} \ketbra{\mathbf{k};\mathbf{l}}{\mathbf{m};\mathbf{n}} \; .
        \end{align*}
        Note that whenever $\mathbf{k} = \mathbf{m}$ holds, $ (-1)^{\mathbf{i} \cdot \left( \mathbf{k} \oplus \mathbf{m}\right)}=1$. Conversely, for $\mathbf{k} \neq \mathbf{m}$, there exist $ 2^{n-1}$ choices of $\mathbf{i}$ such that $ (-1)^{\mathbf{i} \cdot \left( \mathbf{k} \oplus \mathbf{m}\right)}=1$ and $ 2^{n-1}$ choices such that $ (-1)^{\mathbf{i} \cdot \left( \mathbf{k} \oplus \mathbf{m}\right)}=-1$.\footnote{This is similar to the technique used in~\cite{Bennett_1996}.}

        Averaging over all unitary circuits $Z^{\mathbf{i}}$, the only terms that do not cancel out are those for which $\mathbf{k} = \mathbf{m}$, i.e.
        \begin{align*}
            F\left( \tilde{\rho}_{\mathcal{M}(X)\cE^s
                \left(E\right)}, \proj{\Phi_{XX'}} \right) &=  F\left( \frac{1}{2^\ell} \sum_{\mathbf{i}}\left( Z^{\mathbf{i}} \otimes \mathbb{I}_{X'} \right) \left(\tilde{\rho}_{X\cE^{\left(s\right)} \left(E\right)} \right), \proj{\Phi_{XX'}} \right)\\
                &= \frac{1}{2^\ell} \sum_{\mathbf{i}} F\left( \left( Z^{\mathbf{i}} \otimes \mathbb{I}_{X'} \right) \left(\tilde{\rho}_{X\cE^{\left(s\right)} \left(E\right)} \right), \proj{\Phi_{XX'}} \right) \; .
        \end{align*}
        Moreover, it holds that
        \begin{align*}
             & F\left( \tilde{\rho}_{\mathcal{M}(X)\cE^s
            \left(E\right)}, \proj{\Phi_{XX'}} \right)                                                    \\
             & = \frac{1}{2^\ell} \sum_{\mathbf{i}} F\left( \left( Z^{\mathbf{i}} \otimes \mathbb{I}_{X'} \right) \left(\tilde{\rho}_{X\cE^{\left(s\right)} \left(E\right)} \right), \proj{\Phi_{XX'}} \right)          \\
             & = \frac{1}{2^\ell} \sum_{\mathbf{i}} F\left( \tilde{\rho}_{X\cE^{\left(s\right)} \left(E\right)} , \left( Z^{\mathbf{i} \dagger} \otimes \mathbb{I}_{X'} \right)\left(\proj{\Phi_{XX'}} \right)\right)   \\
             & = \frac{1}{2^\ell} \sum_{\mathbf{i}} F\left( \tilde{\rho}_{X\cE^{\left(s\right)} \left(E\right)} , \left( \mathbb{I}_{X} \otimes  Z^{\mathbf{i} \star} \right)\left(\proj{\Phi_{XX'}} \right)\right)     \\
             & = \frac{1}{2^\ell} \sum_{\mathbf{i}} F\left( \left( \mathbb{I}_{X} \otimes  Z^{\mathbf{i} \intercal} \right)\left(\tilde{\rho}_{X\cE^{\left(s\right)} \left(E\right)}\right) , \proj{\Phi_{XX'}} \right) \\
             & = \frac{1}{2^\ell} \sum_{\mathbf{i}} F\left( \left( \mathbb{I}_{X} \otimes  Z^{\mathbf{i}} \right)\left(\tilde{\rho}_{X\cE^{\left(s\right)} \left(E\right)}\right) , \proj{\Phi_{XX'}} \right) \;.
        \end{align*}
        Here, $\star$ and $\intercal$ denote the complex conjugate and transpose, respectively. Moreover, for all $\mathbf{i}$, it holds that 
        \begin{align*}
            Z^{\mathbf{i} \star} = Z^{\mathbf{i} \intercal} = Z^{\mathbf{i}} \; .
        \end{align*}
    Since equality holds on average, there must thus exist at least one circuit, $Z^{\mathbf{i}}$, which can be appended to $\cE^{(s)}$, for which the desired inequality must hold. Moreover, this entire circuit has a circuit size of at most $s+l$.
    \end{proof}

    \subsubsection{$\mathrm{H}_{\min}^{\varepsilon}$ Relation}
    
    \begin{lemma}[Relation to Smooth Min-Entropy]
        Let $\rho_{AB}$ be a bipartite state, let $s\in\mathbb{N},\varepsilon\ge 0$
        \begin{align*}
            \hmincompc(A|B)_{\rho}                   & \ge \mathrm{H}_{\min}^{\varepsilon}(A|B)_{\rho}  \;, \\
            \lim_{s\to\infty} \hmincompc(A|B)_{\rho} & = \mathrm{H}_{\min}^{\varepsilon}(A|B)_{\rho}  \;.
        \end{align*}
    \end{lemma}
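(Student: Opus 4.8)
The plan is to write both entropies in the common form $-\log\bigl(d_A\cdot(\text{min-max})\bigr)$ and then compare only their inner maximizations. Setting
\[
f(\tilde\rho,\cE) \;=\; F\bigl((\mathbb{I}_A\otimes\cE)(\tilde\rho_{AB}),\proj{\Phi_{AA'}}\bigr),
\]
the computational quantity reads $\hmincompc(A|B)_\rho = -\log\bigl(d_A\,m_s\bigr)$ with $m_s := \min_{\tilde\rho\in\cB_\varepsilon(\rho)}\max_{\cE\in\cC(s)} f(\tilde\rho,\cE)$, whereas the operational form of the information-theoretic min-entropy from~\cite{KRS09OperationalMeaningEntropy}, combined with the definition of smoothing, gives $\mathrm{H}_{\min}^\varepsilon(A|B)_\rho = -\log\bigl(d_A\,m_\infty\bigr)$ with $m_\infty := \min_{\tilde\rho\in\cB_\varepsilon(\rho)}\max_{\cE} f(\tilde\rho,\cE)$, the inner maximum now taken over all channels $\cE_{B\to A'}$.

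For the inequality I would simply observe that $\cC(s)$ is a subset of all channels, so $\max_{\cE\in\cC(s)} f(\tilde\rho,\cE)\le\max_{\cE} f(\tilde\rho,\cE)$ for every $\tilde\rho$; minimizing over $\tilde\rho\in\cB_\varepsilon(\rho)$ preserves the inequality, giving $m_s\le m_\infty$, and applying the decreasing map $-\log(d_A\,\cdot\,)$ yields $\hmincompc(A|B)_\rho\ge\mathrm{H}_{\min}^\varepsilon(A|B)_\rho$.

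For the limit, one direction is immediate: by the monotonicity in $s$ already proven, $\hmincompc(A|B)_\rho$ is non-increasing in $s$ and bounded below by $\mathrm{H}_{\min}^\varepsilon(A|B)_\rho$, so the limit exists and satisfies $\lim_{s\to\infty}\hmincompc(A|B)_\rho\ge\mathrm{H}_{\min}^\varepsilon(A|B)_\rho$, equivalently $\lim_s m_s\le m_\infty$. The reverse bound $\lim_s m_s\ge m_\infty$ is the crux. I would first establish pointwise convergence $\max_{\cE\in\cC(s)} f(\tilde\rho,\cE)\to\max_{\cE} f(\tilde\rho,\cE)$ for each fixed $\tilde\rho$: dilate the optimal channel $\cE^\ast$ through a Stinespring unitary on $B$ together with a bounded ancilla followed by discarding, approximate this unitary to arbitrary precision by a circuit using universality of the gate set, and transfer the approximation to $f$ by continuity of the fidelity. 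Because the finite gate set makes each $\cC(s)$ a finite set, the function $g_s := \max_{\cE\in\cC(s)} f(\cdot\,,\cE)$ is continuous and non-decreasing in $s$, and $\cB_\varepsilon(\rho)$ is compact.

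I would then perform the min--limit interchange by a compactness argument (a form of Dini's theorem). Assume for contradiction that $\lim_s m_s = L < m_\infty$; choose minimizers $\tilde\rho_s$ with $g_s(\tilde\rho_s)=m_s\le L$ and pass to a convergent subsequence $\tilde\rho_{s_k}\to\tilde\rho^\ast$. For every fixed $s_0$ and all $s_k\ge s_0$, monotonicity gives $g_{s_0}(\tilde\rho_{s_k})\le g_{s_k}(\tilde\rho_{s_k})\le L$; letting $k\to\infty$ and using continuity of $g_{s_0}$ yields $g_{s_0}(\tilde\rho^\ast)\le L$, and then $s_0\to\infty$ together with the pointwise convergence gives $m_\infty\le g_\infty(\tilde\rho^\ast)=\lim_{s_0} g_{s_0}(\tilde\rho^\ast)\le L$, contradicting $L<m_\infty$. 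Hence $\lim_s m_s=m_\infty$, and applying $-\log(d_A\,\cdot\,)$ delivers the stated limit. I expect the main obstacle to be exactly this interchange of the $s\to\infty$ limit with the minimization over the smoothing ball: pointwise convergence does not by itself control the minima, so one must combine compactness of $\cB_\varepsilon(\rho)$ with the monotonicity of $g_s$ in $s$.
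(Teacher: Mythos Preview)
Your argument is correct and rests on the same two ingredients the paper invokes: the operational characterization of $\mathrm{H}_{\min}$ from~\cite{KRS09OperationalMeaningEntropy} and universality of the gate set. The paper's own proof is a one-line sketch that simply cites these facts; you have supplied the details the paper omits.

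The one place where you go beyond the paper is the interchange of $\lim_{s\to\infty}$ with $\min_{\tilde\rho\in\cB_\varepsilon(\rho)}$. The paper does not address this at all, implicitly treating pointwise convergence of $g_s$ as sufficient. Your compactness/Dini-type argument (extract a convergent subsequence of minimizers, use monotonicity of $g_s$ in $s$ and continuity of each $g_{s_0}$) is a clean and correct way to close this gap; without it the limit statement is not fully justified, since $\lim_s\min_{\tilde\rho}g_s(\tilde\rho)=\min_{\tilde\rho}\lim_s g_s(\tilde\rho)$ can fail absent compactness or monotonicity. One minor point: your claim that $\cC(s)$ is finite should be read modulo untouched ancillas---with a finite gate set and at most $s$ two-qubit gates, only finitely many distinct channels arise, so each $g_s$ is indeed a maximum of finitely many continuous functions and hence continuous.
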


    \begin{proof}
        The proof directly results from the operational meaning of min-entropy as the negative log of the quantum correlation achievable by any quantum strategy, with no computational limitations~\cite[Theorem 2]{KRS09OperationalMeaningEntropy}, along with the fact that circuits of unbounded size using a universal gate set can approximate any unitary to arbitrary accuracy.
    \end{proof}

    \subsection{Proof of Lemma~\ref{lem:PropertiesCompMaxEnt}}
    \PropertiesCompMaxEnt*
    \subsubsection{Monotonicity}
    \begin{lemma}[Monotonicity in $s$]
        For any $s' \ge s$ and any bipartite state $\rho_{AB}$,
        \begin{equation*}
            \hmaxcompc(A|B)_{\rho} \le \hmaxcompc[\left(s',\varepsilon\right)](A|B)_{\rho}  \;.
        \end{equation*}
    \end{lemma}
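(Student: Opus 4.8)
The plan is to reduce this monotonicity statement to the already-established monotonicity of the computational min-entropy in $s$ (part 1 of Lemma~\ref{Lem:PropertiesCompMinEnt}), exploiting the duality definition of the computational max-entropy. The two ingredients that do the work are that negation reverses the direction of an inequality, and that the smoothing optimization for the max-entropy is a minimum rather than a maximum; these two order reversals conspire to keep the final inequality pointing in the stated direction.

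First I would unfold the definitions. By Definition~\ref{def:smooth_max_comp_entropy}, the smooth quantity is $\hmaxcompc[(s,\varepsilon)](A|B)_{\rho} = \min_{\tilde{\rho}\in\cB_\varepsilon(\rho)} \hmaxcompc[s](A|B)_{\tilde{\rho}}$, and by Definition~\ref{Def:DualRel} each non-smooth term equals $-\hmincompc[s](A|C)_{\tilde{\rho}^{\textrm{pg}}}$, where $\tilde{\rho}^{\textrm{pg}}$ denotes the pretty good purification of $\tilde{\rho}_{AB}$ from Definition~\ref{Def:PrettyGoodPurification}. The crucial observation is that the pretty good purification is a fixed, deterministic choice of extension that does not depend on the circuit budget. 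Hence, for each fixed $\tilde{\rho}$, the two budgets $s\le s'$ are being compared on the \emph{same} underlying purified state $\tilde{\rho}^{\textrm{pg}}$, with conditioning on the same register $C$.

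Next I would apply the min-entropy monotonicity pointwise in the smoothing parameter. For every fixed $\tilde{\rho}\in\cB_\varepsilon(\rho)$, part 1 of Lemma~\ref{Lem:PropertiesCompMinEnt} gives $\hmincompc[s](A|C)_{\tilde{\rho}^{\textrm{pg}}}\ge \hmincompc[s'](A|C)_{\tilde{\rho}^{\textrm{pg}}}$ whenever $s'\ge s$, since enlarging the gate budget only enlarges the set $\cC(s)$ of admissible recovery circuits. Negating both sides flips the inequality, yielding $\hmaxcompc[s](A|B)_{\tilde{\rho}}\le \hmaxcompc[s'](A|B)_{\tilde{\rho}}$ for every $\tilde{\rho}$ in the smoothing ball.

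Finally I would pass to the minima. Since the pointwise inequality holds throughout $\cB_\varepsilon(\rho)$, I would evaluate the left-hand minimization at the minimizer $\tilde{\rho}^\star$ of the right-hand side, obtaining $\hmaxcompc[(s,\varepsilon)](A|B)_{\rho}\le \hmaxcompc[s](A|B)_{\tilde{\rho}^\star}\le \hmaxcompc[s'](A|B)_{\tilde{\rho}^\star}=\hmaxcompc[(s',\varepsilon)](A|B)_{\rho}$. I expect no substantive obstacle here; the argument is structurally identical to the min-entropy monotonicity proof, differing only by the duality bridge. The one point demanding care is the bookkeeping of the two sign/order reversals (negation in the duality definition and $\min$-versus-$\max$ in smoothing), so that the inequality is not accidentally inverted.
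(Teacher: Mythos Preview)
Your proof is correct. Both arguments ultimately rest on the same observation---that enlarging the circuit budget from $s$ to $s'$ only enlarges the feasible set $\cC(s)\subseteq\cC(s')$ over which the fidelity is maximized---but you route this observation through the duality definition and the already-established min-entropy monotonicity (Lemma~\ref{Lem:PropertiesCompMinEnt}, part~1), whereas the paper argues directly from the alternative expression in Lemma~\ref{Prop:nosmooth_quantum_comp_maxentropy}. Your version is more modular, reusing a prior result and handling the smoothing minimization more explicitly (pointwise inequality, then pass to the minimum); the paper's version is more self-contained but terser, leaving the smoothing step implicit. Neither approach offers a real advantage here since the underlying set-inclusion argument is identical, but your careful bookkeeping of the two order reversals (negation from duality, $\min$ from smoothing) is a nice touch that the paper's proof glosses over.
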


    \begin{proof}
        Given any computational parameter, $s$, the computational max-entropy $\hmaxcompc(A|B)_{\rho}$ is defined via maximization over all quantum circuits $\cE_{C \to A'}$ of size at most $s$. If we increase $s$ to $s'\ge s$, the optimization would be over all circuits of size $s'$, which includes all circuits of size $s$. Therefore, the maximal fidelity achievable can only grow and thus:
        \begin{equation*}
            \hmaxcompc(A|B)_{\rho} \le \hmaxcompc[\left(s',\varepsilon\right)](A|B)_{\rho} \;. \qedhere
        \end{equation*}
    \end{proof}

    \begin{lemma}[Monotonicity in $\varepsilon$]
        For any $\varepsilon' \ge \varepsilon$ and any bipartite state $\rho_{AB}$,
        \begin{equation*}
        \hmaxcompc[\left(s,\varepsilon'\right)](A|B)_{\rho}
            \le \hmaxcompc(A|B)_{\rho}   \;.
        \end{equation*}
    \end{lemma}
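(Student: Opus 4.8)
The plan is to reduce the claim to a straightforward set-containment argument, paralleling the proof of monotonicity in $\varepsilon$ for the computational min-entropy but with the inequality reversed. I would first recall that the smooth computational max-entropy (Definition~\ref{def:smooth_max_comp_entropy}) is a \emph{minimization} of the non-smooth quantity $\hmaxcompc[s](A|B)_{\tilde\rho}$ over all $\tilde\rho \in \cB_\varepsilon(\rho_{AB})$. This is the decisive structural feature: unlike the min-entropy, whose smoothing effectively appears as a maximization because of the leading minus sign in front of the $\log$, here the optimization in the definition is a genuine minimum.

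The key observation is that the purified distance is a genuine metric, so whenever $\varepsilon' \ge \varepsilon$, any state $\tilde\rho$ with $\bP(\rho_{AB},\tilde\rho)\le\varepsilon$ also satisfies $\bP(\rho_{AB},\tilde\rho)\le\varepsilon'$; hence $\cB_\varepsilon(\rho_{AB}) \subseteq \cB_{\varepsilon'}(\rho_{AB})$. Since minimizing a fixed function over a larger set can only decrease (or leave unchanged) the minimum, I would conclude
\begin{equation*}
    \hmaxcompc[\left(s,\varepsilon'\right)](A|B)_{\rho} = \min_{\tilde\rho\in\cB_{\varepsilon'}(\rho)}\hmaxcompc[s](A|B)_{\tilde\rho} \le \min_{\tilde\rho\in\cB_{\varepsilon}(\rho)}\hmaxcompc[s](A|B)_{\tilde\rho} = \hmaxcompc(A|B)_{\rho} \;,
\end{equation*}
which is exactly the claim.

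I do not anticipate any genuine obstacle here: the only point that needs verifying is the ball containment $\cB_\varepsilon(\rho_{AB}) \subseteq \cB_{\varepsilon'}(\rho_{AB})$, which is immediate from the metric property of the purified distance (Definition~\ref{def:pur_dist}). The contrast with the min-entropy case of Lemma~\ref{Lem:PropertiesCompMinEnt} is purely a matter of sign: there, enlarging the smoothing ball \emph{increases} the entropy, whereas here the smoothing is a minimization, so enlarging the ball \emph{decreases} it, accounting for the reversed direction of the inequality. The argument is thus entirely parallel to the min-entropy monotonicity proof, merely tracking which way the optimization points.
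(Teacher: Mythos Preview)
Your proposal is correct and follows essentially the same approach as the paper: both argue that $\cB_\varepsilon(\rho_{AB})\subseteq\cB_{\varepsilon'}(\rho_{AB})$ for $\varepsilon'\ge\varepsilon$, and since the smooth computational max-entropy is defined as a minimum over this ball, enlarging the feasible set can only decrease the minimum. Your write-up is in fact slightly more precise than the paper's, which loosely refers to a ball around $\rho_{ABC}$ rather than $\rho_{AB}$.
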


    \begin{proof}
        Given any smoothing parameter, $\varepsilon$, the smooth computational max-entropy $\hmaxcompc(A|B)_{\rho}$ is defined by a purified ball around $\rho_{ABC}$.
        If we increase $\varepsilon$ to $\varepsilon' \ge \varepsilon$, the optimization would be over a larger ball centered around the same state. Thus, all the same circuits and states are available for the larger $\varepsilon'$, along with additional states. Therefore, only a smaller minimum can be achieved by $\varepsilon'$-smoothing, meaning
        \begin{equation*}
             \hmaxcompc[\left(s,\varepsilon'\right)](A|B)_{\rho} \le \hmaxcompc(A|B)_{\rho} \;. \qedhere
        \end{equation*}
    \end{proof}
\subsubsection{$\mathrm{H}_{\max}^{\varepsilon}$ Relation}
    \begin{lemma}[$\mathrm{H}_{\max}^{\varepsilon}$ Relation]
        $\mathrm{H}_{\max}^{\varepsilon}$ Relation:  For any $\rho_{AB}$, $s \in \mathbb{N}$, and $\varepsilon \geq 0$,
        \begin{align}
            \hmaxcompc(A|B)_{\rho}                   & \le \mathrm{H}_{\max}^{\varepsilon}(A|B)_{\rho}\;, \\
            \lim_{s\to\infty} \hmaxcompc(A|B)_{\rho} & = \mathrm{H}_{\max}^{\varepsilon}(A|B)_{\rho}  \;.
        \end{align}
    \end{lemma}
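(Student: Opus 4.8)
The plan is to reduce both claims to the corresponding statements for the computational min-entropy (item~6 of Lemma~\ref{Lem:PropertiesCompMinEnt}) through the duality \emph{definition} of the max-entropy, and then handle the smoothing by a compactness argument. First I would establish the \emph{non-smooth} relation $\hmaxcompc[s](A|B)_{\tilde{\rho}} \le \mathrm{H}_{\max}(A|B)_{\tilde{\rho}}$ for every fixed state $\tilde{\rho}_{AB}$. By Definition~\ref{Def:DualRel} we have $\hmaxcompc[s](A|B)_{\tilde{\rho}} = -\hmincompc[s](A|C)_{\tilde{\rho}^{\textrm{pg}}}$, where $\tilde{\rho}^{\textrm{pg}}_{ABC}$ is the pretty good purification of $\tilde{\rho}_{AB}$. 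The $\mathrm{H}_{\min}^{\varepsilon}$ relation of Lemma~\ref{Lem:PropertiesCompMinEnt} with smoothing parameter $0$ gives $\hmincompc[s](A|C)_{\tilde{\rho}^{\textrm{pg}}} \ge \mathrm{H}_{\min}(A|C)_{\tilde{\rho}^{\textrm{pg}}}$, and since $\tilde{\rho}^{\textrm{pg}}_{ABC}$ is pure, the information-theoretic duality (Lemma~\ref{lem:duality_for_smooth_info_min_max_entropy} at $\varepsilon=0$) yields $\mathrm{H}_{\min}(A|C)_{\tilde{\rho}^{\textrm{pg}}} = -\mathrm{H}_{\max}(A|B)_{\tilde{\rho}^{\textrm{pg}}} = -\mathrm{H}_{\max}(A|B)_{\tilde{\rho}}$, the last step because the information-theoretic max-entropy depends only on the marginal $\tilde{\rho}_{AB}$. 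Chaining these gives $\hmaxcompc[s](A|B)_{\tilde{\rho}} \le \mathrm{H}_{\max}(A|B)_{\tilde{\rho}}$.

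The first claim, Eq.~\eqref{Eq:CompmaxHmaxEpsRelation}, then follows immediately: minimizing both sides over $\tilde{\rho}\in\cB_{\varepsilon}(\rho)$ and recalling Definition~\ref{def:smooth_max_comp_entropy} together with the definition of $\mathrm{H}_{\max}^{\varepsilon}$ gives $\hmaxcompc(A|B)_{\rho} = \min_{\tilde{\rho}} \hmaxcompc[s](A|B)_{\tilde{\rho}} \le \min_{\tilde{\rho}} \mathrm{H}_{\max}(A|B)_{\tilde{\rho}} = \mathrm{H}_{\max}^{\varepsilon}(A|B)_{\rho}$.

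For the limit I would combine monotone convergence with a uniformity argument, and this is where the main work lies. For each fixed $\tilde{\rho}$, the limit clause of Lemma~\ref{Lem:PropertiesCompMinEnt} (again at smoothing $0$), passed through the duality, gives the pointwise convergence $f_s(\tilde{\rho}) := \hmaxcompc[s](A|B)_{\tilde{\rho}} \nearrow \mathrm{H}_{\max}(A|B)_{\tilde{\rho}} =: f_\infty(\tilde{\rho})$, which is monotone increasing in $s$ by item~1 of Lemma~\ref{lem:PropertiesCompMaxEnt}. The key observation is that for a fixed-dimensional register $C$ the maximum defining $\hmincompc[s]$ ranges over only finitely many distinct channels, since a size-$s$ circuit draws gates from the fixed finite universal set and touches at most $\log\dim(C) + 2s$ wires. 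Hence $f_s$ is a maximum of finitely many terms of the form $-\log d_A \bra{\Phi_{AA'}}(\mathbb{I}_A\otimes\cE)(\tilde{\rho}^{\textrm{pg}}_{AC})\ket{\Phi_{AA'}}$, each continuous in $\tilde{\rho}_{AB}$: the pretty good purification is built from the matrix square root $\sqrt{\tilde{\rho}_{AB}}$, which depends continuously on $\tilde{\rho}_{AB}$, and fidelity with a pure state is linear in its argument. Thus each $f_s$ is continuous on the compact ball $\cB_{\varepsilon}(\rho)$, as is $f_\infty$, so by Dini's theorem the monotone convergence $f_s \nearrow f_\infty$ is uniform, allowing the interchange $\lim_{s\to\infty}\min_{\tilde{\rho}} f_s(\tilde{\rho}) = \min_{\tilde{\rho}} f_\infty(\tilde{\rho}) = \mathrm{H}_{\max}^{\varepsilon}(A|B)_{\rho}$. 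The main obstacle is precisely this interchange: without uniform convergence the minimizers for finite $s$ could drift, so the crux is justifying continuity of $f_s$ (via finiteness of $\cC(s)$ at fixed dimension and continuity of the pretty good purification) so that Dini applies.
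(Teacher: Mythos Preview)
Your argument for the first inequality is essentially the paper's: both establish the non-smooth bound $\hmaxcompc[s](A|B)_{\tilde\rho}\le \mathrm{H}_{\max}(A|B)_{\tilde\rho}$ via the duality Definition~\ref{Def:DualRel}, item~6 of Lemma~\ref{Lem:PropertiesCompMinEnt} at $\varepsilon=0$, and Lemma~\ref{lem:duality_for_smooth_info_min_max_entropy}, then pass to the smooth version (you minimize both sides over the ball; the paper instead picks the optimizer of $\mathrm{H}_{\max}^{\varepsilon}$ and chains the same inequalities---equivalent moves).

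For the limit you take a genuinely different and more careful route. The paper argues, working with a fixed state and the alternative expression of Lemma~\ref{Prop:nosmooth_quantum_comp_maxentropy}, that the optimal Uhlmann channel $V_{C\to A'}$ realizing $\mathrm{H}_{\max}$ can be approximated arbitrarily well by size-$s$ circuits via universality, and then jumps to the smooth conclusion without discussing the interchange of $\lim_{s}$ with $\min_{\tilde\rho\in\cB_{\varepsilon}(\rho)}$. Your Dini-based argument closes precisely this gap: monotonicity of $f_s$ in $s$, finiteness of $\cC(s)$ at fixed input dimension (your wire-counting observation), and continuity of $\tilde\rho_{AB}\mapsto\tilde\rho^{\textrm{pg}}_{AC}$ via the operator square root make each $f_s$ continuous on the compact ball, so the interchange is justified. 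This is a real improvement over the paper's treatment. Two small points: the displayed terms should read $+\log d_A\bra{\Phi_{AA'}}(\cdot)\ket{\Phi_{AA'}}$ rather than $-\log$, since $f_s=\hmaxcompc[s]=-\hmincompc[s]$ (harmless for continuity); and Dini requires continuity of the limit $f_\infty=\mathrm{H}_{\max}(A|B)$ as well, which you assert but should justify, e.g.\ via joint continuity of the generalized fidelity and compactness of the domain of $\sigma_B$.
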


    \begin{proof}
Let $\tilde{\rho}_{AB} \in \cB_{\varepsilon}(\rho_{AB})$ denote the optimizer for the smooth max-entropy, i.e.
\begin{align*}
     \mathrm{H}_{\max}(A|B)_{\tilde{\rho}} =  \mathrm{H}_{\max}^{\varepsilon}(A|B)_{\rho} \; .
\end{align*}
Moreover, let $\ket{\tilde{\rho}^{\textrm{pg}}}_{ABC} = \diff{\sqrt{d_{AB}}} 
        \left(\sqrt{\tilde{\rho}_{AB}} \otimes \mathbb{I}_C\right)
        \ket{\Phi}_{AB|C}$ be the pretty good purification for this optimizer. It then follows from Lemma~\ref{lem:duality_for_smooth_info_min_max_entropy}, Lemma~\ref{Lem:PropertiesCompMinEnt}, and the definition of the computational max-entropy that
        \begin{align*}
            \mathrm{H}_{\max}^{\varepsilon}(A|B)_{\rho}  &= \mathrm{H}_{\max}(A|B)_{\tilde{\rho}}\\
            &= -\mathrm{H}_{\min}(A|C)_{\tilde{\rho}^{\textrm{pg}}} \\
            &\geq -\hmincompc[s](A|C)_{\tilde{\rho}^{\textrm{pg}}} \\
            &= \diff{ \hmaxcompc[s](A|B)_{\tilde{\rho}}} \\
            & \geq \hmaxcompc(A|B)_{\rho}
            \; .
        \end{align*}
This proves the first inequality,
        \begin{equation*}
            \hmaxcompc(A|B)_{\rho} \le \mathrm{H}_{\max}^{\varepsilon}(A|B)_{\rho}\;.
        \end{equation*}
To prove the second claim, recall that the information-theoretic max entropy is given by
        \begin{equation*}
            \mathrm{H}_{\max}(A|B)_{\rho} = \max_{\sigma_{B}\in \cS_{\circ}(B)}\log F(\diff{\rho_{AB}},\1_{A}\otimes\sigma_{B})\;.
        \end{equation*}
By the extension property of the fidelity, for any extension of $\mathbb{I}_{A}\otimes\sigma_{B}$ there is an extension of $\rho_{AB}$ such that the fidelity is preserved. In particular, for the extension $\proj{\Psi_{AA'}}\otimes\sigma_{B}$ there exists a state $\rho_{ABA'}$ such that:
        \begin{equation*}
            \max_{\sigma_{B}\in \cS_{\circ}(B)}\log F(\rho_{AB},\1_{A}\otimes\sigma_{B})= \max_{\sigma_{B}\in \cS_{\circ}(B)}\log F(\rho_{ABA'},\proj{\Psi_{AA'}}\otimes\sigma_{B})\;.
        \end{equation*}
        All purifications of a given state are equivalent up to isometry on the purifying system, therefore, for any purification $\rho_{ABC}$, there is a channel $V_{C\to A'}$ such that
        \begin{equation*}
            (\mathbb{I}_{AB}\otimes V_{C\to A'})\rho_{ABC} = \rho_{ABA'}\;.
        \end{equation*}
        Since the fidelity is defined by maximizing over all purifications, the state $\rho_{ABA'}$ is the extension with the highest fidelity possible with $\proj{\Psi_{AA'}}\otimes\sigma_{B}$.
        On the other hand our computational max entropy is defined by a specific purification $\rho_{ABC}$ and a limited set of channels, implementable by circuits of size at most $s$, $(\mathbb{I}_{AB}\otimes \cE^{s}_{C\to A'})\rho_{ABC}$. From the universality of the gate set, at the limit $s\to\infty$, the channel $V_{C\to A'}$ can be approximated to arbitrary precision, and therefore
        \begin{equation*}
            \lim_{s\to\infty} \hmaxcompc(A|B)_{\rho} = \mathrm{H}_{\max}^{\varepsilon}(A|B)_{\rho} \;. \qedhere
        \end{equation*}
    \end{proof}

\InfoCompMinSep*
    
\begin{proof}
    Fix $s\in\mathbb{N}$. By~\cref{lem:counting_hard_invers_fixed_s}, choose $n$ and a permutation $f:\{0,1\}^n\to\{0,1\}^n$ such that every size-$s$ circuit inverts $y=f(x)$ with success at most $2^{-n/2}$. Define the fully classical state
    \begin{equation*}
        \rho_{AB}=2^{-n}\sum_{x\in\{0,1\}^n}\ketbra{x}{x}_A\otimes\ketbra{f(x)}{f(x)}_B \;.
    \end{equation*}
    Since $f$ is a permutation, $A=f^{-1}(B)$ deterministically and thus $\mathrm{H}_{\min}(A|B)_\rho=0$.
    Moreover, $\rho_{AB}$ is a fully classical, in particular it is classical quantum so by the equivalence for cq-states~\cref {lem:cq-version1} we see:
    \begin{equation*}
        \hmincompc[s](A|B)_\rho
        =\hunps[s](A|B)_\rho
        =-\log p_{\mathrm{guess}}^s(A|B)
        \ge -\log 2^{-n/2}=n/2 \;. \qedhere
    \end{equation*}
\end{proof}

    \section{Hard to Invert Permutations}\label{Appendix_hard_permutations_counting_compleatness_proofs}

    \HardPermutations*

    \begin{proof}
        Using the probabilistic method, we use the fact that there are many more permutations on $n$ bits than there are circuits using $s$ gates.
        For any fixed $s\in \mathbb{N}$, we can bound the number of circuits using $s$ gates using a counting argument. Note that $s$ gates, which each act on up to $m$ qubits, can act on at most $s \cdot m$ qubits. It thus suffices to assume there are at most $s \cdot m$ ancilla qubits. There are $|G|$ choices of gates and at most $ (n+ms)^m$ choices of qubits for each gate in $G$ to act on.
        The number, $N_{n,s}$, of distinct circuits on $n$ qubits using at most $s$ gates out of the universal gate set $G$ is therefore upper bounded by: 
        \begin{equation*}
            N_{n,s} \leq \left(|G| (n+ms)^m\right)^s \;.
        \end{equation*}
        For any fixed circuit $C$ and $y\in \bits^n$, for a random permutation $f$:
        \begin{equation*}
            \Pr_{f}\left[C(y) = f^{-1}(y)\right] = 2^{-n} \;.
        \end{equation*}
        The expectation value for a random $y\in\bits^n$ is thus
        \begin{equation*}
            \mathbb{E}_{f}\left[\Pr_{y}{C(y) = f^{-1}(y)}\right] = 2^{-n} \;.
        \end{equation*}
        Denoting $p_{C}(f) = \Pr_{y}\left[C(y) = f^{-1}(y)\right]$, it follows by Markov's inequality that
        \begin{equation*}
            \Pr_{f}\left[p_{C}(f)\ge 2^{-n/2}\right] \le 2^{-n/2} \;.
        \end{equation*}
        By a union bound argument over all $s$-sized circuits
        \begin{equation*}
            \Pr_{f}\left[\exists C \in \cC(s): p_{C}(f)> 2^{-n/2}\right] \le \left(|G| (n+ms)^m\right)^s 2^{-n/2} \;.
        \end{equation*}
       Since $\left(|G| (n+ms)^m\right)^s$ scales polynomially in $n$, there must exist an $n^\prime$ such that for all $n \geq n^\prime$
        \begin{align*}
            \left(|G| (n+ms)^m\right)^s < 2^{n/2} \; .
        \end{align*}
        Therefore, for such $n$, it must hold that
         \begin{equation*}
            \Pr_{f}\left[\exists C \in \cC(s): p_{C}(f)> 2^{-n/2}\right] <1 \;,
        \end{equation*}
        and there exists a permutation on $n$ bits that no $s$-sized circuit can invert with probability higher than $2^{-n/2}$.
    \end{proof}

\end{document}